\documentclass[11pt]{article}

\usepackage[a4paper,margin=1in]{geometry}
\usepackage{amsmath,amssymb,amsthm}
\usepackage{mathtools}
\usepackage[
  hidelinks,
  linktoc=all,
  bookmarksopen=true,
  bookmarksnumbered=true
]{hyperref}
\usepackage{enumitem}

\newtheorem{theorem}{Theorem}
\newtheorem{lemma}{Lemma}
\newtheorem{proposition}{Proposition}
\newtheorem{definition}{Definition}
\newtheorem{remark}{Remark}

\newcommand{\C}{\mathcal{C}}

\newcommand{\NC}{\mathrm{NC}}
\newcommand{\NCHV}{\mathrm{NCHV}}
\newcommand{\GNCHV}{\mathrm{GNCHV}}
\newcommand{\GLHV}{\mathrm{GLHV}}
\newcommand{\Cost}{\mathsf{Cost}}
\newcommand{\TV}{\mathsf{TV}}

\title{Genuine Global Kochen--Specker Contextuality and Classical Coordination Costs}
\author{Ming Yang}
\date{\today}

\begin{document}
\maketitle

\begin{abstract}
We formulate classical simulation as coordination across a spacetime
separator.  If \(B_\Sigma\) classical message bits and \(M_\Sigma\) persistent
classical-state bits carry all past-dependent information across a cut
\(\Sigma\), then every classical causal simulation of a correlation table
\(P\) satisfies
\[
 B_\Sigma+M_\Sigma
 \geq \log_2\operatorname{rank}_+F_\Sigma(P),
\]
with approximate and amortized forms given by approximate nonnegative rank
and Wyner common information.  For sequential processes we introduce a causal
positive-realization rank whose factors must share consistent response and
update maps.  Local computation depth \(D\) restricts those maps; it is not an
additional bit count.  Support covering numbers, including
\(B+M\geq\log_2\chi_G^D\), then appear as restricted chart-model projections
of this general operational hierarchy.

We apply the framework to genuine global Kochen--Specker contextuality:
local subsystems are noncontextual and the tested multipartite blocks are
generalized-Bell-local, while the complete empirical model has no global
noncontextual hidden-variable explanation.  For the polarization--path Hardy
table of Ref.~\cite{yang2026global}, at its symmetric point the
Alice--Bob event matrix has
\(\operatorname{rank}_+=4\) and positive-semidefinite rank \(2\).
Thus one exact classical separator needs two coordination bits whereas the
quantum separator is one qubit.  Exact repetition gives the classical lower
bound \(m\log_2 3\); asymptotically vanishing-error generation requires
\(\frac32m-o(m)\) classical bits, compared with \(m\) qubits.  The earlier
seven-test Hardy rate \(m\log_2(25/24)\) is retained as a possibilistic
cover-model comparison.

Finally, we construct a finite family whose empirical restriction is
genuinely global contextual and whose full adaptive behavior exhibits an
exact quadratic separation.  A finite stabilizer witness derived from the
Karanjai--Wallman--Bartlett counting argument is combined with a
Peres--Mermin branch.  An explicit mixture over the \(96\) assignments that
violate exactly one Peres--Mermin context gives a size-independent flag
visibility \(p=2/3\).  Conditioning preserves the classical boundary-state
set and yields
\[
 B_{\Sigma_n}+M_{\Sigma_n}
 \geq
 \log_2\!\frac{2^n\prod_{j=1}^n(2^j+1)}
 {5\,3^{n-2}}
 =
 \frac12n^2+
 \left(\frac32-\log_2 3\right)n+O(1),
\]
whereas the quantum boundary uses at most \(n+1\) qubits.  This is an exact
separation against arbitrary finite-state causal online simulators.  A
constant-error quadratic lower bound and a flag-free intrinsic construction
remain open.
\end{abstract}

\clearpage
\setcounter{tocdepth}{2}
\tableofcontents
\clearpage

\section{Introduction}

The starting point of this note is that classical simulation is a problem of
coordinating information across spacetime events.  Communication, memory, and
local computation are usually introduced as distinct resources, but this
distinction depends on how one cuts the same causal information flow.  A
message sent between laboratories is information crossing a spatial boundary;
a memory register is information carried from one event to a later event; a
local computation is information processed inside a bounded causal region.

A Turing machine gives a simple discrete illustration.  Consider a transition
\[
  \delta(q,s)=(q',s',\mathrm{move}).
\]
The usual reading is that the machine has performed one elementary
computational step.  But the same step may be read as a local communication
process: the tape cell sends the current symbol \(s\) to the finite controller;
the controller sends back a new symbol \(s'\); and the controller sends a
movement command to the head.  Thus a Turing step is a local exchange of
information between a controller and a storage medium.

Under this reading,
\[
  \text{computation} = \text{internal information transfer plus state update},
\]
while
\[
  \text{communication} = \text{information transfer across a chosen boundary}.
\]
Memory is the local snapshot retained after information transfer.  Time,
space, and communication complexity are therefore not unrelated quantities.
They are different projections of one question: how much coordination is
required to maintain and update a global information state over a causal
network of events?

Quantum foundations already provide two mature instances of this question.
Bell nonlocality shows that some multipartite correlations cannot be simulated
by zero-communication local hidden-variable models.  Classical simulation must
then pay communication.  Kochen--Specker contextuality shows that some
measurement statistics cannot be simulated by a single noncontextual hidden
state.  Classical simulation must then pay memory, hidden state, or state
update complexity.

This note asks whether genuine global Kochen--Specker contextuality gives the
next member of the same pattern.  Here ``genuine global'' is used in the
physical, stratified sense: local subsystems and the observed multipartite
blocks can each be classically explained, while the empirical model of the
whole physical system cannot be explained by one global noncontextual
hidden-variable model.  A formal statement is given in
Sec.~\ref{sec:genuine-global}.  We ask how this residual obstruction is
reflected in the cost of a general classical causal simulator.  This requires
separating two questions that are often conflated.  A support-cover argument
counts context-independent global charts and applies to simulators that select
among such charts.  A general classical simulator may instead use randomized,
context-dependent responses and adaptive state updates.  Its universal cost
is controlled by the classical information crossing a spacetime separator,
represented below by nonnegative factorizations and causal positive
realizations.  Chart covering is therefore a useful restricted layer, not the
definition of general classical simulation.

\paragraph{Informal thesis.}
\[
\begin{array}{rcl}
\text{Bell nonlocality} &\Longrightarrow& \text{communication cost},\\[1mm]
\text{KS contextuality} &\Longrightarrow& \text{memory/space cost},\\[1mm]
\text{genuine global KS contextuality} &\Longrightarrow&
\text{joint classical coordination cost}.
\end{array}
\]
The arrows state operational research programs, not one-to-one identities.
Bell and KS obstructions identify structures that can force classical
communication or memory, but the numerical lower bound depends on the
simulator class and the selected causal cut.  Genuine global KS contextuality
is particularly useful because its local and blockwise classicality prevents
the nonclassicality from being assigned to one subsystem or one Bell block;
the remaining question is how much information must coordinate the complete
spacetime process.  The common accounting unit is a bit of past-dependent
classical information crossing or persisting through the cut.  Computation
depth has a different unit and restricts which response and update maps are
available.

\paragraph{Organization.}
Section~2 reviews contextuality, communication, memory, and factorization
approaches.  Section~3 defines spacetime separators and proves the general
classical and quantum realization bounds, before relating them to Turing
machines.  Sections~4--7 introduce empirical tasks and the restricted
chart-cover hierarchy, including depth, rectangle, repetition, and Tseitin
calculations.  Section~8 recovers the familiar Bell and KS readings.
Sections~9--11 analyze genuine global contextuality, the full Hardy table, its
seven-test support projection, and the postselected KCBS comparison.
Section~12 gives the finite stabilizer flag construction and the exact
quadratic coordination separation.  The final section states the operational
hierarchy, limitations, and open robust problems.

\section{Related Work}

Bell nonlocality and communication complexity have a long connection, starting
from the use of entanglement to reduce communication in distributed tasks and
from results showing that stronger-than-quantum nonlocal boxes can trivialize
communication complexity.  See, for example, Buhrman, Cleve, Massar, and de
Wolf~\cite{buhrman2010nonlocality}, Brassard et
al.~\cite{brassard2006limit}, and later graph- and inequality-based
formulations.

The general bipartite generation problem is described by classical
correlation complexity and nonnegative rank, with the quantum counterpart
controlled by positive-semidefinite rank~\cite{jain2012correlation}.  Its
asymptotic common-randomness rate is Wyner common
information~\cite{wyner1975common}.  For temporal processes, the analogous
classical problem is positive realization by a hidden-state model, while
quantum memory leads to completely positive realizations by quantum
instruments~\cite{monras2016quantum}.  These are the immediate mathematical
antecedents of the separator hierarchy used here.

The memory cost of contextuality has been studied from a complementary angle:
instead of asking how much information must cross a spatial boundary, one asks
how much internal state is needed for a classical machine to reproduce
contextual correlations across measurement contexts.  See Kleinmann et
al.~\cite{kleinmann2011memory}, Fagundes and
Kleinmann~\cite{fagundes2017memory}, and Karanjai, Wallman, and
Bartlett~\cite{karanjai2018contextuality}.

The mathematical unity of nonlocality and contextuality is expressed by the
sheaf-theoretic framework of Abramsky and
Brandenburger~\cite{abramsky2011sheaf}.  Quantitative versions include the
contextual fraction~\cite{abramsky2017contextual}.  Trade-offs between Bell
nonlocality and local KS contextuality in generalized Bell scenarios were
studied in Ref.~\cite{porto2024tradeoff}.  Yang subsequently exhibited a
distinct residual case: models that are neither locally contextual nor
generalized-Bell-nonlocal, but still fail to admit a global noncontextual
explanation~\cite{yang2026global}.

There are two notions of ``global'' in play.  In the sheaf-theoretic sense,
globality refers to a section over all measurements in a measurement cover,
while locality refers to individual compatible contexts.  In genuinely global
KS contextuality, globality refers instead to a physical stratification of the
system: local subsystems and observed multipartite blocks may each admit
classical explanations, while the total empirical model over the full physical
system does not.  The sheaf-theoretic framework supplies the local-to-global
language; the present paper asks how much classical communication, memory, or
local computation is needed to coordinate around such obstructions.

To move beyond specially designed games, we also use the standard view of
constraint satisfaction problems and relational structures as local-to-global
consistency problems.  A useful related formulation is the database reading of
local consistency and global joins~\cite{abramsky2012databases}.  Measurement-based
quantum computation provides a complementary route from general quantum
algorithms to local measurement patterns with classical
feed-forward~\cite{raussendorf2003mbqc,briegel2009mbqc,raussendorf2013contextual}.
The computational role of contextuality and magic is studied, for example, in
Refs.~\cite{howard2014magic,karanjai2018contextuality}.  The depth axis is
also motivated by shallow-circuit separations between quantum and classical
models~\cite{bravyi2018shallow}.

\section{Coordination Across Spacetime Boundaries}
\label{sec:separator}

The most physical version of the preceding thesis is not tied to Turing
machines.  It starts from spacetime events.  Let \(\mathcal E\) be a finite
set of measurement or computation events embedded in a relativistic spacetime,
and let
\[
  e\preceq e'
\]
denote the causal order: information can be sent from \(e\) to \(e'\) only if
\(e'\) lies in the future light cone of \(e\).  A local experimental context is
a finite set
\[
  C\subseteq \mathcal E
\]
whose outcomes are jointly registered in one operational run.  A deterministic
global classical chart is a context-independent assignment
\[
  g:\mathcal E\to O,
\]
assigning an outcome or local state to every relevant spacetime event.

In this language the three resources are bookkeeping choices for the same
causal structure:
\[
\begin{array}{rcl}
B &:& \text{information crossing a chosen spacelike or inter-laboratory boundary},\\[1mm]
M &:& \text{information carried forward along timelike worldlines},\\[1mm]
D &:& \text{local processing depth inside causal diamonds}.
\end{array}
\]
Thus communication is coordination across spatial separation, memory is
coordination across time, and local computation is coordination internal to a
bounded spacetime region.  These are not separate primitives from the
relativistic viewpoint; they are different cuts through the same causal
information flow.

This spacetime language is not a new definition of Bell nonlocality or
Kochen--Specker contextuality.  Bell nonlocality remains the failure of a
local hidden-variable factorization, and KS contextuality remains the failure
of a context-independent valuation or global section.  The point is
operational: when the same local-to-global obstruction is embedded in a
relativistic experiment or simulator, different choices of spacetime cut make
the required classical coordination appear as communication, memory, local
processing, or a mixture of them.  A Bell experiment emphasizes spacelike
separation because the relevant events cannot exchange messages during one
run.  A KS experiment emphasizes compatibility contexts; when such contexts
are simulated sequentially, the same obstruction may appear as hidden-state or
memory update cost.  Genuine global KS contextuality is the case in which no
single standard cut accounts for the obstruction: every local region and every
tested multipartite block may admit a classical explanation, but the local
explanations cannot be glued into a single chart on the union of all relevant
spacetime events.

Formally, for a spacetime cover \(\{U_\alpha\}\) of the event set
\(\mathcal E\), local classical explanations are charts
\[
  g_\alpha:U_\alpha\to O.
\]
Local compatibility requires agreement on overlaps,
\[
  g_\alpha|_{U_\alpha\cap U_\beta}
  =
  g_\beta|_{U_\alpha\cap U_\beta},
\]
or the corresponding probabilistic marginal consistency condition.  A global
noncontextual explanation is a single chart \(g:\mathcal E\to O\) whose
restriction to each \(U_\alpha\) gives the local explanation.  Global
contextuality is precisely the failure of this gluing problem.  The
coordination cost studied below measures how many classical bits must be sent,
stored, or generated locally to repair this failure by switching among several
global charts.

\subsection{Operational separators and resource units}

Fix a finite protocol directed acyclic graph and a separator \(\Sigma\) such
that every causal path from a designated past region to a designated future
region crosses \(\Sigma\).  The future receives a query \(x\) only after the
past-dependent boundary configuration has been formed.  Unless stated
otherwise, \(B_\Sigma\) and \(M_\Sigma\) are worst-case fixed-length
capacities, all shared randomness correlated with the past is counted, and
future test choices are exogenous and independent of the past.  We use the
following convention throughout:
\[
\begin{array}{rcl}
B_\Sigma&:&\text{classical message or transcript bits crossing }\Sigma,\\
M_\Sigma&:&\text{persistent classical-state bits retained across }\Sigma,\\
Q_\Sigma&:&\log_2\text{ of the quantum boundary factor dimension},\\
D&:&\text{depth of response and update maps in a specified local model}.
\end{array}
\]
For bipartite correlation generation, \(Q_\Sigma\) is the logarithm of the
Schmidt or positive-semidefinite factor dimension, equivalently half the
number of qubits in a minimal pure shared seed under the standard convention
of Ref.~\cite{jain2012correlation}.  For a temporal cut it is simply the
number of qubits retained across the cut.

All past-dependent information available in the future is counted.  In
particular, a saved preparation label, a rereadable public transcript, retained
shared randomness, and a past-dependent instruction contribute to
\(B_\Sigma+M_\Sigma\).  Fixed program code and a clock label are free.  Fresh
private randomness generated strictly to the future of \(\Sigma\) is also
free, but it is independent of the past and cannot enlarge the set of
past-dependent boundary configurations.  Computation depth is not added to a
bit count; it restricts which boundary response and update maps are available.

A finite-state classical causal realization has a boundary state
\(\lambda\in\Lambda\), response kernels
\[
  \xi_x(a\mid\lambda),
\]
and stochastic update kernels
\[
  \Gamma_x(\lambda',a\mid\lambda).
\]
The kernels may depend on the next operation \(x\), but not for free on the
complete realized history.  If that history can be inspected later, its
future-accessible representation is part of the counted boundary
configuration.  Consequently,
\[
  |\Lambda|\leq 2^{B_\Sigma+M_\Sigma}.
\]

\subsection{The general classical separator bound}

Choose finite alphabets \(\mathcal U_\Sigma\) and \(\mathcal V_\Sigma\) of
past and future events, including any test inputs needed to specify the
experiment, and fix their test distribution.  The separator matrix is the
normalized nonnegative table
\[
  F_\Sigma(P)_{u,v}=\Pr_P(U=u,V=v).
\]
Future exogenous test choices are taken independent of the past, or the
factorization is read conditionally on those choices.
Its nonnegative rank is the smallest \(K\) for which
\[
  F_\Sigma(P)_{u,v}
  =
  \sum_{\lambda=1}^{K}A_{u,\lambda}B_{\lambda,v},
  \qquad A_{u,\lambda},B_{\lambda,v}\geq0.
\]

\begin{theorem}[Classical spacetime-separator bound]
\label{thm:general-separator}
Let a classical causal protocol realize \(P\) on a finite protocol DAG.
Suppose that \(\lambda\) is the complete future-accessible configuration on
\(\Sigma\): after conditioning on \(\lambda\) and on any exogenous future test
choices, no additional past-dependent variable is available to the future.
Then
\[
  B_\Sigma+M_\Sigma
  \geq
  \log_2\operatorname{rank}_+F_\Sigma(P).
\]
This statement permits randomized, context-dependent, and adaptive classical
responses.
\end{theorem}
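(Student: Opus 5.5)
The plan is to exhibit an explicit nonnegative factorization of \(F_\Sigma(P)\) with inner dimension \(|\Lambda|\), where \(\Lambda\) is the set of boundary configurations the protocol can produce on \(\Sigma\). The claim then follows from the counting bound \(|\Lambda|\leq 2^{B_\Sigma+M_\Sigma}\) established just before the theorem. Concretely, I would set
\[
  A_{u,\lambda}=\Pr(U=u,\lambda),\qquad B_{\lambda,v}=\Pr(V=v\mid\lambda,U=u).
\]
I would then show that the second factor does not depend on \(u\). Once that holds, summing over \(\lambda\) gives \(F_\Sigma(P)_{u,v}=\sum_\lambda A_{u,\lambda}B_{\lambda,v}\) with both factors nonnegative, so \(\operatorname{rank}_+F_\Sigma(P)\leq|\Lambda|\). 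Taking logarithms yields the bound.

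The steps, in order, are as follows.
\begin{enumerate}
\item Write the joint law of a run on the finite DAG as a product of the protocol's kernels along a topological order consistent with \(\Sigma\). Past events, including past inputs drawn from the fixed test distribution, come first. The configuration \(\lambda\) comes next, produced by the past kernels and any shared randomness. The future events come last.
\item Split the future variables \(V\) into the exogenous future test choices \(y\) and the future outputs. By hypothesis \(y\) is independent of \((U,\lambda)\), so its distribution factors out.
\item The future outputs are generated by the response and update kernels \(\xi_x,\Gamma_x\), applied to \(\lambda\), to \(y\), and to fresh future private randomness that is independent of the past. Since \(\lambda\) is complete, these kernels receive no other past-dependent argument. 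This gives the Markov chain \(U - \lambda - V\), i.e.\ \(\Pr(V=v\mid\lambda,U=u)=\Pr(V=v\mid\lambda)\), which is exactly the required independence of the second factor from \(u\).
\end{enumerate}

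Randomized, context-dependent, and adaptive behaviour is harmless throughout. Randomization and context dependence are absorbed into the kernels, which are allowed to depend on \(y\) and \(x\). Adaptivity lives entirely on the future side of the cut, so it only changes the future kernel \(B_{\lambda,v}\) and never the inner dimension. Configurations of zero probability can be dropped, which only lowers the count.

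The main obstacle is making the Markov chain \(U-\lambda-V\) rigorous rather than definitional. Two points need care:
\begin{itemize}
\item Shared randomness correlated with both sides must be treated as part of \(\lambda\), as the accounting convention requires. Otherwise a common random variable \(r\) would create correlation between \(U\) and \(V\) that bypasses \(\lambda\).
\item Every causal path from the past region to the future region must cross \(\Sigma\).
\end{itemize}
I would handle both with a d-separation argument on the DAG. Augment it with nodes for the randomness sources, note that every path from a past node to a future node passes through a node recorded in \(\lambda\), and conclude conditional independence from the factorized joint law. If the future choices are not independent of the past, I would instead run the argument conditionally on each \(y\), as the convention in the statement permits.
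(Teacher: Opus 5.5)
Your proposal is correct and follows the paper's proof: completeness of \(\lambda\) and independence of the future test choices give the Markov relation \(U-\lambda-V\), which yields a nonnegative factorization of \(F_\Sigma(P)\) with at most \(|\Lambda|\) terms, and \(|\Lambda|\leq 2^{B_\Sigma+M_\Sigma}\) finishes the argument. Your explicit factors \(\Pr(U=u,\lambda)\) and \(\Pr(V=v\mid\lambda)\), and your d-separation treatment of shared randomness, only spell out what the paper asserts in one line.
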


\begin{proof}
Completeness of the separator and independence of future test choices give
the Markov relation
\[
  U-\lambda-V.
\]
Conditioning on \(\lambda\) therefore factors the past and future event
weights, producing a nonnegative factorization of \(F_\Sigma(P)\) with at
most \(|\Lambda|\) terms.  Hence
\[
  \operatorname{rank}_+F_\Sigma(P)
  \leq|\Lambda|
  \leq2^{B_\Sigma+M_\Sigma}.
\]
Taking logarithms proves the claim.
\end{proof}

For total-variation error \(\epsilon\), define
\[
 \operatorname{rank}_+^\epsilon(F)
 =
 \min\left\{
 \operatorname{rank}_+(\widetilde F):
 \widetilde F\text{ is a normalized nonnegative table and }
 \TV(\widetilde F,F)\leq\epsilon
 \right\}.
\]
The same proof gives
\[
 B_\Sigma+M_\Sigma
 \geq\log_2\operatorname{rank}_+^\epsilon(F_\Sigma(P)).
\]
These relations are the one-shot correlation-complexity reading of
nonnegative rank~\cite{jain2012correlation}.

\subsection{Quantum factors and amortized information}

The positive-semidefinite rank of \(F\) is the least \(r\) such that
\[
 F_{u,v}=\operatorname{Tr}(A_uB_v)
\]
for \(r\times r\) positive-semidefinite matrices \(A_u,B_v\).  A quantum
boundary factor of dimension \(2^{Q_\Sigma}\) supplies such a factorization,
and exact bipartite correlation complexity satisfies
\[
 Q_\Sigma
 =
 \left\lceil
 \log_2\operatorname{rank}_{\rm psd}F_\Sigma(P)
 \right\rceil
\]
under the standard shared-seed convention~\cite{jain2012correlation}.
Thus a separation between nonnegative and positive-semidefinite rank is a
separation between classical and quantum coordination across the same cut.

There is also an entropy-rate form.  For the normalized pair \((U,V)\), let
\[
 C_{\rm W}(U;V)
 =
 \min_{\substack{U-Z-V}}
 I(U,V;Z)
\]
be Wyner's common information~\cite{wyner1975common}.  Any complete counted
classical boundary state \(Z\) with \(U-Z-V\) satisfies
\[
 B_\Sigma+M_\Sigma
 \geq H(Z)
 \geq I(U,V;Z)
 \geq C_{\rm W}(U;V).
\]
For fixed finite alphabets, \(m\) independent copies, unlimited fresh private
randomness to the future of the separator, and asymptotically vanishing
total-variation error, Wyner's distributed-generation theorem gives
\[
  B_m+M_m\geq m C_{\rm W}(U;V)-o(m).
\]
The common information is a lower-bound functional of the same classical
boundary resource, not a fourth resource axis.

\subsection{Sequential processes and causal realization rank}

For an online process, let \(R\) denote a tested past record and \(F\) a
permitted future adaptive event.  Its process Hankel table is
\[
  H(R,F)=\Pr(F\mid R).
\]
An arbitrary nonnegative factorization of \(H\) may assign unrelated
black-box responses to different futures.  To state the additional causal
condition explicitly, a \(K\)-state positive realization consists of
probability row vectors \(\alpha_R\in\mathbb R_+^K\) and substochastic
transition-emission matrices
\[
 T_{x,a}(\lambda,\lambda')
 =
 \Pr(a,\lambda'\mid x,\lambda),
 \qquad
 \sum_a T_{x,a}\mathbf 1=\mathbf 1.
\]
For every positive-probability extension \(R(x,a)\), the same matrices must
satisfy the Bayesian shift relation
\[
 \alpha_{R(x,a)}
 =
 \frac{\alpha_R T_{x,a}}
      {\alpha_R T_{x,a}\mathbf 1}.
\]
Each future event has a nonnegative continuation vector \(h_F\) such that
\[
 H(R,F)=\alpha_Rh_F,
 \qquad
 h_{(x,a)F}=T_{x,a}h_F.
\]
These common matrices, rather than a separate factor for every column, encode
one response-and-update mechanism for the whole process.  Define
\[
  \operatorname{rank}_+^{\rm causal}(H)
\]
as the minimum \(K\) in such a realization, and define
\(\operatorname{rank}_+^{{\rm causal},D}(H)\) by restricting all response and
update maps \((x,\lambda)\mapsto(a,\lambda')\) to the selected depth-\(D\)
circuit class.  Then
\[
 \operatorname{rank}_+(H)
 \leq
 \operatorname{rank}_+^{\rm causal}(H)
 \leq
 \operatorname{rank}_+^{{\rm causal},D}(H),
\]
and every depth-\(D\) classical online simulator obeys
\begin{equation}
\label{eq:causal-rank-bound}
 B_\Sigma+M_\Sigma
 \geq
 \log_2\operatorname{rank}_+^{{\rm causal},D}(H).
\end{equation}
Indeed, forgetting the shared shift equations gives an ordinary nonnegative
factorization, restricting the update maps can only increase the minimum
state count, and the boundary states of any actual simulator provide a
feasible causal realization with at most \(2^{B_\Sigma+M_\Sigma}\) states.
The right-hand side is nonincreasing with \(D\).  This is the process-level
\((B,M,D)\) tradeoff and the natural setting for automata, hidden Markov
models, Turing-machine updates, and the stabilizer-memory theorem used below.
The corresponding quantum object is a completely positive realization by a
quantum instrument~\cite{monras2016quantum}.

\subsection{Turing machines as discrete spacetime coordination}

The Turing-machine discussion is a discrete model of the same spacetime
principle.  We now make that interpretation slightly more formal.  Let
\[
  \mathcal M=(Q,\Gamma,\delta)
\]
be a deterministic single-tape Turing machine, with finite control states
\(Q\), tape alphabet \(\Gamma\), and transition map
\[
  \delta(q,s)=(q',s',d),
  \qquad d\in\{-1,0,+1\}.
\]
At time \(t\), write the configuration as
\[
  c_t=(q_t,h_t,\tau_t),
\]
where \(q_t\in Q\) is the control state, \(h_t\in\mathbb Z\) is the head
position, and \(\tau_t:\mathbb Z\to\Gamma\) is the tape contents.  Let
\[
  s_t=\tau_t(h_t)
\]
be the symbol at the active tape cell.  The global update is local:
\[
  (q_t,h_t,\tau_t(h_t))
  \longmapsto
  (q_{t+1},h_{t+1},\tau_{t+1}(h_t)),
\]
while every inactive cell \(i\neq h_t\) keeps \(\tau_{t+1}(i)=\tau_t(i)\).

This local update can be represented as a two-component exchange across the
active boundary between the finite controller and the active tape cell:
\[
\begin{array}{rcl}
\text{cell}\to\text{controller} &:& s_t,\\
\text{controller}\to\text{cell} &:& s'_t,\\
\text{controller}\to\text{head token} &:& d.
\end{array}
\]
Thus one Turing step has an internal boundary transcript of length at most
\[
  b_{\mathcal M}
  =
  2\lceil\log_2|\Gamma|\rceil+\lceil\log_2 3\rceil
\]
bits under this explicit read-write-move convention.  Other conventions shift
this constant, but not the point: a sequential computation is a repeated local
communication process plus state update.

\begin{proposition}[Boundary-communication simulation of a Turing run]
If \(\mathcal M\) runs for \(T\) steps and visits at most \(S\) tape cells, then
its run can be represented as a local communication process with internal
boundary transcript length at most
\[
  B_{\rm int}\leq T b_{\mathcal M},
\]
with stored configuration size at most
\[
  M_{\rm conf}
  \leq
  \lceil\log_2|Q|\rceil+\lceil\log_2 S\rceil
  +S\lceil\log_2|\Gamma|\rceil
\]
bits, up to the usual choice of a finite window containing the visited cells.
If the transition rule \(\delta\) is implemented by local circuits of depth
\(d_\delta\), the local computation depth of the run is \(D=T d_\delta\).
\end{proposition}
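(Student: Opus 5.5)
The plan is a direct construction followed by three counting arguments, one for each quantity in the statement. The run is made into a local communication process by treating each step \(t\) as a round across the active boundary, between the finite controller and the cell at \(h_t\). Since the run visits at most \(S\) cells, I first fix a window \(W\subset\mathbb Z\) of at most \(S\) consecutive positions containing every \(h_t\). I relabel \(W\) as \(\{0,\dots,S-1\}\). Outside \(W\) the tape is constant (blank) for the whole run and is treated as fixed program data, which the resource convention of Sec.~\ref{sec:separator} counts as free. This is the ``usual choice of a finite window'' mentioned in the statement.

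\emph{Transcript bound.} Each step transmits exactly the three messages in the read--write--move table. The cell sends \(s_t=\tau_t(h_t)\), encoded in \(\lceil\log_2|\Gamma|\rceil\) bits. The controller sends back \(s'_t\), again in \(\lceil\log_2|\Gamma|\rceil\) bits, and a move \(d_t\in\{-1,0,+1\}\) in \(\lceil\log_2 3\rceil\) bits. I will check that these messages are sufficient: the controller computes \((q_{t+1},s'_t,d_t)=\delta(q_t,s_t)\) from its own state and the received symbol. The cell overwrites itself with \(s'_t\). The head token updates \(h_{t+1}=h_t+d_t\). Every inactive cell stays fixed, exactly as in the local update rule displayed above. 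The transcripts of the \(T\) rounds are concatenated, which gives \(B_{\rm int}\leq T b_{\mathcal M}\). A short induction on \(t\) then shows that the process reproduces \(c_t\) for every \(t\leq T\).

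\emph{Configuration bound.} At any time the persistent state consists of three parts: \(q_t\in Q\), the head position \(h_t\in W\), and the window contents \(\tau_t|_W\in\Gamma^{W}\). These are encoded with fixed-length codes of \(\lceil\log_2|Q|\rceil\) bits, \(\lceil\log_2 S\rceil\) bits, and \(S\lceil\log_2|\Gamma|\rceil\) bits respectively. Their sum is the stated bound on \(M_{\rm conf}\). Since \(h_t\in W\) at all times, no overflow case arises. This is where the window choice is used.

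\emph{Depth bound.} The circuit for \(\delta\) is composed \(T\) times in sequence. Step \(t+1\) depends on the output of step \(t\), so depths add, giving \(D=Td_\delta\). Two points need care here. First, I will treat the routing of \(s_t\) and \(s'_t\) to the active cell as part of the boundary exchange, not as extra depth. Second, the move \(h_{t+1}=h_t+d_t\) is likewise counted as token passing rather than as computation. I expect this accounting convention to be the only delicate step. If one instead charges for addressing a cell by index, \(D\) picks up an additive \(O(T\log S)\) term. I would therefore state explicitly that the equality holds under the local-token convention already built into the definition of \(b_{\mathcal M}\).
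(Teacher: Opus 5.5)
Your proposal is correct and follows the paper's proof essentially step for step: the per-step read--write--move exchange summed over \(T\) rounds gives \(B_{\rm int}\leq Tb_{\mathcal M}\), the control-state/head-position/window-contents encoding gives the bound on \(M_{\rm conf}\), and sequential composition of the \(\delta\)-circuit gives \(D=Td_\delta\). Your additions only make explicit what the paper leaves implicit: the induction reproducing \(c_t\), fixing the window in advance, and charging cell routing and head movement to the boundary exchange rather than to depth.
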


\begin{proof}
At each step only the finite control, the active cell, and the head token are
updated.  The active cell sends one alphabet symbol to the controller; the
controller returns one alphabet symbol and one movement command.  Summing this
constant-size exchange over \(T\) steps gives \(B_{\rm int}\leq T b_{\mathcal M}\).
The final configuration is determined by the control state, the head position
inside the visited window, and the \(S\) symbols in that window, giving the
stated storage bound.  The depth statement follows by composing the local
implementation of \(\delta\) for \(T\) rounds.
\end{proof}

The relevance for this paper is not the constant \(b_{\mathcal M}\).  The
point is that the same process can be cut in different ways.  If the boundary
is drawn between two laboratories, the transcript contributes to communication
cost.  If the boundary is drawn between a controller and its retained state,
the same information appears as memory.  Recomputing a transcript can trade
stored workspace for local depth only when the input label or earlier record
needed for that recomputation still crosses or persists through the cut.
Computation cannot recreate past-dependent information that is absent from
the future side of the separator.

This gives a technical version of the informal \((B,M,D)\) picture.  For a
compiled task \(T\), let \(\mathcal G_D\) be the set of global assignments that
can be generated by the chosen local machine model within depth \(D\).  Define
the depth-restricted covering number
\[
  \chi_G^D(T)
  =
  \min\Bigl\{
    K:\exists g_1,\ldots,g_K\in\mathcal G_D
    \text{ covering all contexts of }T
  \Bigr\}.
\]
Then any cover-admissible classical simulator with communication \(B\), memory
\(M\), and local depth \(D\) must satisfy
\[
  2^{B+M}\geq \chi_G^D(T),
\]
or equivalently
\[
  B+M\geq \log_2\chi_G^D(T).
\]
If one wants \(D\) explicitly as a lower-bound variable, define the inverse
trade-off function
\[
  D_T(K)=\min\{D:\chi_G^D(T)\leq K\}.
\]
Then every such simulator also satisfies
\[
  D\geq D_T(2^{B+M}).
\]
In this form, the Turing-machine reading is no longer only motivational: it
specifies how local computation depth trades against communication and memory
inside the same covering-number obstruction.

\subsection*{Bridge to chart-restricted global contextuality}

The following bridge concerns the restricted chart model; the general
classical statement is Theorem~\ref{thm:general-separator}.  In a measurement
scenario, let \(V\) be the set of all measurement labels and let \(\C\) be the
family of compatible contexts.  A deterministic global noncontextual chart is
a function
\[
  g\in \prod_{v\in V} A_v,
\]
assigning one outcome to every measurement, independently of the context in
which that measurement later appears.  A \(\GNCHV\) model is a random choice of
such charts before the measurement context is used.  Thus, after all boundary
transcripts, memory states, and bounded-depth local computations have been
fixed, a classical noncontextual simulator contributes exactly one chart
\[
  g_{t,s}\in\mathcal G_D.
\]
The pair \((t,s)\) is the classical coordination data selecting this chart.

Local noncontextuality and \(\GLHV\) say that the small pieces are classically
explainable: each local subsystem, and even each observed multipartite block,
admits an appropriate classical description.  Global contextuality says that
these descriptions cannot be chosen as restrictions of one common global
chart.  In other words, the obstruction is not local computation inside one
block, and not Bell nonlocal communication between the tested parties; it is a
failure of global gluing.

For a support or relation version of the empirical model, define the support
task \(T_p\) by
\[
  W_C=\operatorname{supp}(p_C)
  \qquad (C\in\C).
\]
If one chart \(g\) satisfies \(g|_C\in W_C\) for every context, then the
support data admit a zero-cost global explanation.  If no such chart exists,
then a classical simulator can only succeed by selecting different charts for
different contexts.  The number of charts available to a simulator with
communication \(B\), memory \(M\), and local depth \(D\) is at most
\[
  2^{B+M}
\]
inside \(\mathcal G_D\).  Hence support-level global contextuality gives the
resource inequality
\[
  2^{B+M}\geq \chi_G^D(T_p).
\]
For probabilistic, rather than support-level, global contextuality, the same
bridge is replaced by an approximate version: distance to the \(\GNCHV\) polytope,
contextual fraction, or fractional covering measures how much
context-dependent chart selection is still required.

\section{Empirical Models, Tasks, and Classical Charts}

We use a finite relation-task presentation.  Let \(\C\) be a finite set of
contexts.  Each context \(C\in\C\) has a set of possible outcomes \(O_C\) and a
winning set
\[
  W_C \subseteq O_C.
\]
A strategy for the task produces, for each context \(C\), a probability
distribution \(p_C\) on \(O_C\).  It wins context \(C\) with probability
\[
  p_C(W_C)=\sum_{a\in W_C} p_C(a).
\]

When the contexts arise from compatible measurements, the family
\[
  p=\{p_C\}_{C\in\C}
\]
is an empirical model.  A perfect quantum strategy satisfies
\[
  p_C(W_C)=1
  \qquad\text{for all } C\in\C.
\]

\begin{definition}[Global assignment]
A global assignment is a function \(g\) assigning an outcome to every basic
measurement appearing in the scenario.  Its restriction to a context \(C\) is
denoted \(g|_C\in O_C\).  We say that \(g\) covers \(C\) if
\[
  g|_C\in W_C.
\]
\end{definition}

If one global assignment covers all contexts, the relation task has a
deterministic global noncontextual solution.  Contextuality appears when no
single global assignment covers all contexts.

For probabilistic empirical models, a global noncontextual hidden-variable
model, abbreviated \(\GNCHV\), is a probability distribution \(\lambda\) over
global assignments such that
\[
  p_C(a)
  =
  \sum_{g:\,g|_C=a}\lambda(g)
  \qquad
  (C\in\C,\;a\in O_C).
\]
Thus a \(\GNCHV\) model is a context-independent random choice of one global
chart.  A local subsystem is noncontextual when its restricted empirical model
admits the analogous representation on that subsystem.

In a multipartite scenario, let \(C_i\) be the compatible local context tested
at party \(i\), let \(\mathbf a_{C_i}\) be its joint block outcome, and write
\(C=(C_1,\ldots,C_r)\).  We use \(\GLHV\) for a generalized Bell-local
hidden-variable model of the form
\[
  p_C(\mathbf a_{C_1},\ldots,\mathbf a_{C_r})
  =
  \sum_\lambda q_\lambda
  \prod_{i=1}^r
  p_i(\mathbf a_{C_i}\mid C_i,\lambda),
\]
where the same distribution \(q_\lambda\) is used for all tested multipartite
contexts.  Crucially, the joint block response
\(p_i(\mathbf a_{C_i}\mid C_i,\lambda)\) may depend on \(C_i\) and need not
admit a context-independent value assignment to the individual measurements
inside the block.  Requiring such a refinement would be an additional
conditional local-\(\NCHV\) assumption, not part of \(\GLHV\) itself.
Thus local \(\NCHV\), blockwise \(\GLHV\), and global \(\GNCHV\) are three
different conditions: local \(\NCHV\) concerns unconditional subsystem
marginals, \(\GLHV\) concerns factorization across parties while allowing
arbitrary within-party compatible-block responses, and \(\GNCHV\) requires
one context-independent assignment over all basic measurements.  In the
worked examples below, the local-\(\NCHV\) and \(\GLHV\) certificates are
imported from Ref.~\cite{yang2026global}.

\section{From Games to General Tasks}

The previous section is phrased in the language of relation tasks, which makes
it resemble Bell games or KS parity games.  The purpose of this section is to
explain how the same language applies to a broad class of ordinary tasks.  The
key move is to view a task as a local constraint system.

\begin{definition}[Constraint compilation of a task]
Let \(T\) be a finite task whose candidate solutions are assignments
\[
  y\in \prod_{v\in V} A_v
\]
to variables \(V\).  A constraint compilation of \(T\) is a hypergraph
\[
  H_T=(V,E)
\]
together with, for each hyperedge \(e\in E\), an allowed local relation
\[
  R_e \subseteq \prod_{v\in e} A_v.
\]
A global assignment \(y\) solves the compiled task when
\[
  y|_e \in R_e
  \qquad\text{for all } e\in E.
\]
\end{definition}

This is the usual constraint-satisfaction form.  It includes Boolean CSPs,
graph coloring, graph homomorphism, database consistency constraints, local
checks in distributed protocols, and many verification problems.  In this
representation, each hyperedge \(e\) is a context:
\[
  C_e=e,\qquad W_{C_e}=R_e.
\]
Thus the task \(T\) induces a relation task
\[
  \mathsf{Rel}(T)=\{(C_e,W_{C_e})\}_{e\in E}.
\]

\begin{definition}[Task-induced empirical model]
A physical or algorithmic procedure for \(T\) induces an empirical model
\[
  p_T=\{p_e\}_{e\in E},
\]
where \(p_e\) is the distribution of local outputs on the variables in
hyperedge \(e\).  The procedure locally satisfies the compiled task with
probability one when
\[
  p_e(R_e)=1
  \qquad\text{for every } e\in E.
\]
\end{definition}

This turns the question ``does the task have a globally consistent classical
solution?'' into the question ``does the locally observed empirical model admit
a global classical explanation?''  The distinction is important:

\begin{itemize}[leftmargin=2em]
\item If there is one global assignment \(y\) satisfying all \(R_e\), then the
compiled task is classically satisfiable.
\item If local empirical distributions satisfy every \(R_e\) but cannot be
glued into a single global hidden-variable model, then the model exhibits a
contextual obstruction.
\item If the obstruction only appears after combining otherwise classical
local and blockwise explanations, it is a candidate for global contextuality.
\end{itemize}

The framework is therefore not restricted to hand-built games.  It applies to
any task that admits a useful local-constraint compilation.  The price of this
generality is that it does not apply uniformly to all function-computation
tasks.  A task of the form
\[
  f(x_1,\ldots,x_k)
\]
where each party must output the same global value cannot be solved without
communicating the relevant input information; entanglement or contextuality
does not violate no-signalling.  The natural quantum advantage tasks in this
framework are instead relation, consistency, and sampling tasks, where local
outputs are not required to reveal all remote inputs but must jointly satisfy a
global relation.

\subsection{General coordination problem}

For a compiled task \(T\), define the classical coordination region by applying
the cost definition to the induced empirical model:
\[
  \Cost_C^\epsilon(T)
  :=
  \Cost_C^\epsilon(p_T).
\]
The central question becomes:
\[
  \text{How far from the origin is } \Cost_C^\epsilon(T)?
\]
Equivalently, how much communication, memory, or local computation is required
for a classical system to coordinate local views into a globally consistent
behavior?

The covering numbers below now have a direct task-theoretic meaning.  The
quantity \(\chi_G(T)\) counts how many global classical explanations are needed
to cover all local constraints of the compiled task.  If one explanation does
not suffice, the classical protocol must coordinate among several explanations
using communication, memory, computation, or a mixture of these resources.

\begin{proposition}[Compiled-task covering bound]
Let \(T\) be a compiled task with contexts \(e\in E\), relations \(R_e\), and
context distribution \(\mu\).  Suppose a classical protocol for \(T\) has at
most \(K\) possible coordination states.  Assume that, after fixing its
randomness and one coordination state \(s\), the protocol induces a
deterministic global assignment \(g_s\).  If the protocol succeeds with
probability at least \(\eta\), then the assignments
\[
  \{g_s: s=1,\ldots,K\}
\]
cover an \(\eta\)-fraction of the compiled constraints:
\[
  \eta
  \leq
  \Pr_{e\sim\mu}\big[\exists s\; g_s|_e\in R_e\big].
\]
Consequently, if every single global assignment satisfies at most a
\(\beta_G(T)\)-fraction of the constraints, then
\[
  K \geq \frac{\eta}{\beta_G(T)}.
\]
For \(m\) independent copies, the same argument gives
\[
  K \geq \frac{\eta_m}{\beta_G(T)^m},
\]
where \(\eta_m\) is the target success probability on the repeated task.
\end{proposition}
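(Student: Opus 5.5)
The plan is to write the success probability as an average over coordination states and then apply a union bound over the \(K\) charts. After the protocol's randomness is fixed, let \(S\) be the random coordination state. Its law may depend on the realized context \(e\), because coordination data are chosen adaptively. Conditioned on \(S=s\), the protocol outputs the restriction \(g_s|_e\). The protocol therefore wins on context \(e\) only if \(g_S|_e\in R_e\), which in particular requires \(\exists s\; g_s|_e\in R_e\). Averaging over \(e\sim\mu\), and then over the protocol's randomness, gives
\[
 \eta\le \Pr[\text{win}]\le \Pr_{e\sim\mu}\big[\exists s\; g_s|_e\in R_e\big].
\]
This is the first claim. If the randomness is genuinely averaged rather than fixed, the bound holds for each fixed value of the randomness, and hence for at least one value that attains success probability \(\ge\eta\).

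For the second claim, I would apply the union bound over the \(K\) charts:
\[
\Pr_e\big[\exists s\; g_s|_e\in R_e\big]\le \sum_{s=1}^K \Pr_e\big[g_s|_e\in R_e\big].
\]
Here I read "satisfies at most a \(\beta_G(T)\)-fraction of the constraints" as \(\mu\)-weighted. Then each summand is at most \(\beta_G(T)\), so \(\eta\le K\beta_G(T)\) and \(K\ge \eta/\beta_G(T)\).

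For \(m\) copies the same two steps apply. The repeated compiled task has contexts \(\mathbf e=(e_1,\dots,e_m)\) drawn from \(\mu^{\otimes m}\), relations \(R_{e_1}\times\dots\times R_{e_m}\), and variables given by \(m\) disjoint copies of \(V\). A global assignment \(\mathbf g\) of the repeated task is an \(m\)-tuple \((g^{(1)},\dots,g^{(m)})\) of assignments to the copies. Because the copies use disjoint variables and the coordinates of \(\mathbf e\) are independent, its success probability factorizes:
\[
\Pr_{\mathbf e}\big[\mathbf g|_{\mathbf e}\in R_{\mathbf e}\big]=\prod_i \Pr_{e_i}\big[g^{(i)}|_{e_i}\in R_{e_i}\big]\le \beta_G(T)^m.
\]
Union bounding over the \(K\) coordination states then gives \(K\ge \eta_m/\beta_G(T)^m\).

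The main obstacle is this product step. It needs the repeated task to be the plain product task: there must be no shared variables across copies, and the product relation must be the winning condition. Under that reading, the single-assignment bound \(\beta_G(T)^m\) is exact, unlike parallel-repetition statements for general protocols. Here, by hypothesis, each coordination state induces one deterministic global assignment of the whole repeated scenario, so no repetition theorem is needed. I would state this product structure explicitly as the meaning of \emph{independent copies}.
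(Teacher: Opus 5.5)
Your proof is correct and follows the paper's argument. You fix a value of the protocol randomness achieving success at least \(\eta\), note that a context can be won only if some \(g_s\) covers it, apply the union bound over the \(K\) charts, and use the product structure of independent copies to bound a single assignment's success by \(\beta_G(T)^m\); your explicit factorization over disjoint copies and your remark that the coordination state may depend on the context simply spell out steps the paper leaves implicit.
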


\begin{proof}
Fix the protocol randomness.  If the average success over the randomness is at
least \(\eta\), then some fixed choice of randomness also has success at least
\(\eta\).  For that fixed choice, each coordination state \(s\) determines one
global assignment \(g_s\).  A context \(e\) can be won only if at least one of
these assignments satisfies \(R_e\); otherwise no possible coordination state
can output a locally valid answer on \(e\).  This proves the covering
inequality.  Since a single assignment covers at most \(\beta_G(T)\) of the
contexts, \(K\) assignments cover at most \(K\beta_G(T)\) by the union bound,
so \(K\beta_G(T)\geq\eta\).  On \(m\) independent copies, any deterministic
assignment restricts to one assignment on each copy.  Under the product
context distribution its success is therefore at most \(\beta_G(T)^m\),
giving \(K\beta_G(T)^m\geq\eta_m\).
\end{proof}

\subsection{Advantage scale: states versus bits}

This proposition separates two notions of quantum advantage.  Let
\[
  K_C^\eta(T)
\]
be the minimum number of classical coordination states needed to reach success
probability at least \(\eta\), and let
\[
  \kappa_C^\eta(T)=\log_2 K_C^\eta(T)
\]
be the same cost measured in bits.  If a quantum procedure wins one copy with
probability \(\omega_Q(T)\), while one global classical assignment wins at most
\(\beta_G(T)<\omega_Q(T)\), then matching the \(m\)-copy quantum success
\(\omega_Q(T)^m\) requires
\[
  K_C^{\omega_Q^m}(T^m)
  \geq
  \left(\frac{\omega_Q(T)}{\beta_G(T)}\right)^m
\]
and therefore
\[
  \kappa_C^{\omega_Q^m}(T^m)
  \geq
  m\log_2\!\left(\frac{\omega_Q(T)}{\beta_G(T)}\right).
\]

Thus the repeated-task bound is exponential in the raw number of classical
coordination states, but linear in the number of communication or memory bits.
For the seven-test Hardy task studied below,
\[
  \frac{\omega_Q}{\beta_G}=1+\frac{q}{6},
\]
so the lower bound is
\[
  K_C\geq (1+q/6)^m,
  \qquad
  B+M\geq m\log_2(1+q/6).
\]
At \(q=1/4\), this is
\[
  K_C\geq (25/24)^m,
  \qquad
  B+M\geq 0.0589\,m.
\]

This is already a genuine support-chart coordination advantage, but it is not
an exponential lower bound in bits.  To obtain a quadratic or exponential
bit-complexity separation through the covering-number route, one needs a
family of compiled tasks
\(\{T_n\}\) whose global contextual covering number grows faster with the
problem size:
\[
  \log_2 \chi_G(T_n)=\Omega(n^2)
  \quad\text{gives a quadratic bit lower bound,}
\]
whereas
\[
  \log_2 \chi_G(T_n)=\Omega(2^n)
  \quad\text{would give an exponential bit lower bound.}
\]
The memory-cost literature for stabilizer-like quantum subtheories suggests
that quadratic bit growth is realistic at the more general causal-state
level~\cite{karanjai2018contextuality}.  The seven-test Hardy projection does
not prove such a support-cover scaling; it gives a constant chart rate under
parallel repetition.

\subsection{Examples of compiled tasks}

\begin{enumerate}[leftmargin=2em]
\item \textbf{CSP and graph problems.}  Variables are vertices or logical
variables, and hyperedges are clauses, edges, or local graph constraints.
Classical coordination cost measures the resources required to make local
constraint choices globally consistent.

\item \textbf{Distributed consistency.}  Variables are local replicas, log
positions, or transaction states.  Hyperedges encode local consistency,
serializability, or agreement checks.  Communication and memory are the usual
resources used to maintain global consistency.

\item \textbf{MBQC patterns.}  Variables are local measurement outcomes and
contexts are compatible measurement patterns.  Classical feed-forward is a
coordination mechanism, while the resource state supplies nonclassical
correlations.  Contextuality or magic marks the part of the pattern that cannot
be absorbed into a low-cost classical simulation.

\item \textbf{Sampling and generation.}  Variables are local pieces of a
sample, such as tokens, patches, or latent factors.  Hyperedges encode local
semantic, syntactic, or structural constraints.  A contextuality-style analysis
asks whether the local marginals arise from one global latent-variable model or
whether a richer coordination mechanism is required.
\end{enumerate}

\begin{remark}
The last example should not be confused with quantum contextuality unless a
physical measurement scenario is specified.  It is included to mark a possible
generalization: contextuality-like distances to noncontextual latent-variable
models may be useful for generative modeling, but this is a classical or
cognitive analogue unless backed by a genuine quantum empirical model.
\end{remark}

\subsection{Computer-science reading}

The same formalism can be read without quantum terminology.  A relation task
is a local-to-global consistency problem:
\[
  \forall C\in\C,\quad x_C\in W_C
  \qquad\overset{?}{\Longrightarrow}\qquad
  \exists x\in\prod_{v\in V}A_v
  \text{ such that } x|_C=x_C \text{ for all }C.
\]
In the support version used here, a single global chart solves the task iff
\[
  \exists x\in\prod_{v\in V}A_v
  \quad
  \forall C\in\C,\quad
  x|_C\in W_C.
\]
Contextuality is the failure of this implication when all local pieces are
individually meaningful.

For a CSP, \(V\) is the set of variables and each context \(C\) is the scope
of a constraint.  For a database, the local data are relations
\(\{R_C\}_{C\in\C}\), and global consistency asks whether there exists a joint
relation \(J\) such that
\[
  \pi_C(J)=R_C
  \qquad(C\in\C),
\]
where \(\pi_C\) is projection onto the attributes in \(C\).  For a distributed
system, a context is a node neighborhood or local view; the global chart is a
network-wide state whose restrictions match all local views:
\[
  X|_{N_r(v)}=x_{N_r(v)}
  \qquad(v\in V_{\rm net}).
\]
The coordination cost \(\kappa_D(T)=\log_2\chi_G^D(T)\) is therefore the number
of bits needed to select among global configuration templates compatible with
the local views.

For MBQC, the local variables are measurement outcomes and the classical
control data used for feed-forward.  A low-cost classical chart would preassign
outcomes for all measurement branches:
\[
  s_v=s_v(m_v),
  \qquad
  m_v=f_v(x,s_{<v}),
\]
so contextuality marks the part of the measurement pattern that cannot be
absorbed into such a branch-independent classical explanation.

\section{Restricted Chart Covering and Coordination Costs}

We now define the cost region and then study a restricted chart projection of
it.  The cost region itself permits general classical protocols.  The covering
numbers in this section apply only when a transcript-memory state selects a
context-independent global chart; they do not replace
Theorem~\ref{thm:general-separator} for arbitrary causal simulation.  A
classical distributed protocol may use
communication, internal memory, and local computation to generate outputs.
For a protocol \(\Pi\), write
\[
  \mathrm{cost}(\Pi)=(B(\Pi),M(\Pi),D(\Pi)),
\]
where \(B\) is total communication in bits, \(M\) is the maximum amount of
classical internal memory in bits, and \(D\) is a local computation depth or
step bound.  The precise choice of \(D\) depends on the computational model;
our covering theorem below uses only \(B\) and \(M\), and then refines the
covering number by imposing a \(D\)-computability restriction.

The convention is that memory is measured in bits.  If a simulation model is
specified instead by a finite set of \(S_M\) internal memory states, then in
the present notation
\[
  M=\lceil\log_2 S_M\rceil,
\]
and the covering lower bound may equivalently be read as
\[
  B+\log_2 S_M\geq \log_2\chi_G(T)
\]
up to integer rounding.  This is the conversion used when comparing with
memory-cost results stated in terms of the number of hidden states.

The depth parameter \(D\) is a classical postprocessing or chart-generation
depth, not a quantum circuit depth.  Depending on the application it may mean
the depth of a classical circuit generating a global chart, the number of
local update rounds in a simulator, or a restricted family of admissible
postprocessing maps.  The twisted-hypercube example below uses a concrete
fan-in-two XOR-tree depth.

\begin{definition}[Classical cost region]
For an empirical model \(p\) and error tolerance \(\epsilon\), define
\[
 \begin{aligned}
  \Cost^\epsilon_C(p)
  =
  \{(B,M,D):\;&\text{there exists a classical protocol }\Pi\\
              &\text{with cost at most }(B,M,D)
                \text{ and }d(\Pi,p)\leq\epsilon\}.
 \end{aligned}
\]
Here \(d\) may be taken to be worst-case total variation distance over
contexts.
\end{definition}

The communication-complexity literature studies projections of
\(\Cost^\epsilon_C(p)\) onto the \(B\)-axis.  The memory-cost literature
studies projections onto the \(M\)-axis.  The present proposal is to study the
full region, especially for globally contextual empirical models.

\subsection{Global contextual covering number}

\begin{definition}[Global contextual covering number]
For a relation task \(T=(\C,\{W_C\}_{C\in\C})\), define
\[
  \chi_G(T)=
  \min\Bigl\{
    K:\exists g_1,\ldots,g_K
    \text{ such that for every } C\in\C,
    \exists j \text{ with } g_j|_C\in W_C
  \Bigr\}.
\]
\end{definition}

Thus \(\chi_G(T)=1\) exactly when one global assignment satisfies all contexts.
If \(\chi_G(T)>1\), a classical explanation must switch among several global
assignments.

\begin{definition}[Coordination bits]
For a depth-restricted chart class \(\mathcal G_D\), define
\[
  \kappa_D(T)=\log_2\chi_G^D(T),
  \qquad
  \kappa(T)=\log_2\chi_G(T).
\]
We call \(\kappa_D(T)\) the \(D\)-restricted coordination information of the
task, measured in \emph{coordination bits} or \emph{coord-bits}.  One
coord-bit is one binary distinction needed to select among globally
noncontextual charts.  If an integer number of bits is required, the
operational cost is \(\lceil \kappa_D(T)\rceil\).
\end{definition}

For a cover-admissible simulator, communication and memory contribute directly
to this chart-selection budget:
\[
  B+M\geq \kappa_D(T).
\]
Local computation depth is not itself measured in bits, but it changes the
available chart class.  The number of coord-bits saved by allowing depth \(D\)
instead of depth \(0\) is
\[
  \Delta_D(T)
  =
  \kappa_0(T)-\kappa_D(T)
  =
  \log_2\!\left(\frac{\chi_G^0(T)}{\chi_G^D(T)}\right),
\]
whenever the two covering numbers are finite.

\begin{definition}[Cover-admissible protocol]
A deterministic protocol is cover-admissible for \(T\) if each possible public
transcript \(t\) and coordination-memory state \(s\) determines a
context-independent global assignment
\[
  g_{t,s}\in \prod_{v\in V} A_v,
\]
and, when queried on a context \(C\), the protocol's remaining local output
rule is the restriction \(g_{t,s}|_C\).  In this model, communication and memory
act as a switch among global classical charts.
\end{definition}

\begin{theorem}[Covering lower bound]
Suppose a deterministic cover-admissible protocol solves the relation task
\(T\) perfectly using \(B\) bits of public communication transcript and \(M\)
bits of classical coordination memory.  Then
\[
  B+M \geq \log_2 \chi_G(T).
\]
\end{theorem}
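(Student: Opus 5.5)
The plan is a counting argument on the finite set of switch settings, followed by a pigeonhole comparison with the definition of \(\chi_G(T)\). First I will make the resource accounting precise. A transcript of at most \(B\) bits takes at most \(2^B\) values \(t\), and a coordination memory of \(M\) bits takes at most \(2^M\) values \(s\). If variable-length transcripts are allowed, I will either pad them to fixed length (the paper's worst-case fixed-length convention) or note that prefix-free transcripts of length at most \(B\) still number at most \(2^B\). In either case the set of pairs \((t,s)\) has cardinality at most \(2^{B+M}\).

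Next I will use cover-admissibility to turn pairs into charts. Each pair \((t,s)\) determines a single context-independent global assignment \(g_{t,s}\in\prod_{v\in V}A_v\), and on context \(C\) the protocol outputs \(g_{t,s}|_C\). Let
\[
\mathcal S=\{g_{t,s}\}
\]
be the set of charts that the protocol can actually reach. Then \(|\mathcal S|\le 2^{B+M}\).

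The core step is to show that \(\mathcal S\) is a cover in the sense of the definition of \(\chi_G(T)\). Fix any context \(C\in\C\). Since the protocol is deterministic and solves \(T\) perfectly, when it is queried on \(C\) it reaches some pair \((t,s)\) and outputs \(g_{t,s}|_C\), and perfect success means \(g_{t,s}|_C\in W_C\). So for every \(C\) there is some \(j\) with \(g_j|_C\in W_C\). By minimality of \(\chi_G(T)\) this gives
\[
\chi_G(T)\le|\mathcal S|\le 2^{B+M},
\]
and taking \(\log_2\) yields the claim.

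The only delicate point is interpretive rather than computational. One has to check that the pair \((t,s)\) used on context \(C\) may itself depend on \(C\), for instance when the transcript is generated after the query. This is harmless: the covering definition allows a different chart for each context, and context-dependence of the selection is exactly what \(\chi_G\) counts. What cover-admissibility forbids is a context-dependent output given \((t,s)\), so the argument needs nothing beyond the stated definition. I will state this explicitly, since it is where the restriction to cover-admissible protocols, as opposed to the general Theorem~\ref{thm:general-separator}, is used.
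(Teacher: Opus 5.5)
Your proposal is correct and follows essentially the same route as the paper's proof: bound the number of transcript--memory pairs by $2^{B+M}$, use cover-admissibility to attach one context-independent global chart to each pair, and note that perfect success forces these charts to cover every context, so $\chi_G(T)\le 2^{B+M}$. Your two extra remarks, which the paper leaves implicit, are sound: variable-length transcripts can be handled by padding or by Kraft's inequality for prefix-free transcripts, and the pair $(t,s)$ may depend on the context.
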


\begin{proof}
The communication transcript and the memory state together distinguish at most
\[
  2^{B+M}
\]
coordination states.  Once such a coordination state is fixed, the protocol's
deterministic output rule induces one global assignment \(g_j\).  If the
protocol solves the task for every context, then the induced assignments must
cover every context.  Hence the number of induced assignments is at least
\(\chi_G(T)\).  Therefore
\[
  2^{B+M}\geq \chi_G(T),
\]
and taking logarithms proves the claim.
\end{proof}

\begin{remark}
The theorem is deliberately elementary.  Its role is to isolate the
combinatorial core of chart selection.  It is not a theorem about all
classical protocols.  Randomized or adaptive protocols that do not reduce to
a context-independent chart after the counted boundary state is fixed must be
analyzed by Theorem~\ref{thm:general-separator} or by the causal realization
bound~\eqref{eq:causal-rank-bound}.

The memory convention matters.  The cover-admissible model treats memory as a
resource that can index one of several global charts after the relevant
coordination information has been fixed.  This is close to a communication
transcript or a reloadable coordination state.  It is not identical to the
sequential memory model used in some KS simulation work, where a hidden memory
state may be prepared before the measurement context is revealed and then
updated during a sequence of measurements.  Relating that automaton-style
memory model to \(\chi_G(T)\) requires a separate operational reduction.

The same caveat applies to adaptive protocols such as MBQC with classical
feed-forward.  Such a protocol is covered by the elementary theorem only after
the transcript includes the relevant branch information and fixing that
transcript plus memory state determines one context-independent chart.  If the
protocol uses genuinely branch-dependent updates that cannot be reduced to a
fixed chart after conditioning on the transcript, it lies outside the
cover-admissible model and requires a different lower-bound argument.
\end{remark}

\subsection{Support-level and probabilistic coordination costs}

The covering number is a support-level invariant.  Given a probabilistic
empirical model \(p=\{p_C\}_{C\in\C}\), let \(T_p\) be the induced support task
defined by
\[
  W_C=\operatorname{supp}(p_C)
  =
  \{o:p_C(o)>0\}.
\]
The quantity \(\chi_G^D(T_p)\) records only which local events are possible.  It
forgets the probabilities assigned to those events.

For probabilistic simulation one needs a stronger quantity.  Fix a class
\(\mathcal P\) of classical simulators and a depth-restricted chart or response
class \(\mathcal G_D\).  Define
\[
  K_{\mathcal P}^D(p)
\]
to be the minimum number of internal classical states, coordination labels, or
charts required by a simulator in \(\mathcal P\), with local computation depth
at most \(D\), to reproduce \(p\) exactly.  The associated memory cost is
\[
  M_{\mathcal P}^D(p)
  =
  \left\lceil \log_2 K_{\mathcal P}^D(p)\right\rceil .
\]
Thus the support-level bound
\[
  B+M\geq \log_2\chi_G^D(T)
\]
has an operational probabilistic analogue
\[
  B+M\geq \log_2 K_{\mathcal P}^D(p),
\]
whenever a \(B\)-bit transcript together with an \(M\)-bit memory state can
select at most \(2^{B+M}\) simulator states in \(\mathcal P\).  Equivalently,
one may define
\[
  D_p(K)=\min\{D:K_{\mathcal P}^D(p)\leq K\},
\]
so that a simulator with only \(K=2^{B+M}\) coordination states must satisfy
\[
  D\geq D_p(2^{B+M}).
\]

\begin{proposition}[Support projection]
For chart-based simulator classes in which each internal state determines one
depth-\(D\) global chart, exact simulation of \(p\) with
\(K_{\mathcal P}^D(p)\) simulator states induces a depth-\(D\) cover of the
support task \(T_p\).  Hence
\[
  \chi_G^D(T_p)\leq K_{\mathcal P}^D(p).
\]
The converse implication need not hold.
\end{proposition}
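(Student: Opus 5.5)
The plan is to read off a cover directly from an exact chart-based simulator with \(K=K_{\mathcal P}^D(p)\) states. By the defining property of the simulator class, each internal state \(s\in\{1,\dots,K\}\) determines one chart \(g_s\in\mathcal G_D\), and on a queried context \(C\) the simulator outputs \(g_s|_C\). The output distribution on \(C\) is therefore a mixture
\[
  q_C(o)=\sum_{s=1}^{K}\pi_C(s)\,\mathbf 1[g_s|_C=o].
\]
Here \(\pi_C\) is the probability distribution over states reached when context \(C\) is queried. We allow \(\pi_C\) to depend on \(C\), to cover transcript-dependent selection. This costs nothing in the argument, since only the set of charts matters.

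Exactness gives \(q_C=p_C\) for every \(C\in\C\). The first step is to take any state \(s\) with \(\pi_C(s)>0\). Then \(p_C(g_s|_C)\geq\pi_C(s)>0\), so \(g_s|_C\in\operatorname{supp}(p_C)=W_C\). Because \(\pi_C\) is a probability distribution, at least one such \(s\) exists. Hence for every context some \(g_s\) covers \(C\) in \(T_p\), and \(\{g_1,\dots,g_K\}\subseteq\mathcal G_D\) is a depth-\(D\) cover. By the definition of \(\chi_G^D\) as a minimum, \(\chi_G^D(T_p)\leq K=K_{\mathcal P}^D(p)\).

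The only delicate point is a bookkeeping one. We must make sure every state that actually influences outputs is among the \(K\) counted states and yields a chart in \(\mathcal G_D\). This is exactly the chart-based hypothesis, so we state it explicitly rather than prove it.

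For the failure of the converse, the plan is to exhibit an empirical model whose support is coverable by one chart (\(\chi_G^D(T_p)=1\)) but which is not \(\GNCHV\). A single chart can only produce a deterministic model, so any exact chart-based simulation of a genuinely random model needs at least two states. The simplest witness is a single binary measurement with \(p=(1/2,1/2)\). Its support task is covered by one chart, yet exact simulation needs \(K\geq2\) whenever the class contains no randomness outside the counted states.

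A stronger witness is a contextual model with full support, such as a noisy Popescu--Rohrlich or KCBS box. There \(\chi_G(T_p)=1\), while no mixture over any number of charts reproduces \(p\), so \(K_{\mathcal P}^D(p)\) is even infinite for chart-based classes. One such example suffices to show that the inequality can be strict, which is all the statement claims.
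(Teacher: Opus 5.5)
Your argument is correct and is the paper's own argument made slightly more careful: any state with positive weight on \(C\) yields a chart whose restriction lies in \(\operatorname{supp}(p_C)=W_C\), so forgetting the weights leaves a depth-\(D\) cover, and your fair-coin example makes concrete the paper's one-line remark that a support cover need not reproduce the weights. One caveat on your optional second example: you allowed \(C\)-dependent weights \(\pi_C\), and under that reading a full-support contextual model is reproduced by finitely many (unrestricted) charts — e.g.\ the four charts \(A_0=A_1=a\), \(B_0=B_1=b\), weighted by \(p(a,b\mid x,y)\), reproduce a noisy PR box — so the cost there is finite rather than infinite, although strictness against \(\chi_G(T_p)=1\) still holds.
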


\begin{proof}
An exact chart-based simulator for \(p\) specifies a finite family of global
charts, together with weights, whose mixture reproduces all probabilities
\(p_C(o)\).  Forgetting the weights and retaining only which local events can
be produced gives a support-level cover of \(T_p\).  Therefore the number of
charts needed to cover the support cannot exceed the number of chart states
needed for exact probabilistic simulation.  Conversely, a small family of
charts may cover all possible events without reproducing their required
probability weights, so the reverse inequality is false in general.
\end{proof}

This distinction matches the hierarchy in the sheaf-theoretic analysis of
contextuality~\cite{abramsky2011sheaf}.  The invariant \(\chi_G^D\) lives at
the possibilistic or support level; it quantifies the failure of local supports
to be explained by a small family of global sections.  The invariant
\(K_{\mathcal P}^D\) lives at the probabilistic or operational simulation
level; it also records the weights, updates, and response structure needed to
match the empirical model.  Thus a small support covering number does not rule
out a large probabilistic simulation cost.  In particular, the stabilizer-memory
lower bounds used below are lower bounds on \(K_{\mathcal P}^D\), or on the
corresponding memory bits, not on \(\chi_G^D\).  The term ``strong separation''
below is used in the complexity-theoretic sense of asymptotic resource growth,
not necessarily in the Abramsky--Brandenburger sense of strong contextuality.

\subsection{Resource-sensitive covers}

The ordinary covering number only sees the total number of available charts.
It therefore gives a sum bound in \(B+M\), not a genuine separation between
communication and memory.  To expose such trade-offs one must keep track of
how charts are selected.

\begin{definition}[Structured chart selector]
A selector model \(\mathfrak S\) assigns to every communication budget \(B\) a
set \(\mathfrak S_B\) of admissible transcript maps
\[
  \sigma:\C\to\{1,\ldots,2^B\}.
\]
The choice of \(\mathfrak S_B\) encodes the operational communication model.
For example, in a multipartite communication problem \(\mathfrak S_B\) should
come from \(B\)-bit protocols, not from arbitrary functions on \(\C\).
\end{definition}

\begin{definition}[Resource-sensitive covering region]
Fix a selector model \(\mathfrak S\) and a depth-restricted chart class
\(\mathcal G_D\).  A relation task \(T=(\C,\{W_C\})\) has a
\((B,M,D)\)-structured cover if there exist memory labels
\[
  s\in\{1,\ldots,2^M\},
\]
selectors \(\sigma_s\in\mathfrak S_B\), and charts
\[
  g_{s,t}\in\mathcal G_D
  \qquad
  (s=1,\ldots,2^M,\; t=1,\ldots,2^B)
\]
such that for every context \(C\in\C\),
\[
  \exists s
  \quad
  g_{s,\sigma_s(C)}|_C\in W_C.
\]
Define the structured coordination region
\[
  \mathsf R_{\mathfrak S}(T)
  =
  \{(B,M,D): T \text{ has a }(B,M,D)\text{-structured cover}\}.
\]
\end{definition}

This definition is deliberately model-dependent: different choices of
\(\mathfrak S_B\) give different notions of communication.  That dependence is
a feature rather than a bug, because a true \(B\)-versus-\(M\) theorem cannot
be obtained from an unstructured count alone.

\begin{proposition}[Collapse under unrestricted selectors]
If \(\mathfrak S_B\) contains every function
\(\C\to\{1,\ldots,2^B\}\), then
\[
  (B,M,D)\in\mathsf R_{\mathfrak S}(T)
  \quad\Longleftrightarrow\quad
  \chi_G^D(T)\leq 2^{B+M}.
\]
\end{proposition}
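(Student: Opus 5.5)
The plan is to prove the two directions separately. Both reduce to reindexing a family of \(2^{B+M}\) charts by the pairs \((s,t)\).

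For the forward direction, suppose \((B,M,D)\in\mathsf R_{\mathfrak S}(T)\). Take the structured cover \(\{g_{s,t}\}\) with \(s\in\{1,\ldots,2^M\}\) and \(t\in\{1,\ldots,2^B\}\), together with its selectors \(\sigma_s\). This is a family of at most \(2^M\cdot 2^B=2^{B+M}\) charts in \(\mathcal G_D\). For every context \(C\) there is some \(s\) with \(g_{s,\sigma_s(C)}|_C\in W_C\), so some member of the family covers \(C\). Hence the family is a depth-\(D\) cover, and \(\chi_G^D(T)\leq 2^{B+M}\). This direction does not use the unrestricted-selector hypothesis.

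For the converse, let \(g_1,\ldots,g_K\in\mathcal G_D\) be a minimal cover with \(K=\chi_G^D(T)\leq 2^{B+M}\). Index \(\{1,\ldots,2^{B+M}\}\) bijectively by pairs \((s,t)\) with \(s\le 2^M\) and \(t\le 2^B\). Set \(g_{s,t}=g_{j(s,t)}\), padding by repeating \(g_1\) when \(K<2^{B+M}\); this needs \(K\geq1\) and \(\mathcal G_D\) nonempty, which holds whenever the covering number is finite. For each context \(C\), fix one index \(j(C)\) with \(g_{j(C)}|_C\in W_C\), and write the pair of \(j(C)\) as \((s(C),t(C))\). Define \(\sigma_s(C)=t(C)\) if \(s(C)=s\), and \(\sigma_s(C)=1\) otherwise. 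Each \(\sigma_s\) is a function \(\C\to\{1,\ldots,2^B\}\), so \(\sigma_s\in\mathfrak S_B\) by the hypothesis. Then for every \(C\), taking \(s=s(C)\) gives \(g_{s,\sigma_s(C)}=g_{j(C)}\), which covers \(C\). This produces the required structured cover.

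The only delicate point is the bookkeeping in the converse. The quantifier in the definition is \(\exists s\) with \(\sigma_s\) fixed per memory label, so the selector cannot depend on \(s\) after the context is known. Routing each context through its designated memory label \(s(C)\) handles this. One also has to treat non-integer \(B,M\) and the degenerate case \(\chi_G^D=\infty\) (then both sides fail) by reading \(2^B,2^M\) as the cardinalities in the definition. Apart from these checks the argument is a direct counting identity.
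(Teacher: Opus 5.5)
Your proof is correct and follows the paper's argument. The forward direction reads the \(2^{B+M}\) charts of a structured cover as an ordinary depth-\(D\) cover. The converse indexes a minimal cover by pairs \((s,t)\) and lets each \(\sigma_s\) send the contexts assigned to row \(s\) to their column label, with all other contexts sent arbitrarily; your padding by repeating \(g_1\) and your fixed assignment \(C\mapsto j(C)\) spell out bookkeeping the paper leaves implicit.
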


\begin{proof}
If a structured cover exists, it uses at most \(2^{B+M}\) charts in
\(\mathcal G_D\), so these charts form an ordinary depth-restricted cover.
Conversely, suppose \(g_1,\ldots,g_K\) cover \(T\), with
\(K\leq2^{B+M}\).  Index the \(K\) charts by pairs \((s,t)\).  Because
selectors are unrestricted, for each memory label \(s\) we may choose a map
\(\sigma_s\) that sends every context assigned to a chart in row \(s\) to the
corresponding transcript label.  Contexts not assigned to row \(s\) may be
sent arbitrarily.  This realizes the ordinary cover as a structured cover.
\end{proof}

Thus the present lower bound is exactly the unstructured projection of a more
refined question.  To prove that communication and memory are not
interchangeable, one must impose an operationally meaningful selector class
\(\mathfrak S_B\), such as rectangle selectors in a two-party communication
task or protocol-tree selectors in a distributed task.

\subsection{Two-party rectangle selectors}

The standard two-party case makes the previous paragraph concrete.  Suppose
the contexts are indexed by pairs
\[
  (x,y)\in X\times Y,
\]
where Alice knows \(x\) and Bob knows \(y\).  A deterministic \(B\)-bit
communication protocol induces a transcript map
\[
  \sigma:X\times Y\to\mathcal T,
  \qquad |\mathcal T|\leq 2^B.
\]
For every transcript \(t\), the fiber
\[
  \sigma^{-1}(t)
\]
is a combinatorial rectangle \(A_t\times B_t\subseteq X\times Y\).  This is
the usual rectangle property of deterministic communication
protocols~\cite{kushilevitz1997communication}.

\begin{definition}[Global-chart rectangle cover]
For a two-party relation task \(T\) and a depth-restricted chart class
\(\mathcal G_D\), define \(\operatorname{rect}_G^D(T)\) to be the minimum
number \(L\) such that \(X\times Y\) can be covered by rectangles
\[
  R_1,\ldots,R_L\subseteq X\times Y
\]
and charts
\[
  g_1,\ldots,g_L\in\mathcal G_D
\]
with the property that for every \((x,y)\in R_\ell\),
\[
  g_\ell|_{C_{x,y}}\in W_{x,y}.
\]
\end{definition}

This number refines the ordinary global covering number.  Always
\[
  \chi_G^D(T)\leq \operatorname{rect}_G^D(T),
\]
because every rectangle cover is in particular a cover by charts.  The
inequality can be strict when the set of contexts won by one chart is
nonrectangular and must be decomposed into several communication rectangles.

\begin{proposition}[Rectangle lower bound]
For a two-party relation task \(T\), any deterministic cover-admissible
protocol with no coordination memory and local chart depth \(D\) that solves
\(T\) perfectly using \(B\) bits of communication satisfies
\[
  B\geq \log_2 \operatorname{rect}_G^D(T).
\]
\end{proposition}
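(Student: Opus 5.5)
The plan is to combine the rectangle property of deterministic protocols with cover-admissibility, following the pattern of the Covering lower bound theorem. Since there is no coordination memory (\(M=0\)), cover-admissibility says that each public transcript \(t\) alone determines a context-independent global assignment \(g_t\). By the chosen depth restriction, \(g_t\in\mathcal G_D\). When the protocol is queried on context \((x,y)\), its output is \(g_{\sigma(x,y)}|_{C_{x,y}}\), where \(\sigma:X\times Y\to\mathcal T\) is the transcript map and \(|\mathcal T|\le 2^B\).

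First, I invoke the standard rectangle property~\cite{kushilevitz1997communication}: for each transcript \(t\) in the image of \(\sigma\), the fiber \(R_t=\sigma^{-1}(t)\) is a combinatorial rectangle \(A_t\times B_t\). These fibers partition \(X\times Y\), so in particular they cover it. The number of nonempty fibers is at most \(|\mathcal T|\le 2^B\).

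Second, I use perfect correctness. For every \((x,y)\in R_t\) the protocol outputs \(g_t|_{C_{x,y}}\), and correctness requires \(g_t|_{C_{x,y}}\in W_{x,y}\). Hence the pairs \((R_t,g_t)\), indexed by the nonempty fibers, form a global-chart rectangle cover in the sense of the definition of \(\operatorname{rect}_G^D(T)\). By minimality, \(\operatorname{rect}_G^D(T)\le 2^B\), and taking logarithms gives \(B\ge\log_2\operatorname{rect}_G^D(T)\).

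The main obstacle is the first step: making sure that one chart is attached to each whole fiber. Cover-admissibility must be read so that the chart depends only on the full transcript, and not additionally on the private inputs \(x\) or \(y\) beyond what the transcript reveals. If the local output rule could consult \(x\) or \(y\) directly, the output would no longer be the restriction of a single global chart, and the protocol would lie outside the cover-admissible model. I would therefore state explicitly that the definition of cover-admissibility, with \(M=0\), fixes \(g_t\) as a function of \(t\) alone. The rest is bookkeeping.
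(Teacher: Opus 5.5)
Your proposal is correct and follows essentially the same route as the paper: transcript fibers are rectangles, cover-admissibility with \(M=0\) attaches one depth-\(D\) chart to each fiber, perfect correctness makes that chart valid on every context in its fiber, and at most \(2^B\) fibers give \(\operatorname{rect}_G^D(T)\le 2^B\). Your explicit point that the chart must depend on the transcript alone is exactly what the paper's definition of cover-admissibility provides.
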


\begin{proof}
Each transcript leaf of the protocol is a rectangle in \(X\times Y\).  Since
the protocol is cover-admissible, fixing that transcript also fixes a
depth-\(D\) global chart.  On every context in the transcript rectangle, that
chart must satisfy the corresponding relation.  Thus the protocol leaves form
a global-chart rectangle cover with at most \(2^B\) rectangles.  Therefore
\(2^B\geq \operatorname{rect}_G^D(T)\).
\end{proof}

With memory, the analogous object is a collection of at most \(2^M\) chart
libraries, each addressed by a \(B\)-bit rectangle selector.  This gives a
genuine place to formulate \(B\)-versus-\(M\) questions: communication
restricts the geometry of the selector fibers, while memory controls how many
such chart libraries are available.  The present paper does not compute a
nontrivial separation for this refined region, but the definition identifies
the next combinatorial object that must be analyzed.

\subsection{Depth-restricted covering: a first calculation}

The framework above defines inverse-depth quantities such as \(D_T(K)\), but
a definition is not yet a calculation.  We now give a small support-level gluing task where the
\((B,M,D)\) trade-off can be computed exactly.  This example is not meant to be
a physical global-KS construction; its role is to show how the computation
axis enters the same covering machinery.

Let \(n\geq2\).  For each vertex \(x\in\{0,1\}^n\) of the \(n\)-dimensional
hypercube, introduce one binary variable \(v_x\).  For each edge in direction
\(i\), written
\[
  C_{x,i}=\{v_x,v_{x\oplus e_i}\}
  \qquad (x_i=0),
\]
impose the parity relation
\[
  a\oplus b=1,
\]
except on the distinguished edge
\[
  C_{0,1}=\{v_{0^n},v_{e_1}\},
\]
where the relation is instead
\[
  a\oplus b=0.
\]
Call this compiled task \(Q_n\).

\begin{proposition}[Twisted hypercube gluing obstruction]
The task \(Q_n\) has no single global satisfying assignment.
\end{proposition}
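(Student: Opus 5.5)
The plan is a parity-counting argument over \(\mathbb F_2\). Suppose, for contradiction, that a global assignment \(y:\{0,1\}^n\to\{0,1\}\), \(x\mapsto y_x=v_x\), satisfies every context of \(Q_n\). Summing a relation around a closed walk in the hypercube makes every vertex value appear twice, so it cancels. The constraint right-hand sides must then also sum to \(0\) around every cycle. The strategy is to exhibit one explicit cycle through the distinguished edge along which the prescribed parities sum to \(1\).

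The cycle I will use is the square in directions \(1\) and \(2\) based at \(0^n\), which exists because \(n\geq2\):
\[
  0^n \to e_1 \to e_1\oplus e_2 \to e_2 \to 0^n .
\]
Its four edges are the contexts \(C_{0,1}=\{v_{0^n},v_{e_1}\}\), \(C_{e_1,2}=\{v_{e_1},v_{e_1\oplus e_2}\}\), \(C_{e_2,1}=\{v_{e_2},v_{e_1\oplus e_2}\}\), and \(C_{0,2}=\{v_{0^n},v_{e_2}\}\). Each is a legitimate edge context, since the base point has a \(0\) in the relevant direction. Only the first is the distinguished edge, so the constraints read
\[
  y_{0^n}\oplus y_{e_1}=0,\quad
  y_{e_1}\oplus y_{e_1\oplus e_2}=1,\quad
  y_{e_2}\oplus y_{e_1\oplus e_2}=1,\quad
  y_{0^n}\oplus y_{e_2}=1.
\]

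Adding these four equations over \(\mathbb F_2\) gives \(0\) on the left, since each of \(y_{0^n},y_{e_1},y_{e_2},y_{e_1\oplus e_2}\) occurs exactly twice, and \(0\oplus1\oplus1\oplus1=1\) on the right. This contradiction shows that no global satisfying assignment exists, i.e.\ \(\chi_G(Q_n)\geq2\). I do not expect a real obstacle. The only points to check are that the square uses exactly one twisted edge and that each square edge is correctly indexed as some \(C_{x,i}\) with \(x_i=0\).

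An alternative, more conceptual route is a cohomological view. All-ones parities on every edge would be satisfied by the bipartition assignment \(y_x=|x|\bmod 2\). Flipping one edge constraint therefore changes the odd-cycle parity of every cycle through that edge, and every edge of \(Q_n\) (\(n\geq2\)) lies on a \(4\)-cycle. The explicit square above makes this concrete, so no further machinery is needed.
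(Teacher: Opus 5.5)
Your proof is correct and is essentially the paper's argument: the same square on \(0^n, e_1, e_2, e_1\oplus e_2\), with the distinguished edge carrying parity \(0\) and the other three carrying parity \(1\). XORing the four equations gives \(0=1\). You also do the index check the paper leaves implicit, namely that each of \(C_{0,1}, C_{e_1,2}, C_{e_2,1}, C_{0,2}\) is a legitimate edge context with exactly one of them twisted.
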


\begin{proof}
Consider the square with vertices
\[
  0^n,\quad e_1,\quad e_2,\quad e_1\oplus e_2.
\]
The distinguished edge requires parity \(0\), while the other three edges of
the square require parity \(1\).  XORing the four edge equations around the
cycle, every vertex value appears twice on the left and cancels, giving \(0\).
The right side is \(0\oplus1\oplus1\oplus1=1\), a contradiction.
\end{proof}

Now restrict the allowed charts.  Let \(\mathcal G_D\) consist of affine parity
charts
\[
  g_{S,c}(x)=c\oplus\bigoplus_{i\in S}x_i,
  \qquad c\in\{0,1\},
\]
with
\[
  |S|\leq 2^D.
\]
This is the natural fan-in-two XOR-tree model: depth \(D\) can combine at most
\(2^D\) input coordinates into one parity value.

\begin{proposition}[Depth-restricted covering number]
For the twisted hypercube task \(Q_n\),
\[
  \chi_G^D(Q_n)
  =
  \max\left\{2,\left\lceil\frac{n}{2^D}\right\rceil\right\}.
\]
Consequently, for \(K\geq2\),
\[
  D_{Q_n}(K)
  =
  \max\left\{0,\left\lceil\log_2\frac{n}{K}\right\rceil\right\},
\]
while \(D_{Q_n}(1)=\infty\).
\end{proposition}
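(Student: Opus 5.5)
The plan is to compute, for each affine parity chart \(g_{S,c}\), exactly which contexts of \(Q_n\) it covers. The covering number then becomes a small set-cover problem on subsets of \(\{1,\ldots,n\}\).

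\textbf{Coverage of a single chart.} For an edge \(C_{x,i}\) in direction \(i\) we have
\[
  g_{S,c}(x)\oplus g_{S,c}(x\oplus e_i)=[\,i\in S\,],
\]
independent of \(x\) and of \(c\). Hence \(g_{S,c}\) covers every non-distinguished edge in direction \(i\) exactly when \(i\in S\). It covers the distinguished edge \(C_{0,1}\) exactly when \(1\notin S\).

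\textbf{Lower bound.} Since \(n\geq2\), every direction \(i\), including \(i=1\), has \(2^{n-1}\geq2\) edges. So every direction has at least one non-distinguished edge. Two necessary conditions follow for any cover \(S_1,\ldots,S_K\).
\begin{itemize}[leftmargin=2em]
\item The sets satisfy \(\bigcup_j S_j=\{1,\ldots,n\}\). Because each \(|S_j|\leq 2^D\), this forces \(K\geq\lceil n/2^D\rceil\).
\item Direction \(1\) needs some \(S_j\ni 1\), while the distinguished edge needs some \(S_{j'}\not\ni 1\). So \(K\geq2\).
\end{itemize}
This gives the lower bound \(\max\{2,\lceil n/2^D\rceil\}\). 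It is consistent with the twisted hypercube gluing obstruction, which already shows \(\chi_G^D(Q_n)\geq2\).

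\textbf{Upper bound.} Partition \(\{1,\ldots,n\}\) into \(k=\lceil n/2^D\rceil\) blocks of size at most \(2^D\), and use the charts \(g_{S,0}\) with \(S\) ranging over the blocks.
\begin{itemize}[leftmargin=2em]
\item If \(k\geq2\), some block omits \(1\), and that chart covers \(C_{0,1}\). All other edges are covered by the block containing their direction.
\item If \(k=1\), add the chart with \(S=\emptyset\), which is allowed since \(|\emptyset|\leq2^D\). It covers \(C_{0,1}\), giving two charts in total.
\end{itemize}

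\textbf{Inverse depth function.} For \(K=1\) no depth suffices, so \(D_{Q_n}(1)=\infty\). For \(K\geq2\), the condition \(\chi_G^D(Q_n)\leq K\) is equivalent to \(\lceil n/2^D\rceil\leq K\). This holds iff \(n/2^D\leq K\), i.e.\ iff \(2^D\geq n/K\). The least nonnegative integer \(D\) with this property is \(\max\{0,\lceil\log_2(n/K)\rceil\}\).

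\textbf{Main obstacle.} There is no deep step. The point needing care is the direction-\(1\) bookkeeping: one must check that direction \(1\) has a non-distinguished edge (this uses \(n\geq2\)) and that the empty parity set is an admissible chart in the \(k=1\) case. These two facts are what produce the floor value \(2\) in the formula.
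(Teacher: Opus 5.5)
Your proposal is correct and follows essentially the same argument as the paper: the per-chart edge-coverage computation, the two necessary conditions $\bigcup_j S_j=\{1,\ldots,n\}$ and $K\geq2$ (direction~$1$ versus the twisted edge), the block-partition upper bound with the extra empty-set chart when $\lceil n/2^D\rceil=1$, and direct inversion for $D_{Q_n}(K)$. Your explicit check that direction~$1$ still has a non-distinguished edge because $n\geq2$ is a small point the paper leaves implicit.
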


\begin{proof}
For an affine parity chart \(g_{S,c}\),
\[
  g_{S,c}(x)\oplus g_{S,c}(x\oplus e_i)
  =
  \begin{cases}
  1,& i\in S,\\
  0,& i\notin S.
  \end{cases}
\]
Thus \(g_{S,c}\) covers every ordinary edge in direction \(i\) exactly when
\(i\in S\), and it covers the twisted edge \(C_{0,1}\) exactly when
\(1\notin S\).

If \(K\) charts cover \(Q_n\), their sets \(S_1,\ldots,S_K\) must cover all
\(n\) coordinate directions, because ordinary edges occur in every direction.
Since each \(|S_j|\leq2^D\), we need
\[
  K2^D\geq n.
\]
Moreover, some chart must contain direction \(1\) to cover the ordinary
direction-\(1\) edges, while some chart must omit direction \(1\) to cover the
twisted edge.  Hence \(K\geq2\).  This gives the lower bound
\[
  K\geq
  \max\left\{2,\left\lceil\frac{n}{2^D}\right\rceil\right\}.
\]

For the matching upper bound, choose
\(K=\max\{2,\lceil n/2^D\rceil\}\).  If \(\lceil n/2^D\rceil\geq2\), partition
the directions \(\{1,\ldots,n\}\) into \(K\) subsets of size at most \(2^D\),
with direction \(1\) in one subset.  Another subset omits direction \(1\), so
it covers the twisted edge.  The corresponding parity charts cover all
ordinary directions and the twisted edge.  If \(\lceil n/2^D\rceil=1\), use
one chart with \(S=\{1,\ldots,n\}\) to cover all ordinary edges and one chart
with \(S=\emptyset\) to cover the twisted edge.  This proves the formula for
\(\chi_G^D(Q_n)\), and the expression for \(D_{Q_n}(K)\) follows by inverting
the inequality \(\chi_G^D(Q_n)\leq K\).
\end{proof}

For a cover-admissible simulator, \(K\leq2^{B+M}\).  Therefore, if
\(B+M\geq1\), the twisted hypercube gives the explicit trade-off
\[
  D
  \geq
  \max\left\{
    0,\left\lceil \log_2 n-(B+M)\right\rceil
  \right\}.
\]
With no coordination bit, \(K=1\), perfect support-level simulation is
impossible for any depth because the task has no global section.  Each extra
coordination bit doubles the number of available charts and can reduce the
required local parity depth by at most one.

The twisted hypercube is only a combinatorial model of gluing failure.  Its
purpose is to make the \(D\)-axis computable in a transparent setting.  In a
genuine global-KS example, such as the Hardy construction studied later, the
task \(T_p\) is obtained from the supports of a physical empirical model and
the admissible charts are \(\GNCHV\) assignments.  The same depth-restricted
covering calculation is then the object one would need to compute.

\subsection{A cautionary test: Tseitin contradictions}

It is tempting to look for stronger examples among classical hard
contradictions.  Tseitin systems on expander graphs are a natural first
candidate: they are mod-2 parity contradictions and have strong proof
complexity lower bounds~\cite{urquhart1987hard,bensasson2001short}.  However,
they do not make the global covering number large.  This illustrates an
important difference between proof complexity and coordination covering.

Let \(G=(V,E)\) be a connected graph with \(|V|\geq2\).  A charge function is a
map
\[
  c:V\to\{0,1\}.
\]
Assume it has odd total charge,
\[
  \bigoplus_{v\in V} c(v)=1.
\]
The Tseitin task \(T_{\rm Ts}(G,c)\) has one binary variable \(x_e\) for each
edge \(e\in E\).  For each vertex \(v\), the context is the incident edge set
\[
  C_v=\{e\in E:e\ni v\},
\]
and the winning relation is
\[
  \bigoplus_{e\ni v} x_e=c(v).
\]

\begin{proposition}[Tseitin contradictions have small cover number]
For every connected \(G\) with odd total charge,
\[
  \chi_G(T_{\rm Ts}(G,c))=2.
\]
Moreover, under the uniform distribution on vertices,
\[
  \beta_G(T_{\rm Ts}(G,c))=\frac{|V|-1}{|V|}.
\]
\end{proposition}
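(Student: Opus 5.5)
The plan has three parts: rule out a single chart, construct a two-chart cover, and compute \(\beta_G\) by a parity-counting argument with a matching construction.

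\emph{No single chart.} Suppose one assignment \(x\in\{0,1\}^E\) satisfied every vertex constraint. XOR all \(|V|\) constraints together. Each edge has exactly two endpoints, so each \(x_e\) appears twice on the left and the left side is \(0\). The right side is \(\bigoplus_v c(v)=1\), a contradiction. Hence \(\chi_G\geq 2\). The same computation shows more: for any assignment, the number of violated vertex constraints is odd. Indeed, the XOR over all vertices of \(\bigl(\bigoplus_{e\ni v}x_e\oplus c(v)\bigr)\) equals \(1\), and each violated vertex contributes a \(1\) to this sum. In particular every chart violates at least one context, which gives \(\beta_G\leq (|V|-1)/|V|\).

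\emph{Matching construction.} For any fixed vertex \(w\), I would build an assignment violating only \(C_w\). Take a spanning tree \(\mathcal T\) of \(G\), which exists because \(G\) is connected. Set \(x_e=0\) off the tree. Root the tree at \(w\) and process vertices from the leaves upward. At each non-root vertex \(v\), choose the value on the edge to its parent so that the constraint at \(v\) holds; the other incident edges of \(v\) are either off-tree or child edges, and these are already fixed. This satisfies every \(v\neq w\). By the parity fact, the constraint at \(w\) must then fail. This gives \(\beta_G=(|V|-1)/|V|\) under the uniform distribution.

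\emph{Two charts suffice.} Since \(|V|\geq2\), pick distinct vertices \(w_1\neq w_2\) and let \(g_i\) be the assignment above that violates only \(C_{w_i}\). Every context \(C_v\) is satisfied by at least one of \(g_1,g_2\): if \(v\neq w_1\) then \(g_1\) works, and if \(v=w_1\) then \(g_2\) works because \(w_1\neq w_2\). Thus \(\chi_G\leq2\), and with the first part \(\chi_G=2\).

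I expect no real obstacle. The only care needed is in the leaf-to-root construction: when a vertex is processed, all of its incident edges other than the parent edge must already be fixed. This holds because those edges are off-tree edges, fixed at \(0\), or child edges, fixed when the children were processed earlier.
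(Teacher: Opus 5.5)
Your proof is correct and follows essentially the same route as the paper: the parity argument rules out a single chart and gives $\beta_G\leq(|V|-1)/|V|$, single-defect assignments give the matching lower bound, and two assignments with distinct defect vertices cover every context. The only difference is how the single-defect assignment is obtained. You build it explicitly by leaf-to-root elimination on a spanning tree, whereas the paper cites the fact that the incidence system of a connected graph with one row removed has full row rank $|V|-1$ over $\mathbb{F}_2$; your construction is a constructive proof of that rank fact.
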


\begin{proof}
No global assignment satisfies all vertex equations: XORing all equations
makes every edge variable appear twice on the left, so the left side is \(0\),
whereas the right side is the odd total charge \(1\).  Hence
\(\chi_G>1\).

Fix any vertex \(v_0\).  Remove the equation at \(v_0\).  Since \(G\) is
connected, the remaining incidence system over \(\mathbb F_2\) has full row
rank \(|V|-1\), and hence has a solution for every right-hand side.  Any such
solution satisfies all equations except the omitted one, which must then be
violated by the odd-total-charge argument.  Thus for every \(v_0\) there is a
global assignment whose unique failed context is \(C_{v_0}\).

Choose two distinct vertices \(v_1,v_2\).  The two assignments whose unique
failed contexts are \(C_{v_1}\) and \(C_{v_2}\) cover every vertex context, so
\(\chi_G\leq2\).  Hence \(\chi_G=2\).  The same single-defect assignments show
that \(\beta_G\geq(|V|-1)/|V|\), while inconsistency implies that every
assignment fails at least one vertex equation, giving the reverse inequality.
\end{proof}

Thus large proof-complexity lower bounds for Tseitin formulas do not imply
large global contextual covering numbers.  Proof complexity asks how hard it
is to derive a contradiction; \(\chi_G\) asks how many global charts are needed
to cover the local constraints.  A contradiction can be proof-theoretically
hard while being cover-theoretically easy.

The right preliminary screen for large \(\chi_G\) is a fractional covering
quantity.  Let \(\Delta(\mathcal G)\) denote distributions over deterministic
global assignments, and define
\[
  \gamma_G(T)
  =
  \max_{\nu\in\Delta(\mathcal G)}
  \min_{C\in\C}
  \Pr_{g\sim\nu}[g|_C\in W_C].
\]
This is the largest coverage probability that can be guaranteed uniformly
over contexts by randomizing over global charts.
It should not be confused with \(\beta_G(T)\): \(\beta_G(T)\) is the best
average success of one chart under a chosen context distribution, whereas
\(\gamma_G(T)\) asks for a distribution over charts that covers every context
with uniformly high probability.

It is also different from the contextual fraction.  For a probabilistic model
\(p\), the noncontextual fraction can be written as
\[
  \operatorname{NCF}(p)
  =
  \max\{\lambda:\; p=\lambda p_{\NC}+(1-\lambda)p'
  \text{ for some }p_{\NC}\in\NC\},
\]
and the contextual fraction is \(1-\operatorname{NCF}(p)\).  This is a
probabilistic distance-to-\(\NC\) quantity.  By contrast, \(\chi_G\) and
\(\gamma_G\) are support-level or relation-level covering quantities.  They
are useful for the elementary bit-counting bounds proved here, while
contextual-fraction methods would be the natural tool for sharper approximate
simulation bounds.

\begin{proposition}[Fractional coverage upper bound]
If \(N=|\C|\) and \(\gamma_G(T)>0\), then
\[
  \chi_G(T)
  \leq
  \left\lceil
    \frac{\ln N+1}{\gamma_G(T)}
  \right\rceil .
\]
\end{proposition}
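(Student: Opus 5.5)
The plan is the standard probabilistic-method (random sampling) argument for set cover, phrased for charts. Let \(\gamma=\gamma_G(T)>0\), and fix a distribution \(\nu^\star\in\Delta(\mathcal G)\) attaining the maximum in the definition of \(\gamma_G(T)\). The maximum exists because the set of deterministic global assignments is finite (finite outcome sets over finitely many basic measurements), so \(\Delta(\mathcal G)\) is a compact simplex. The map \(\nu\mapsto\min_C\Pr_{g\sim\nu}[g|_C\in W_C]\) is a minimum of finitely many linear functions, hence continuous. By the choice of \(\nu^\star\), every context \(C\in\C\) satisfies \(\Pr_{g\sim\nu^\star}[g|_C\in W_C]\geq\gamma\).

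Next, set \(K=\lceil(\ln N+1)/\gamma\rceil\) and draw \(g_1,\ldots,g_K\) independently from \(\nu^\star\). For a fixed context \(C\), the probability that no \(g_j\) covers \(C\) is at most
\[
(1-\gamma)^K\leq e^{-\gamma K}\leq e^{-(\ln N+1)}=\frac{1}{eN}.
\]
A union bound over the \(N\) contexts shows that the expected number of uncovered contexts is at most \(1/e<1\). Since the number of uncovered contexts is a nonnegative integer, some realization leaves no context uncovered. That realization is a cover of \(T\) by \(K\) global assignments, and therefore \(\chi_G(T)\leq K\).

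There is no real obstacle here; the bound is deliberately loose, and the \(+1\) in the numerator exactly buys the strict inequality \(1/e<1\). The only points to state carefully are the attainment of the maximum, which can be sidestepped by using any \(\nu\) with coverage at least \(\gamma-\delta\) and letting \(\delta\to0\), and the elementary inequality \(1-\gamma\leq e^{-\gamma}\). If \(\gamma=1\), a single chart already covers everything, and the bound holds trivially.
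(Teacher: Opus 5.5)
Your proof is correct and follows the same probabilistic-method argument as the paper: sample \(K\) charts independently from a (near-)optimal distribution \(\nu\), bound the per-context miss probability by \((1-\gamma)^K\leq e^{-\gamma K}\), and take a union bound over the \(N\) contexts. Your explicit choice \(K=\lceil(\ln N+1)/\gamma\rceil\), which gives total failure probability at most \(1/e<1\), makes precise the strictness and integer rounding that the paper says the displayed bound ``absorbs.''
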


\begin{proof}
Choose \(K\) independent charts from a distribution \(\nu\) witnessing
\(\gamma_G(T)\) up to an arbitrarily small slack.  For any fixed context
\(C\), the probability that none of the \(K\) charts covers \(C\) is at most
\[
  (1-\gamma_G(T))^K\leq e^{-\gamma_G(T)K}.
\]
By the union bound, the probability that some context is uncovered is at most
\[
  N e^{-\gamma_G(T)K}.
\]
For \(K>(\ln N)/\gamma_G(T)\), this probability is less than one, so there
exists a choice of \(K\) charts covering all contexts.  The displayed bound
absorbs the strict inequality and integer rounding.
\end{proof}

Consequently, any task family with very large \(\chi_G(T_n)\) must have very
small uniform fractional coverage \(\gamma_G(T_n)\), not merely a hard
refutation in a proof system.  This suggests that good candidates should have
delocalized violation structure: every chart should miss many contexts, and
no distribution over charts should cover all contexts with substantial
probability.  Expander-code, locally testable code, or quantum-LDPC-style
constraint systems are natural places to look, but they require separate
analysis.

\section{Repetition, Approximation, and Scaling}

Define the best single-assignment success fraction
\[
  \beta_G(T)=
  \max_g \Pr_{C\sim \mu}[g|_C\in W_C],
\]
where \(\mu\) is a distribution over contexts.  For the uniform distribution,
\(\beta_G(T)\) is the largest fraction of contexts covered by one global
assignment.

For \(m\) independent copies of the task, a deterministic global assignment is
equivalently a tuple of copywise assignments \(g_1,\ldots,g_m\).  The
assignment \(g_i\) wins at most a \(\beta_G(T)\)-fraction of contexts in copy
\(i\), and the product context distribution gives total success at most
\(\beta_G(T)^m\).  If a classical protocol with \(K\) coordination states
succeeds with probability at least
\(1-\epsilon\), then
\[
  K\beta_G(T)^m \geq 1-\epsilon.
\]
Using \(K\leq 2^{B+M}\), we obtain:

\begin{proposition}[Repeated-task lower bound]
For \(m\) independent copies of \(T\), any cover-based classical protocol that
succeeds with probability at least \(1-\epsilon\) satisfies
\[
  B+M \geq m\log_2(1/\beta_G(T))+\log_2(1-\epsilon).
\]
\end{proposition}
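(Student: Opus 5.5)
The plan is to reduce the statement to the compiled-task covering bound proved earlier, applied to the $m$-fold product task. I would then convert the state count into bits using $K\le 2^{B+M}$.

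\textbf{Step 1: fix the product task and protocol.} Let $T^m$ have contexts $(C_1,\ldots,C_m)$ drawn from the product distribution $\mu^{\otimes m}$. Its winning set is the product of the $W_{C_i}$. Fix a cover-based protocol with $B$ bits of transcript and $M$ bits of coordination memory. Its coordination states number at most $K\le 2^{B+M}$, since the transcript and memory together distinguish at most that many labels.

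\textbf{Step 2: derandomize.} Since the average success is at least $1-\epsilon$, some fixed choice of the protocol randomness also succeeds with probability at least $1-\epsilon$, exactly as in the proof of the compiled-task covering bound. Under cover-admissibility, each coordination state $s$ then determines one global assignment $g_s$ on all $m$ copies. A context tuple can be won only if some $g_s$ restricts into the winning product set.

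\textbf{Step 3: bound a single assignment on the product.} Write $g_s=(g_s^{(1)},\ldots,g_s^{(m)})$. Because the contexts are independent across copies, its success probability factorizes as $\prod_i \Pr_{C_i\sim\mu}[g_s^{(i)}|_{C_i}\in W_{C_i}]$, which is at most $\beta_G(T)^m$.

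\textbf{Step 4: union bound and logarithms.} By the union bound over the $K$ assignments, $1-\epsilon\le K\beta_G(T)^m\le 2^{B+M}\beta_G(T)^m$. Taking $\log_2$ gives $B+M\ge m\log_2(1/\beta_G(T))+\log_2(1-\epsilon)$.

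The main obstacle is Step 2, namely justifying that ``cover-based'' means each counted state yields a single context-independent chart on the whole product. If charts could depend on the context within a copy, the factorization in Step 3 would fail. I would therefore state explicitly that the protocol is cover-admissible in the sense of the earlier definition, applied to $T^m$. Under that reading the per-copy factorization is immediate. The degenerate case $\epsilon\ge1$ is vacuous and can be excluded.
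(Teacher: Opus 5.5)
Your proposal is correct and follows essentially the same argument as the paper: fix the protocol randomness, let each of the at most \(2^{B+M}\) coordination states determine one product chart whose success under the product context distribution is at most \(\beta_G(T)^m\), apply the union bound to get \(K\beta_G(T)^m\geq 1-\epsilon\), and take logarithms. Your explicit remark that ``cover-based'' must be read as cover-admissibility on the product task \(T^m\) is the assumption the paper uses implicitly.
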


For constant \(\epsilon<1\), the leading term is linear in \(m\) whenever
\(\beta_G(T)<1\).

\subsection{Scaling laws for task families}

We now state the general asymptotic reading of the preceding bounds.  Let
\(\{T_n\}\) be a family of relation tasks, where \(n\) is the problem size, and
let \(D_n\) be the allowed local computation depth.  The relevant quantities
are:
\[
  \chi_G^{D_n}(T_n),
  \qquad
  \beta_G(T_n),
  \qquad
  \gamma_G(T_n),
  \qquad
  \operatorname{rect}_G^{D_n}(T_n)
\]
when a two-party communication structure is present.

\begin{proposition}[Perfect-support scaling]
Any cover-admissible classical simulator that realizes the support task
\(T_n\) perfectly with communication \(B_n\), coordination memory \(M_n\), and
local depth \(D_n\) satisfies
\[
  B_n+M_n
  \geq
  \log_2\chi_G^{D_n}(T_n).
\]
Equivalently, for a coordination budget \(K_n=2^{B_n+M_n}\),
\[
  D_n\geq D_{T_n}(K_n).
\]
Thus:
\[
\begin{array}{rcl}
\chi_G^{D_n}(T_n)\geq 2^{\Omega(n)}
&\Longrightarrow&
B_n+M_n=\Omega(n),\\[1mm]
\chi_G^{D_n}(T_n)\geq 2^{\Omega(n^2)}
&\Longrightarrow&
B_n+M_n=\Omega(n^2),\\[1mm]
\chi_G^{D_n}(T_n)\geq 2^{2^{\Omega(n)}}
&\Longrightarrow&
B_n+M_n=2^{\Omega(n)}.
\end{array}
\]
\end{proposition}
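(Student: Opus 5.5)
The plan is to reduce the statement directly to the covering lower bound proved earlier, in its depth-restricted form, and then read off the asymptotic consequences by taking logarithms.

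\textbf{Step 1: the main inequality.} Fix $n$ and a cover-admissible simulator for the support task $T_n$ with communication $B_n$, memory $M_n$ and depth $D_n$. By cover-admissibility, each pair of transcript $t$ and memory state $s$ determines one context-independent global chart $g_{t,s}$. Because the local rules are restricted to depth $D_n$, each such chart lies in $\mathcal G_{D_n}$. There are at most $2^{B_n+M_n}$ such pairs, and perfect realization of the support means every context $C$ is answered inside $W_C$ by some $g_{t,s}|_C$. The charts therefore form a depth-$D_n$ cover, so
\[
2^{B_n+M_n}\ \ge\ \chi_G^{D_n}(T_n).
\]
This is verbatim the argument of the covering lower bound theorem, with $\prod_v A_v$ replaced by $\mathcal G_{D_n}$. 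Taking $\log_2$ gives the first display.

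\textbf{Step 2: the depth form.} Set $K_n=2^{B_n+M_n}$. Step 1 shows that the realized depth $D_n$ belongs to the set $\{D:\chi_G^{D}(T_n)\le K_n\}$. By definition, $D_{T_n}(K_n)$ is the minimum of this set, so $D_n\ge D_{T_n}(K_n)$. No monotonicity of $\chi_G^D$ in $D$ is needed for this step. Conversely, the inequality of Step 1 is recovered from membership in that set, which justifies the word ``equivalently''.

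\textbf{Step 3: the three implications.} Each follows by substituting the assumed growth into Step 1. If $\chi_G^{D_n}(T_n)\ge 2^{c f(n)}$ for all large $n$, then $B_n+M_n\ge c f(n)$. Taking $f(n)=n$, $f(n)=n^2$ and $f(n)=2^{c'n}$ gives $\Omega(n)$, $\Omega(n^2)$ and $2^{\Omega(n)}$ respectively. In the last case $\log_2 2^{2^{\Omega(n)}}=2^{\Omega(n)}$.

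The only point requiring care is Step 1's claim that fixing the counted coordination state leaves no further freedom in chart choice. If residual randomness were present, I would fix it as in the compiled-task covering proof: since success is perfect, every fixing of the randomness already yields a cover with at most $2^{B_n+M_n}$ charts. With that handled, the proof is routine.
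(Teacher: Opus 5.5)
Your proposal is correct and follows the paper's own proof essentially verbatim: it applies the covering lower bound to the depth-restricted class $\mathcal G_{D_n}$, reads off $D_n\ge D_{T_n}(K_n)$ from the definition of $D_{T_n}$ as a minimum, and takes base-two logarithms for the three implications. One small caveat is that your remark on the converse of ``equivalently'' does not actually establish it: membership in $\{D:\chi_G^D(T_n)\le K_n\}$ is just the Step~1 inequality restated, and deducing that inequality from $D_n\ge D_{T_n}(K_n)$ needs $\chi_G^D(T_n)$ to be nonincreasing in $D$ (that is, nested classes $\mathcal G_D$), an assumption the paper also leaves implicit.
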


\begin{proof}
The first inequality is the covering lower bound applied to the
depth-restricted chart class \(\mathcal G_{D_n}\).  The inverse-depth statement
is exactly the definition of \(D_{T_n}(K_n)\).  The three displayed asymptotic
implications follow by taking base-two logarithms.
\end{proof}

The last line is the important bookkeeping point: an exponential number of
charts gives only a linear bit lower bound.  Exponential bit lower bounds
require a doubly exponential number of charts, or a different lower-bound
mechanism.

\begin{proposition}[Distributional and repeated scaling]
Suppose one global chart wins \(T_n\) with probability at most
\(\beta_n=\beta_G(T_n)\) under a context distribution \(\mu_n\), while the
target physical strategy wins with probability \(\omega_n>\beta_n\).  To match
the \(m\)-copy success probability \(\omega_n^m\), any cover-based classical
simulator satisfies
\[
  B+M
  \geq
  m\log_2\!\left(\frac{\omega_n}{\beta_n}\right).
\]
In particular, if \(\omega_n=1\), then
\[
  B+M\geq m\log_2(1/\beta_n).
\]
\end{proposition}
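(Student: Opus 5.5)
The plan is to reduce the statement to the compiled-task covering bound proved earlier, applied to the \(m\)-fold product task \(T_n^{\otimes m}\) under the product context distribution \(\mu_n^{\otimes m}\). Two facts are needed: a bound on what one chart can achieve on the product task, and a bound on how many charts a cover-based simulator can select.

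\textbf{Step 1 (one chart on \(m\) copies).} A deterministic global chart for \(T_n^{\otimes m}\) is a tuple \((g_1,\ldots,g_m)\) of copywise charts, since the variables of distinct copies are disjoint. It wins a product context \((C_1,\ldots,C_m)\) exactly when \(g_i|_{C_i}\in W_{C_i}\) for every \(i\). Under \(\mu_n^{\otimes m}\) these \(m\) events are independent, and each has probability at most \(\beta_n\). Hence every single chart wins the repeated task with probability at most \(\beta_n^m\).

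\textbf{Step 2 (union bound over coordination states).} Fix the simulator's randomness at a value whose success is at least the average success. This is legitimate because the target success \(\omega_n^m\) is an average over randomness and contexts. Cover-admissibility then gives at most \(K\le 2^{B+M}\) charts \(g_s\), one per coordination state. A context can be won only if some \(g_s\) covers it, so
\[
\omega_n^m\le \Pr_{C\sim\mu_n^{\otimes m}}\bigl[\exists s\; g_s|_C\in W_C\bigr]\le K\beta_n^m .
\]

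\textbf{Step 3 (take logarithms).} This gives \(2^{B+M}\ge(\omega_n/\beta_n)^m\), hence \(B+M\ge m\log_2(\omega_n/\beta_n)\). Setting \(\omega_n=1\) gives the special case \(B+M\ge m\log_2(1/\beta_n)\).

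The only real obstacle is Step 2's derandomization. It must be checked that fixing the simulator's randomness still leaves a cover-admissible family indexed by at most \(2^{B+M}\) coordination states. The check is that the counted boundary state consists of the transcript plus memory, and under the cover-admissible definition each such state determines one context-independent chart once the randomness is fixed. If shared randomness is counted inside \(B+M\) rather than fixed, the argument is the same: average over it, then apply the averaging step.
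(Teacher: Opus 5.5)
Your proposal is correct and follows essentially the same route as the paper: a product chart wins at most \(\beta_n^m\) of the \(m\)-copy contexts, a union bound over at most \(K\le 2^{B+M}\) coordination states gives \(\omega_n^m\le K\beta_n^m\), and taking logarithms finishes. Your explicit derandomization step is the one the paper carries out in the earlier compiled-task covering proposition and leaves implicit here.
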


\begin{proof}
A single chart wins at most a \(\beta_n\)-fraction of one-copy contexts, so a
product chart wins at most \(\beta_n^m\) of \(m\)-copy contexts.  If a
classical simulator has \(K\) coordination states, the union bound gives
success at most \(K\beta_n^m\).  Matching the target success \(\omega_n^m\)
therefore requires \(K\beta_n^m\geq\omega_n^m\).  Since
\(K\leq2^{B+M}\), the result follows.
\end{proof}

Writing \(\beta_n=1-\delta_n\), the leading rate is
\[
  \log_2(1/\beta_n)
  =
  \frac{\delta_n}{\ln 2}+O(\delta_n^2)
  \qquad(\delta_n\to0).
\]
Hence a constant soundness gap gives a linear-in-\(m\) bit lower bound, a
gap \(\delta_n=1/\operatorname{poly}(n)\) gives only
\(m/\operatorname{poly}(n)\), and a one-copy lower bound of order
\(\Omega(n^r)\) requires roughly
\[
  \beta_n\leq 2^{-\Omega(n^r)}
\]
or an equivalent amplification mechanism.

\begin{proposition}[Fractional-coverage obstruction to large \(\chi_G\)]
Let \(N_n=|\C_n|\).  If
\[
  \gamma_G(T_n)\geq \alpha_n>0,
\]
then
\[
  \log_2\chi_G(T_n)
  \leq
  \log_2(\ln N_n+1)+\log_2(1/\alpha_n)+1.
\]
Consequently, any family with
\[
  \log_2\chi_G(T_n)=\Omega(f(n))
\]
must have
\[
  \gamma_G(T_n)
  \leq
  (\ln N_n+1)\,2^{-\Omega(f(n))}.
\]
\end{proposition}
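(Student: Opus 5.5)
The plan is to obtain this directly from the Fractional coverage upper bound proposition proved earlier, which gives
\[
  \chi_G(T_n)\leq\left\lceil\frac{\ln N_n+1}{\gamma_G(T_n)}\right\rceil
\]
whenever \(\gamma_G(T_n)>0\).  Using the hypothesis \(\gamma_G(T_n)\geq\alpha_n\), I will replace the denominator to get \(\chi_G(T_n)\leq\lceil(\ln N_n+1)/\alpha_n\rceil\).

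Next I will remove the ceiling.  For any real \(x\geq1\) we have \(\lceil x\rceil\leq x+1\leq 2x\).  Here \(x=(\ln N_n+1)/\alpha_n\geq1\), because \(N_n\geq1\) gives \(\ln N_n+1\geq1\), and \(\alpha_n\leq\gamma_G\leq1\).  Taking base-two logarithms of \(\chi_G(T_n)\leq 2(\ln N_n+1)/\alpha_n\) then yields exactly
\[
  \log_2\chi_G(T_n)\leq\log_2(\ln N_n+1)+\log_2(1/\alpha_n)+1.
\]
The trailing \(+1\) is precisely the factor of two lost to rounding.  This step is the only one that needs care: the inequality \(\lceil x\rceil\leq 2x\) requires \(x\geq1\), and that holds for the reason just given.

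For the consequence, I will argue by specializing \(\alpha_n\).  If \(\gamma_G(T_n)>0\), take \(\alpha_n=\gamma_G(T_n)\) in the bound above and rearrange:
\[
  \gamma_G(T_n)\leq 2(\ln N_n+1)\,2^{-\log_2\chi_G(T_n)}.
\]
Now suppose \(\log_2\chi_G(T_n)\geq c f(n)\) for large \(n\).  The constant factor \(2\) is absorbed into the \(\Omega\) in the exponent by reading \(2^{1-cf(n)}\) as \(2^{-\Omega(f(n))}\).  This absorption requires \(f(n)\) to be bounded away from zero, which is implicit whenever \(\Omega(f(n))\) is a nontrivial growth statement.

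If instead \(\gamma_G(T_n)=0\), the claimed bound is trivially true.  This zero case covers, for example, the situation where some context is covered by no chart at all and \(\chi_G\) is infinite.
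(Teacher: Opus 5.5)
Your proposal is correct and follows the paper's own proof: apply the fractional coverage upper bound with \(\gamma_G(T_n)\geq\alpha_n\), take base-two logarithms, and rearrange. Your explicit removal of the ceiling via \(\lceil x\rceil\leq x+1\leq 2x\) for \(x=(\ln N_n+1)/\alpha_n\geq1\) is exactly where the trailing \(+1\) comes from, a step the paper leaves implicit.
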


\begin{proof}
This is just the fractional coverage upper bound applied to \(T_n\), followed
by taking logarithms and rearranging.
\end{proof}

This proposition explains why simply increasing the size of a contradictory
constraint system is not enough.  If random global charts can cover every
context with noticeable probability, then the integral cover is small up to a
logarithmic factor.  Strong growth of \(\chi_G\) requires delocalized
violation: no distribution over classical charts should cover every context
with substantial probability.

Finally, in a two-party setting one obtains a genuine communication scaling
law from the rectangle refinement:
\[
  B_n
  \geq
  \log_2 \operatorname{rect}_G^{D_n}(T_n)
\]
for memory-free deterministic cover-admissible protocols.  Thus
\[
  \operatorname{rect}_G^{D_n}(T_n)\geq 2^{\Omega(n)}
  \quad\Longrightarrow\quad
  B_n=\Omega(n),
\]
and stronger rectangle growth gives stronger communication lower bounds.  With
memory, the corresponding question is the structured region
\(\mathsf R_{\mathfrak S}(T_n)\): memory supplies multiple chart libraries,
while communication restricts which chart in a library can be selected by
rectangle-shaped transcript fibers.

These scaling laws give a precise search criterion for strong separations
within the restricted chart hierarchy.  One must find natural
global-contextual task families for which at least one of the quantities
\[
  \chi_G^{D_n}(T_n),\qquad
  \operatorname{rect}_G^{D_n}(T_n),\qquad
  1/\beta_G(T_n),\qquad
  1/\gamma_G(T_n)
\]
grows rapidly with \(n\).  The Hardy and KCBS support examples below prove
positivity in this hierarchy; a chart-level strong separation would require a
family with much faster asymptotic growth.  Theorem~\ref{thm:exact-quadratic}
instead obtains a strong separation through causal state complexity.

The size of the constants is also informative.  Ordinary pseudo-telepathy and
KS examples give linear bit lower bounds under repetition, but often with
larger rates than the genuinely global examples currently known.  The GHZ game
has rate
\[
  \log_2(4/3)\approx 0.415
\]
coord-bits per copy, and the six-context Mermin--Peres support task has rate
\[
  \log_2(6/5)\approx 0.263.
\]
The nonlocal magic-square game similarly gives
\[
  \log_2(9/8)\approx 0.170.
\]
By contrast, the Hardy genuinely global example below gives, at its optimal
parameter,
\[
  \log_2(25/24)\approx 0.0589
\]
coord-bits per copy, while the postselected KCBS example gives
\[
  \log_2(\sqrt5/2)\approx 0.161.
\]
Thus the currently explicit genuinely global support examples are
quantitatively weaker.  The stabilizer lift of
Theorem~\ref{thm:exact-quadratic} is asymptotically stronger, but it is a
causal state-complexity separation rather than a proof of large
support-covering number.

\section{Recovering the Bell and KS Readings}

\subsection{GHZ/Mermin}

In the three-party GHZ game, there are four contexts:
\[
  000,\;011,\;101,\;110.
\]
Any deterministic classical assignment satisfies at most three of the four
parity constraints, while the quantum GHZ strategy satisfies all four.  Thus
\[
  \beta_G=3/4.
\]
For \(m\) repetitions,
\[
  B+M \geq m\log_2(4/3)+\log_2(1-\epsilon).
\]
When memory is treated as fixed or free, this is read as a communication lower
bound.  It is the Bell/nonlocal projection of the coordination-cost framework.

\subsection{Mermin--Peres square}

The Mermin--Peres square has six contexts: three rows and three columns.
Because of the parity contradiction, any deterministic noncontextual
assignment satisfies at most five of the six constraints, whereas quantum
observables satisfy all six.  Hence
\[
  \beta_G=5/6.
\]
For \(m\) repetitions,
\[
  B+M \geq m\log_2(6/5)+\log_2(1-\epsilon).
\]
In a single-system simulation where there is no communication resource, this
becomes a memory lower bound.  It is the KS/contextual projection of the same
coordination-cost framework.

\section{Toward Genuine Global Contextuality}
\label{sec:genuine-global}

The preceding examples are sanity checks: GHZ is naturally read through
nonlocality, and Mermin--Peres through local contextuality.  The new target is
a genuinely global contextual model satisfying:
\[
\begin{array}{ll}
\text{(i)} & \text{local subsystems admit noncontextual explanations},\\
\text{(ii)} & \text{observed multipartite blocks admit generalized Bell-local explanations},\\
\text{(iii)} & \text{the whole model admits no global noncontextual hidden-variable model}.
\end{array}
\]
This is a stratified condition on a physical system.  It is stronger than the
bare statement that a measurement cover has no global section: the relevant
local restrictions and the observed multipartite blocks are required to be
classically explainable, and only their union fails to admit one
context-independent global hidden-variable account.  Thus the adjective
``global'' here refers to the whole physical system relative to its subsystem
and block structure, while the covering numbers below still use the standard
global-chart language over the associated measurement scenario.

For such a model, the same quantities \(\chi_G\) and \(\beta_G\) can be
computed relative to global noncontextual assignments.  If \(\beta_G<1\), the
same repeated-task argument gives
\[
  B+M \geq m\log_2(1/\beta_G)+\log_2(1-\epsilon).
\]
If one also restricts local computation depth \(D\), use the
depth-restricted covering number \(\chi_G^D(T)\) defined above.  Then
\[
  B+M \geq \log_2 \chi_G^D(T).
\]
This expresses a communication--memory--computation trade-off: more local
computation may reduce the number of classical explanations that must be
communicated or stored, while restricted local computation forces larger
communication or memory.  Equivalently, one may use the inverse trade-off
function \(D_T(K)\) to state a lower bound on local depth for a given
coordination budget \(K\leq2^{B+M}\).  This is still a coarse trade-off: the
bound depends on the total number of available coordination states, not on a
separate lower bound forcing both \(B>0\) and \(M>0\).

\section{A Worked Global-Contextual Example}

We now analyze the \(2\times4\) polarization-path obstruction of
Ref.~\cite{yang2026global}.  The relevant data are locally noncontextual and
admit a \(\GLHV\) block description, but they cannot be glued into one
\(\GNCHV\) model.  We use Ref.~\cite{yang2026global} for the local \(\NCHV\)
and \(\GLHV\) certificates and do not rederive them here.

There are two distinct resource statements.  First, we retain the complete
probability table and apply the general separator theorem.  This gives a
lower bound against arbitrary finite classical boundary states, without a
chart-selection assumption.  Second, we discard the nonzero probabilities
and compile the Hardy support obstruction into a seven-test relation game.
That projection gives a weaker but directly contextual
cover-admissible bound.  Keeping the two levels separate is essential:
nonnegative rank measures the coordination needed to generate a probability
table across a chosen cut, whereas \(\chi_G\) measures the number of
context-independent global charts needed to cover its support constraints.

\subsection{The Hardy support structure}

There are six binary observables
\[
  X_1,\;Y_1,\;X_1',\;Y_1',\;X_2,\;Y_2.
\]
The first two act on Bob's polarization degree of freedom, the primed
observables act on Bob's path degree of freedom, and \(X_2,Y_2\) act on
Alice.  The compatible triples used in the Hardy obstruction are:
\[
\begin{array}{lll}
(X_1,Y_1',Y_2),&
(Y_1,X_1',Y_2),&
(Y_1,Y_1',X_2),\\[1mm]
(X_1,X_1',X_2).&
\end{array}
\]
The quantum construction enforces six zero-probability events:
\begin{align}
 p(x_1{=}1,\,y_1'{=}1,\,y_2{=}1)&=0, \label{eq:ourH1}\\
 p(y_1{=}1,\,x_1'{=}1,\,y_2{=}1)&=0, \label{eq:ourH2}\\
 p(y_1{=}1,\,y_1'{=}1,\,x_2{=}1)&=0, \label{eq:ourH3}\\
 p(x_1{=}1,\,y_1'{=}0,\,y_2{=}0)&=0, \label{eq:ourH4}\\
 p(y_1{=}0,\,x_1'{=}1,\,y_2{=}0)&=0, \label{eq:ourH5}\\
 p(y_1{=}0,\,y_1'{=}0,\,x_2{=}1)&=0, \label{eq:ourH6}
\end{align}
and a positive target event
\begin{equation}
  p_{\rm Q}(x_1{=}1,x_1'{=}1,x_2{=}1)
  =
  q
  =
  h_0^2(1-h_0^2)>0.
  \label{eq:ourHardyTarget}
\end{equation}

The logical contradiction is simple.  Suppose a deterministic global
assignment has
\[
  x_1=x_1'=x_2=1.
\]
If \(y_1'=0\), then the zero constraints force
\[
  y_2=1.
\]
Indeed, \(y_2=0\) would trigger Eq.~\eqref{eq:ourH4}.  Now
Eq.~\eqref{eq:ourH2}, together with \(x_1'=1\) and \(y_2=1\), forces
\(y_1=0\), whereas Eq.~\eqref{eq:ourH6}, together with \(y_1'=0\) and
\(x_2=1\), forces \(y_1=1\).  This is impossible.
If \(y_1'=1\), then the zero constraints force
\[
  y_2=0.
\]
Here \(y_2=1\) would trigger Eq.~\eqref{eq:ourH1}.  Now
Eq.~\eqref{eq:ourH3}, together with \(y_1'=1\) and \(x_2=1\), forces
\(y_1=0\), whereas Eq.~\eqref{eq:ourH5}, together with \(x_1'=1\) and
\(y_2=0\), forces \(y_1=1\).  This is impossible.
Thus any \(\GNCHV\) global assignment satisfying all six zero constraints must have
\[
  p(x_1{=}1,x_1'{=}1,x_2{=}1)=0.
\]

Equivalently, every \(\GNCHV\) model satisfies the Hardy witness inequality
\[
\begin{aligned}
 \mathcal W_H
 &=
 p(x_1{=}1,x_1'{=}1,x_2{=}1)
 -p(x_1{=}1,y_1'{=}1,y_2{=}1) \\
 &\quad
 -p(y_1{=}1,x_1'{=}1,y_2{=}1)
 -p(y_1{=}1,y_1'{=}1,x_2{=}1) \\
 &\quad
 -p(x_1{=}1,y_1'{=}0,y_2{=}0)
 -p(y_1{=}0,x_1'{=}1,y_2{=}0)\\
 &\quad
 -p(y_1{=}0,y_1'{=}0,x_2{=}1)
 \leq 0,
\end{aligned}
\]
while the quantum model has
\[
  \mathcal W_H=q>0.
\]

\subsection{The full probability table}

We first keep the complete probabilities rather than only the seven Hardy
events.  Let \(x\) be Alice's setting, \(a\) her outcome, \(C_B\) Bob's
compatible measurement block, and \(b\) its joint outcome.  For any
full-support input distributions \(\mu_A,\mu_B\), form the event matrix
\begin{equation}
 J_\mu[(x,a),(C_B,b)]
 =
 \mu_A(x)\mu_B(C_B)p(a,b\mid x,C_B).
 \label{eq:hardy-event-matrix}
\end{equation}
Multiplying rows or columns by positive numbers does not change ordinary,
nonnegative, or positive-semidefinite rank.  We may therefore remove the
input weights and analyze the corresponding \(4\times16\) matrix \(M\).

At the balanced Hardy point \(h_0^2=1/2\), relabel the four row events and
Bob's sixteen block events by \(r\in\{0,1,2,3\}\) and
\((j,k)\in\{0,1,2,3\}^2\), respectively.  With
\[
 (\phi_0,\phi_1,\phi_2,\phi_3)
 =
 \left(\frac{\pi}{3},\frac{4\pi}{3},
       \frac{5\pi}{6},\frac{11\pi}{6}\right),
\]
the table obtained from the amplitudes of Ref.~\cite{yang2026global} is
\begin{equation}
 M_{r,(j,k)}
 =
 \frac{1+\cos(\phi_r+\phi_j+\phi_k-\pi)}{8}.
 \label{eq:hardy-full-table}
\end{equation}
Thus every entry is \(0\), \(1/8\), or \(1/4\).

\begin{proposition}[Ranks of the complete Hardy table]
\label{prop:hardy-ranks}
At \(h_0^2=1/2\),
\[
 \operatorname{rank}M=3,\qquad
 \operatorname{rank}_+M=4,\qquad
 \operatorname{rank}_{\rm psd}M=2.
\]
\end{proposition}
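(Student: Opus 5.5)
The plan is to handle the three ranks separately, using the trigonometric form \eqref{eq:hardy-full-table}. Write \(\psi_{jk}=\phi_j+\phi_k-\pi\), so that \(8M_{r,(j,k)}=1+\cos(\phi_r+\psi_{jk})\).

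\textbf{Ordinary rank.} Expand the cosine:
\[
 8M_{r,(j,k)}=1+\cos\phi_r\cos\psi_{jk}-\sin\phi_r\sin\psi_{jk}.
\]
So \(8M=LR\), where the rows of \(L\) are \((1,\cos\phi_r,-\sin\phi_r)\) and the columns of \(R\) are \((1,\cos\psi_{jk},\sin\psi_{jk})^{T}\). This gives \(\operatorname{rank}M\le3\). For equality I will check that \(L\) has rank \(3\): its rows come from four distinct points on the unit circle, which are not collinear. I will also check that \(R\) has rank \(3\): the angles \(\psi_{jk}\) take at least three distinct values modulo \(2\pi\). Both checks are direct from the listed angles.

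\textbf{PSD rank.} Use \(1+\cos(\alpha-\beta)=2\cos^2\!\frac{\alpha-\beta}{2}=2\langle u_\alpha,u_\beta\rangle^2\) with \(u_\alpha=(\cos\frac\alpha2,\sin\frac\alpha2)^{T}\). Put \(\alpha=\phi_r\) and \(\beta=-\psi_{jk}\). Then \(A_r=u_{\phi_r}u_{\phi_r}^{T}\) and \(B_{jk}=\tfrac14u_{-\psi_{jk}}u_{-\psi_{jk}}^{T}\) are \(2\times2\) PSD matrices with \(M_{r,(j,k)}=\operatorname{Tr}(A_rB_{jk})\), so \(\operatorname{rank}_{\rm psd}M\le2\). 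PSD rank \(1\) forces ordinary rank \(1\), and \(\operatorname{rank}M=3\), so \(\operatorname{rank}_{\rm psd}M=2\).

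\textbf{Nonnegative rank.} The upper bound \(\operatorname{rank}_+M\le4\) is trivial: factor through the \(4\) rows, \(M=I_4M\). For the lower bound, \(\operatorname{rank}_+M\ge\operatorname{rank}M=3\), so the task is to exclude \(3\). I will use the nested-polytope characterisation for matrices with \(\operatorname{rank}=3\) (Gillis--Glineur / Vavasis): \(\operatorname{rank}_+=3\) iff a triangle fits between an inner polygon \(I\) and an outer polygon \(O\) in the plane. Here \(I\) is the convex hull of the column points \((\cos\psi_{jk},\sin\psi_{jk})\), which lie on the unit circle. \(O\) is the intersection of the four half-planes
\[
 1+x\cos\phi_r-y\sin\phi_r\ge0 .
\]
Each boundary line of these half-planes is tangent to the unit circle. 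Since the \(\phi_r\) are \(\pi/3,4\pi/3,5\pi/6,11\pi/6\), they form two antipodal pairs at right angles, so \(O\) is a square circumscribing the unit circle. The zero entries of \(M\) are exactly the incidences where a column point equals a tangency point. One checks from the angles that all four side-midpoints of the square occur among the columns, so \(I\) contains the inscribed midpoint square.

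It remains to show that no triangle \(\Delta\) satisfies \(I\subseteq\Delta\subseteq O\). Each midpoint lies on \(\partial O\) and in \(\Delta\subseteq O\), so it must lie on \(\partial\Delta\). Four such points lie on three edges of \(\Delta\), so two of them share one edge. That edge is a chord of the square through two midpoints. If the midpoints are opposite, the chord splits the square; then \(\Delta\) lies on one side and misses the far midpoints. If they are adjacent, the chord cuts off a corner, and the segment must extend beyond those midpoints, leaving \(O\). I will make this case check precise.

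\textbf{Main obstacle.} This last geometric step is the one I expect to be hardest. I will first try to verify that the exact column set makes \(I\) equal to the midpoint square. If the nested-polygon argument proves awkward, my fallback is to exhibit a \(4\times4\) submatrix with \(\operatorname{rank}_+=4\) via the same square-versus-midpoint-square geometry. I checked that a simple fooling-set argument does not suffice, since each column vanishes in only one row.
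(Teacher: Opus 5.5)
Your argument is correct, but it follows a different route from the paper at almost every step. The paper gets \(\operatorname{rank}M=3\) from \(\det(\lambda I-MM^{\mathsf T})\). It imports the four-sector \(\GLHV\) certificate for \(\operatorname{rank}_+M\le4\) and the Schmidt-rank-two state for \(\operatorname{rank}_{\rm psd}M\le2\). Your factorization \(8M=LR\), the trivial four-row bound and the half-angle PSD factorization instead make all the upper bounds self-contained from Eq.~\eqref{eq:hardy-full-table}.

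The real divergence is the lower bound \(\operatorname{rank}_+M\ge4\). The paper takes the four column types, which are exactly your four side midpoints. Their submatrix \(N\) has the \(4\times4\) not-equal support, and Sperner's theorem shows this support cannot be covered by three diagonal-avoiding rectangles. That argument uses only the zero pattern, so it is robust to the positive entries. It needs neither the rank-three nested-polygon theorem nor a case analysis, and it is the same mechanism the paper reuses for the Wyner bound in Proposition~\ref{prop:hardy-wyner}. You set support arguments aside after fooling sets failed, but the rectangle-cover bound is stronger than the fooling-set bound and suffices here.

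Your geometric route is valid, and it explains the number \(4\): the outer polygon is a square and the inner polygon is exactly its midpoint square. Your slice \(z_1=1\) is legitimate because \(\mathbf 1^{\mathsf T}L=(4,0,0)\). However, in the adjacent-midpoint case the assertion that the edge ``must extend beyond those midpoints'' is the whole content and still needs proof. One way to supply it:
\begin{itemize}
\item Let \(\ell\) be the line through the two adjacent midpoints. The edge lies in \(O\cap\ell\), which is exactly the segment joining them, so the edge has length equal to their distance \(d\).
\item The other two midpoints lie in \(\Delta\) on a line parallel to \(\ell\), at positive distance from it on the triangle's side, and they are also \(d\) apart.
\item Any chord of a triangle parallel to one side and at positive distance from it is strictly shorter than that side, which gives the contradiction.
\end{itemize}
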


\begin{proof}
Direct symbolic multiplication gives
\[
 \det(\lambda I-MM^{\mathsf T})
 =
 \frac{\lambda(\lambda-1)(4\lambda-1)^2}{16},
\]
so \(M\) has rank \(3\).

The four-sector \(\GLHV\) certificate of
Ref.~\cite{yang2026global} gives a nonnegative factorization with four hidden
sectors, each of weight \(1/4\), and hence
\(\operatorname{rank}_+M\leq4\).  Conversely, after reordering, four columns
contain, up to the positive factor \(1/8\), the submatrix
\begin{equation}
 N=
 \begin{pmatrix}
 0&2&1&1\\
 2&0&1&1\\
 1&1&0&2\\
 1&1&2&0
 \end{pmatrix}.
 \label{eq:hardy-neq-submatrix}
\end{equation}
Its support is the \(4\times4\) not-equal matrix.  The support of every
nonnegative rank-one summand is a rectangle avoiding the diagonal.  If three
rectangles sufficed, associate with each row the subset of rectangles
containing that row on their row side.  The four resulting subsets of a
three-element set would have to be pairwise incomparable, while Sperner's
theorem allows at most three.  Therefore at least four rectangles, and hence
at least four nonnegative rank-one summands, are necessary.  This proves
\(\operatorname{rank}_+M=4\).

The underlying state has Schmidt rank \(2\) across the Alice--Bob cut, which
gives a positive-semidefinite factorization of size \(2\).  A
positive-semidefinite factorization of size \(1\) would imply ordinary rank
\(1\), contradicting \(\operatorname{rank}M=3\).  Hence
\(\operatorname{rank}_{\rm psd}M=2\).
\end{proof}

By Theorem~\ref{thm:general-separator}, every exact general classical
simulation whose past-dependent common cause and transcript are counted
across the Alice--Bob separator obeys
\begin{equation}
 B_\Sigma+M_\Sigma\geq\log_2 4=2.
 \label{eq:hardy-one-shot-classical}
\end{equation}
The four-sector \(\GLHV\) model attains this state count.  In the standard
quantum correlation-complexity convention, the Schmidt-rank-two realization
uses a shared-seed dimension \(2\), or
\[
 Q_\Sigma=1
\]
qubit on either side of the Schmidt cut.  Thus the one-copy comparison is
four classical coordination states versus a two-dimensional quantum
separator.

This accounting must be read literally.  If unlimited shared classical
randomness is declared free, the existing \(\GLHV\) model has zero
communication cost.  Equation~\eqref{eq:hardy-one-shot-classical} instead
counts the capacity of the common classical cause as \(M_\Sigma\), or the
communication required to establish it as \(B_\Sigma\).  It is therefore a
total classical coordination cost, not a Bell communication lower bound.

\subsection{Exact and robust repetition}

For \(m\) independent copies, ordinary rank is multiplicative:
\[
 \operatorname{rank}(M^{\otimes m})=3^m.
\]
Since nonnegative rank is at least ordinary rank, whereas the product quantum
realization has positive-semidefinite rank at most \(2^m\),
\begin{equation}
 B_{\Sigma,m}+M_{\Sigma,m}
 \geq m\log_2 3,
 \qquad
 Q_{\Sigma,m}\leq m.
 \label{eq:hardy-exact-repeat}
\end{equation}
The exact classical--quantum coordination gap is therefore at least
\[
 (\log_2 3-1)m\approx0.5850\,m.
\]
No multiplicativity assumption for nonnegative rank is used.

There is also a robust asymptotic statement.  Bob's sixteen event columns
fall into four correlation types together with an independent fourfold local
duplicate.  Discarding that duplicate gives the normalized joint
distribution
\begin{equation}
 P_{UT}
 =
 \frac1{16}
 \begin{pmatrix}
 0&2&1&1\\
 2&0&1&1\\
 1&1&0&2\\
 1&1&2&0
 \end{pmatrix}.
 \label{eq:hardy-reduced-distribution}
\end{equation}
Both marginals are uniform and \(H(U,T)=7/2\) bits.

\begin{proposition}[Wyner common information of the Hardy reduction]
\label{prop:hardy-wyner}
The distribution in Eq.~\eqref{eq:hardy-reduced-distribution} satisfies
\[
 C_{\rm W}(U;T)=\frac32\ \text{bits}.
\]
\end{proposition}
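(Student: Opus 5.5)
The plan is to write \(I(U,T;Z)=H(U,T)-H(U,T\mid Z)\) with \(H(U,T)=7/2\), as computed just before the statement. Under the Markov condition \(U-Z-T\) the conditional entropy splits as
\[
H(U,T\mid Z)=\sum_z p(z)\bigl[H(U\mid Z{=}z)+H(T\mid Z{=}z)\bigr].
\]
Proving \(C_{\rm W}(U;T)=3/2\) therefore amounts to two claims. First, every admissible \(Z\) satisfies \(H(U\mid z)+H(T\mid z)\le 2\) for each \(z\). Second, some admissible \(Z\) attains this value for every \(z\).

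\emph{Lower bound.} Fix any \(Z\) with \(U-Z-T\). For each \(z\) with \(p(z)>0\), the conditional law \(p(u,t\mid z)=p(u\mid z)p(t\mid z)\) is a product whose support \(A_z\times B_z\) must lie inside the support of \(P_{UT}\). That support is exactly the off-diagonal set \(\{(u,t):u\neq t\}\) of Eq.~\eqref{eq:hardy-reduced-distribution}, which is the same not-equal structure exploited in Proposition~\ref{prop:hardy-ranks}. A rectangle avoiding the diagonal forces \(A_z\cap B_z=\emptyset\) as subsets of \(\{0,1,2,3\}\), hence \(|A_z|+|B_z|\le 4\). Then
\[
H(U\mid z)+H(T\mid z)\le\log_2|A_z|+\log_2|B_z|\le\log_2\bigl(|A_z||B_z|\bigr)\le\log_2 4=2,
\]
because the product of two positive integers with sum at most \(4\) is at most \(4\) (the cases \(1\cdot3\) and \(2\cdot2\) exhaust the extremes). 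Averaging over \(z\) gives \(H(U,T\mid Z)\le 2\), so \(I(U,T;Z)\ge 7/2-2=3/2\). Taking the minimum over \(Z\) yields \(C_{\rm W}(U;T)\ge 3/2\).

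\emph{Upper bound.} I will exhibit an explicit decomposition of \(P_{UT}\) into four uniform \(2\times2\) off-diagonal rectangles, each carrying weight \(1/4\):
\[
\{0,2\}\times\{1,3\},\quad \{0,3\}\times\{1,2\},\quad \{1,2\}\times\{0,3\},\quad \{1,3\}\times\{0,2\}.
\]
Each rectangle contributes \(1/16\) to each of its four cells. A direct count shows that the cells \((0,1),(1,0),(2,3),(3,2)\) are each covered twice, and every other off-diagonal cell is covered once. This reproduces Eq.~\eqref{eq:hardy-reduced-distribution} exactly. Let \(Z\) be the uniform label of the rectangle. Given \(Z\), the variables \(U\) and \(T\) are independent and uniform on two-element sets, so \(U-Z-T\) holds and \(H(U,T\mid Z)=2\). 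Hence \(I(U,T;Z)=3/2\), which matches the lower bound.

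\emph{Main obstacle.} The only delicate step is making sure the lower bound covers arbitrary \(Z\), including continuous or non-uniform conditional laws. The support argument handles this, since it uses only the support of each product component and the bound \(H\le\log_2|\mathrm{supp}|\); no structure on \(p(u\mid z)\) is needed. The remaining work is to verify the rectangle count in the upper-bound construction entry by entry. As a consistency check, this four-term decomposition is also a nonnegative factorization attaining \(\operatorname{rank}_+=4\) on the reduced table.
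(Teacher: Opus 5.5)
Your proposal is correct and follows the paper's proof essentially step for step. You use the same disjoint-support observation, giving $|A_z|+|B_z|\le 4$ and hence $H(U\mid Z{=}z)+H(T\mid Z{=}z)\le 2$, and the same four equiprobable uniform $2\times2$ rectangles to attain $I(U,T;Z)=3/2$; your entry-by-entry count reproducing Eq.~\eqref{eq:hardy-reduced-distribution} checks out.
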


\begin{proof}
For every Markov decomposition \(U-W-T\), each product component
\(p(u\mid w)p(t\mid w)\) must avoid the zero diagonal.  Its row and column
supports are therefore disjoint.  If their sizes are \(a_w,b_w\), then
\(a_w+b_w\leq4\), and
\[
 H(U\mid W=w)+H(T\mid W=w)
 \leq\log_2(a_wb_w)\leq2.
\]
Consequently
\[
 I(U,T;W)
 =
 H(U,T)-H(U\mid W)-H(T\mid W)
 \geq\frac72-2=\frac32.
\]
Equality is attained by four equiprobable values of \(W\), with uniform
conditional supports
\[
\begin{array}{c|c|c}
 w&\operatorname{supp}(U\mid w)&\operatorname{supp}(T\mid w)\\ \hline
 0&\{0,2\}&\{1,3\}\\
 1&\{0,3\}&\{1,2\}\\
 2&\{1,2\}&\{0,3\}\\
 3&\{1,3\}&\{0,2\}.
\end{array}
\]
The resulting four \(2\times2\) product rectangles reconstruct
Eq.~\eqref{eq:hardy-reduced-distribution}.
\end{proof}

Wyner's common-information theorem therefore gives, for asymptotically
vanishing total-variation error,
\begin{equation}
 B_{\Sigma,m}+M_{\Sigma,m}
 \geq\frac32m-o(m),
 \qquad
 Q_{\Sigma,m}\leq m.
 \label{eq:hardy-robust-repeat}
\end{equation}
For freely chosen inputs this is a lower bound obtained by selecting the
uniform input experiment: a simulator valid for every input must in
particular generate this distribution.  The Wyner-optimal factorization need
not itself respect measurement independence, so
Eq.~\eqref{eq:hardy-robust-repeat} is a lower bound, not a claim that the
optimal free-input classical rate is exactly \(3/2\).

Finally, the rank and common-information gaps are properties of the complete
probability table across the chosen cut.  The failure of a \(\GNCHV\) gluing
explains why this table is the genuinely global example of interest, but the
present calculation does not prove that the numerical rank gap is a monotone
of genuine global contextuality.  Noncontextual tables can also have
nontrivial correlation complexity.  Establishing a direct quantitative link
between the \(\GNCHV\) obstruction and separator complexity remains open.

\subsection{The associated seven-test game}

Define a relation game \(H_q\) with seven equally likely tests:

\begin{itemize}[leftmargin=2em]
\item six zero tests, one for each event in
Eqs.~\eqref{eq:ourH1}--\eqref{eq:ourH6}; on such a test the players win
unless the corresponding forbidden event occurs;
\item one target test on context \((X_1,X_1',X_2)\); on this test they win
iff \(x_1=x_1'=x_2=1\).
\end{itemize}

\begin{proposition}[Single-assignment bound for the Hardy game]
Every deterministic \(\GNCHV\) global assignment wins at most six of the seven
tests.  Hence
\[
  \beta_G(H_q)=\frac67.
\]
\end{proposition}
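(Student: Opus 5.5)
The proof has two halves: an upper bound (no assignment wins all seven tests) and a matching lower bound (some assignment wins exactly six). Under the uniform test distribution, winning at most six of seven gives \(\beta_G(H_q)\le 6/7\), and exhibiting a six-winner gives equality.

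\emph{Upper bound.} Suppose a deterministic global assignment \(g\) wins all seven tests. Winning the target test means \(x_1=x_1'=x_2=1\). Winning each of the six zero tests means \(g\) avoids every forbidden event in Eqs.~\eqref{eq:ourH1}--\eqref{eq:ourH6}. The logical contradiction derived above then applies directly. We split on \(y_1'\):
\begin{itemize}[leftmargin=2em]
\item If \(y_1'=0\), Eq.~\eqref{eq:ourH4} forces \(y_2=1\). Then Eq.~\eqref{eq:ourH2} forces \(y_1=0\), while Eq.~\eqref{eq:ourH6} forces \(y_1=1\).
\item If \(y_1'=1\), Eq.~\eqref{eq:ourH1} forces \(y_2=0\). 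Then Eq.~\eqref{eq:ourH3} and Eq.~\eqref{eq:ourH5} force contradictory values of \(y_1\).
\end{itemize}
So no assignment wins all seven tests, and every assignment wins at most six.

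\emph{Lower bound.} The simplest witness is any assignment that wins all six zero tests and loses only the target. Take, for example, all six observables equal to \(0\). Every forbidden event in Eqs.~\eqref{eq:ourH1}--\eqref{eq:ourH6} requires some \(x\)-variable to equal \(1\). With all variables zero, none of them fires, so all six zero tests are won. The assignment therefore wins exactly six tests, which gives \(\beta_G(H_q)=6/7\).

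The only point needing care is checking that the witness really avoids all six forbidden patterns; this is immediate because each pattern includes some \(x\)-variable set to \(1\).
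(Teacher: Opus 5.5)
Your proposal is correct and follows the paper's own argument: the Hardy implication (the same case split on \(y_1'\)) rules out any assignment winning all seven tests, and an assignment that passes all six zero tests while failing the target shows the bound \(6/7\) is attained. Your all-zero witness is a concrete instance of the paper's \(x_1=x_1'=x_2=0\) choice, and your check that each forbidden event requires some \(x\)-variable to equal \(1\) verifies the existence claim that the paper only asserts.
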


\begin{proof}
If the assignment fails the target event, it loses the target test and hence
wins at most six tests.  If it satisfies the target event
\(x_1=x_1'=x_2=1\), the Hardy implication above shows that it must trigger at
least one of the six forbidden events, and hence it loses at least one zero
test.  Thus no assignment wins all seven tests.  Since assignments satisfying
all six zero constraints and failing the target exist, the bound \(6/7\) is
tight.
\end{proof}

\begin{proposition}[Covering number of the Hardy game]
The seven-test Hardy game satisfies
\[
  \chi_G(H_q)=2.
\]
\end{proposition}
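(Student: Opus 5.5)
The equality $\chi_G(H_q)=2$ splits into a lower bound and an explicit upper bound.

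For the lower bound $\chi_G(H_q)\geq 2$, I will use the preceding proposition. It shows that no deterministic $\GNCHV$ global assignment wins all seven tests, since $\beta_G(H_q)=6/7<1$. By definition of $\chi_G$, a single chart cannot cover every context, so at least two charts are needed.

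For the upper bound I will exhibit two assignments $g_1,g_2$ of the six binary observables $(x_1,y_1,x_1',y_1',x_2,y_2)$ whose covered test sets together contain all seven tests.

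The first chart should cover all six zero tests. The simplest choice is the all-zero assignment $g_1\equiv 0$. I will check the six forbidden events against it.
\begin{itemize}[leftmargin=2em]
\item Eqs.~\eqref{eq:ourH1}--\eqref{eq:ourH3} each require some variable to equal $1$, so none of them fires.
\item Eqs.~\eqref{eq:ourH4}--\eqref{eq:ourH6} each require $x_1=1$, $x_1'=1$, or $x_2=1$ respectively, so none of them fires either.
\end{itemize}
Thus $g_1$ wins all six zero tests and loses only the target test.

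The second chart only needs to cover the target test. Any assignment with $x_1=x_1'=x_2=1$ does this, for example the all-one assignment $g_2\equiv 1$. It wins the target test regardless of which zero tests it violates.

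Together, $\{g_1,g_2\}$ cover every context of $H_q$, so $\chi_G(H_q)\leq 2$. Combined with the lower bound, this gives $\chi_G(H_q)=2$.

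The only step requiring care is verifying that $g_1$ triggers none of Eqs.~\eqref{eq:ourH1}--\eqref{eq:ourH6}. This is a direct inspection, as in the list above. The argument does not depend on $q$, since the game's winning sets are defined by supports alone.
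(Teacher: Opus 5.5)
Your proof is correct and matches the paper's. The lower bound comes from the Hardy implication: you package it as $\beta_G(H_q)=6/7<1$, while the paper invokes the implication directly. The upper bound uses one chart with $x_1=x_1'=x_2=0$ to cover all six zero tests and one chart with $x_1=x_1'=x_2=1$ to cover the target test; your all-zero and all-one assignments are particular instances of the paper's choice, in which the other three values are left arbitrary.
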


\begin{proof}
No single assignment wins all seven tests, because any assignment winning the
target test satisfies \(x_1=x_1'=x_2=1\), and the Hardy implication above shows
that it must fail at least one of the six zero tests.  Hence
\(\chi_G(H_q)\geq2\).

For the upper bound, take one assignment with
\[
  x_1=x_1'=x_2=0.
\]
The remaining three values may be chosen arbitrarily.  This assignment wins all
six zero tests, because none of the six forbidden events is triggered, but it
loses the target test.  Take a second assignment with
\[
  x_1=x_1'=x_2=1.
\]
Again choose the remaining values arbitrarily.  This assignment wins the
target test; it need not win all zero tests, because the first assignment
already covers those.  Together the two assignments cover all seven tests, so
\(\chi_G(H_q)\leq2\).
\end{proof}

Thus a perfect cover of the seven labelled tests already requires at least one
coordination bit.  The repeated-task bound below is stronger for comparing
against the quantum success probability, because it uses the single-chart
success fraction \(\beta_G(H_q)=6/7\) rather than the perfect-cover number
\(\chi_G(H_q)=2\).

The quantum strategy wins all six zero tests with probability one, and wins
the target test with probability \(q=h_0^2(1-h_0^2)\).  Under the uniform
input distribution,
\[
  \omega_Q(H_q)=\frac{6+q}{7}
  >
  \frac67
  =
  \beta_G(H_q).
\]
The maximal value of \(q\) is \(1/4\), attained at \(h_0^2=1/2\), giving
\[
  \omega_Q=\frac{25}{28}.
\]

\subsection{Support-cover coordination lower bound}

Consider \(m\) independent repetitions of the seven-test Hardy game, with the
winning condition that all copies are won.  A single global assignment wins
with probability at most
\[
  \left(\frac67\right)^m.
\]
If a cover-admissible classical protocol can coordinate among at most \(K\)
deterministic global assignments, a union bound gives success probability at most
\[
  K\left(\frac67\right)^m.
\]
To match the quantum success probability
\[
  \left(\frac{6+q}{7}\right)^m,
\]
one must have
\[
  K \geq
  \left(\frac{6+q}{6}\right)^m
  =
  (1+q/6)^m.
\]
Since such a chart-selecting protocol with \(B\) bits of communication and
\(M\) bits of coordination memory has at most \(2^{B+M}\) coordination
states, we obtain
\[
  B+M
  \geq
  m\log_2(1+q/6).
\]
At the balanced point \(q=1/4\),
\[
  B+M
  \geq
  m\log_2(25/24)
  \approx
  0.0589\,m.
\]

This is the possibilistic chart projection of the complete-table result.  Its
smaller coefficient reflects both the discarded probability information and
the restriction to global-chart selection.  Within that restricted model it
has the logical form
\[
  \NCHV\text{-local}+\GLHV+\text{not }\GNCHV
  \quad\Longrightarrow\quad
  \text{positive chart coordination cost}.
\]

\section{A Postselected KCBS Global-Contextual Task}

The Hardy task is support-based and gives a small constant.  The
\(2\times3\) KCBS construction of Ref.~\cite{yang2026global} gives a different
kind of example: after postselecting Alice's outcome \(A_1=+1\), Bob's
conditional qutrit state violates the KCBS inequality, even though the
unconditional Bob marginal is KCBS-noncontextual and the tested multipartite
scenario has a \(\GLHV\) certificate at the reported point \(c_0=1/4\).  This gives
a stronger conditional coordination rate.

Consider the five KCBS contexts
\[
  C_j=\{B_j,B_{j+1}\},
  \qquad j=0,\ldots,4
\]
with indices modulo five, and define the anti-correlation relation task
\[
  W_j=\{(b_j,b_{j+1}): b_jb_{j+1}=-1\}.
\]
This is the usual odd-cycle parity obstruction.

\begin{proposition}[KCBS anti-correlation bound]
For the five-cycle anti-correlation task,
\[
  \beta_G=\frac45,
  \qquad
  \chi_G=2.
\]
\end{proposition}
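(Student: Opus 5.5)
The plan has three steps: show that no single assignment covers all five contexts, exhibit one assignment covering four, and exhibit two assignments covering all five. The five-cycle statement parallels the Tseitin and Hardy computations above.

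\textbf{No single assignment covers all five contexts.} Take a global assignment $g$ with values $b_0,\ldots,b_4\in\{\pm1\}$. Multiply the five products $b_jb_{j+1}$ around the cycle. Every $b_j$ appears exactly twice, so
\[
  \prod_{j=0}^4 b_jb_{j+1}=1.
\]
If all five contexts were won, this product would instead be $(-1)^5=-1$. Hence every assignment fails at least one context, which gives $\chi_G\ge2$ and $\beta_G\le 4/5$ under the uniform distribution on the five contexts. This is the odd-cycle analogue of the Tseitin parity argument with odd total charge.

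\textbf{One assignment covers four contexts.} Take the alternating assignment
\[
  (b_0,b_1,b_2,b_3,b_4)=(+1,-1,+1,-1,+1).
\]
It satisfies $b_jb_{j+1}=-1$ for $j=0,1,2,3$ and fails only the context $C_4$, since $b_4b_0=+1$. Thus it wins four of the five contexts, so $\beta_G=4/5$ exactly.

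\textbf{Two assignments cover all five.} Rotate the alternating assignment by one position, setting $b'_j=b_{j-1}$. The rotated assignment fails only $C_0$ and wins $C_4$. Together the original and the rotated assignment cover all five contexts, so $\chi_G\le2$, and combined with the first step $\chi_G=2$.

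There is no real obstacle here. The only points needing care are that $\beta_G$ refers to the uniform context distribution, and that the two chosen charts have distinct unique defects, which is exactly the single-defect covering trick used for $T_{\rm Ts}(G,c)$.
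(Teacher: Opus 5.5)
Your proposal is correct and follows the paper's route. The product-around-the-cycle parity contradiction gives $\beta_G\le 4/5$ and $\chi_G\ge 2$, the alternating assignment attains $4/5$, and that assignment together with its one-step shift (unique defects at $C_4$ and $C_0$) covers all five contexts.
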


\begin{proof}
A deterministic global assignment gives values
\[
  b_0,\ldots,b_4\in\{\pm1\}.
\]
If all five anti-correlation constraints held, multiplying them would give
\[
  (b_0b_1)(b_1b_2)(b_2b_3)(b_3b_4)(b_4b_0)=-1.
\]
The left side is \(1\), since every \(b_j\) appears twice.  Hence at most four
of the five constraints can be satisfied, and \(\beta_G\le4/5\).  Alternating
values around the cycle satisfy four constraints, so \(\beta_G=4/5\).  The
same alternating assignment and its one-step shift cover all five edges, so
\(\chi_G=2\).
\end{proof}

For the conditional state \(|0\rangle\) in the KCBS construction,
Ref.~\cite{yang2026global} gives
\[
  \langle D\rangle_{A_1=+1}
  =
  5-4\sqrt5,
  \qquad
  D=\sum_{j=0}^{4}B_jB_{j+1}.
\]
By symmetry each adjacent product has expectation
\[
  \langle B_jB_{j+1}\rangle
  =
  \frac{5-4\sqrt5}{5}
  =
  1-\frac4{\sqrt5}.
\]
Therefore the conditional quantum winning probability for one KCBS
anti-correlation test is
\[
  \omega_Q^{\rm cond}
  =
  \Pr[b_jb_{j+1}=-1\mid A_1=+1]
  =
  \frac{1-\langle B_jB_{j+1}\rangle}{2}
  =
  \frac2{\sqrt5}.
\]
This exceeds the global noncontextual value \(4/5\).

For \(m\) independent postselected KCBS tests, matching the conditional
quantum success probability requires
\[
  K
  \geq
  \left(
    \frac{2/\sqrt5}{4/5}
  \right)^m
  =
  \left(\frac{\sqrt5}{2}\right)^m.
\]
Thus every cover-based classical simulation of the conditional branch
satisfies
\[
  B+M
  \geq
  m\log_2\!\left(\frac{\sqrt5}{2}\right)
  \approx
  0.1610\,m.
\]

This rate is larger than the Hardy rate \(m\log_2(25/24)\).  Its status is
different, however: it is a postselected, inequality-based lower bound.  At
the \(\GLHV\)-certified point \(c_0=1/4\), Ref.~\cite{yang2026global} reports
that the branch \(A_1=+1\) occurs with probability \(1/16\).  Thus the
statement above should not be read as an
unconditional success-probability separation.  Rather, it says that once the
globally contextual postselected branch is selected, the resulting KCBS
statistics require a larger classical chart-coordination rate.

The flagged \(3\times3\) qutrit Werner-local construction in
Ref.~\cite{yang2026global} gives a state-level version of the same conditional
mechanism.  At the reported point \(\epsilon=w=1/2\), Alice's flag outcome has
probability \(2/3\), and Bob's conditional KCBS value is
\[
  \langle D\rangle_{B|0}
  =
  \frac{35-33\sqrt5}{12}.
\]
By the same five-cycle anti-correlation conversion, the conditional winning
probability is
\[
  \omega_Q^{\rm flag}
  =
  \frac12\left(
    1-\frac1{5}\frac{35-33\sqrt5}{12}
  \right)
  \simeq 0.82325.
\]
Since the classical value is still \(4/5\), the corresponding repeated-branch
rate is
\[
  B+M
  \geq
  m\log_2\!\left(\frac{\omega_Q^{\rm flag}}{4/5}\right)
  \simeq
  0.0413\,m.
\]
This is weaker than both the \(2\times3\) postselected KCBS rate and the Hardy
rate, so we record it only as a state-level consistency check rather than as a
main quantitative example.

\section{Flag Lifts and an Exact Quadratic Separation}

The scaling discussion above says what a strong separation must look like.
We first state a general flag-lifting mechanism and then instantiate it by a
finite stabilizer interface.  The resulting theorem is no longer merely a
programme conditioned on model alignment: it applies to arbitrary exact
finite-state causal online classical simulators under the complete-boundary-
configuration convention of Sec.~\ref{sec:separator}.

The parameter controlling such an embedding is a domination weight.  For an
empirical model \(S=\{S_C\}_{C\in\C}\), define
\[
  \pi_{\NC}(S)
  =
  \max_{N\in\NC}
  \min_{C,o:\,S_C(o)>0}\frac{N_C(o)}{S_C(o)},
\]
where the maximum ranges over noncontextual empirical models \(N\) on the same
finite scenario.  Equivalently, \(\pi_{\NC}(S)\) is the largest
weight \(p\) for which there exists a noncontextual model \(N\) satisfying
\[
  N_C(o)\geq p\,S_C(o)
  \qquad
  \text{for every context }C\text{ and outcome }o.
\]
The maximum exists because the noncontextual models form a compact polytope
and the displayed objective is continuous.  Because full-support
noncontextual models exist on every finite scenario, \(\pi_{\NC}(S)>0\).  If
\(\pi_{\NC}(S_n)\) is bounded below by a constant along a seed family, then the
flag used below can also have constant probability.
For a contextual seed \(S\), one also has \(\pi_{\NC}(S)<1\); otherwise some
noncontextual \(N\) would satisfy \(N_C(o)\geq S_C(o)\) for all \(C,o\), hence
\(N=S\) by normalization, contradicting contextuality.

The following elementary construction gives the abstract form of such an
embedding.  Let \(S=\{S_C\}_{C\in\C}\) be any empirical model on a finite
measurement scenario, and let \(N=\{N_C\}_{C\in\C}\) be a full-support
noncontextual empirical model on the same scenario.  Choose
\[
  0<p<1,
  \qquad
  p\leq
  \min_{C,o:\,S_C(o)>0}\frac{N_C(o)}{S_C(o)}.
\]
Then
\[
  R_C(o)=\frac{N_C(o)-pS_C(o)}{1-p}
\]
is again an empirical model: nonnegativity follows from the choice of \(p\),
normalization is immediate, and no-disturbance is preserved because \(R\) is
an affine combination of the no-disturbing models \(N\) and \(S\).  Introduce
a binary flag measurement \(A\) and
define the extended model \(G^{p,N}(S)\) on contexts \(\{A\}\cup C\) by
\[
  G^{p,N}_{A,C}(1,o)=pS_C(o),
  \qquad
  G^{p,N}_{A,C}(0,o)=(1-p)R_C(o).
\]
The unconditioned marginal on the original measurements is \(N\):
\[
  \sum_{a\in\{0,1\}}G^{p,N}_{A,C}(a,o)=N_C(o).
\]
Thus the original subsystem is locally noncontextual after the flag is
ignored, while conditioning on \(A=1\) recovers the seed model \(S\).

\begin{proposition}[Hidden-KS lift]
If the seed model \(S\) is KS-contextual, then the extended model
\(G^{p,N}(S)\) is globally KS-contextual.  Nevertheless, the marginal on the
original subsystem is the noncontextual model \(N\), and the flag subsystem is
classical.
\end{proposition}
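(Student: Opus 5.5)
The proof has three parts, each using only the definitions of \(\GNCHV\) models and the construction of \(G^{p,N}(S)\).

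\textbf{Subsystem marginals.} The marginal statement is already displayed before the proposition: summing \(G^{p,N}_{A,C}(a,o)\) over \(a\) gives \(pS_C(o)+(1-p)R_C(o)=N_C(o)\). Since \(N\) is noncontextual, the original subsystem is noncontextual. For the flag, the marginal of \(A\) in context \(\{A\}\cup C\) is \((1-p,p)\), independent of \(C\) because \(S_C\) and \(R_C\) are normalized. A single binary measurement with a context-independent distribution is trivially a mixture of deterministic assignments, so the flag subsystem is classical. I would also check no-disturbance of \(G^{p,N}(S)\) on overlaps \(\{A\}\cup(C\cap C')\). This follows because \(S\) and \(R\) are each no-disturbing and the flag weights do not depend on \(C\).

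\textbf{Global contextuality.} I would argue by contradiction. Suppose \(G^{p,N}(S)\) has a \(\GNCHV\) model, that is, a distribution \(\lambda\) over deterministic global assignments \(g=(g_A,g_V)\) with \(g_V\in\prod_{v\in V}A_v\), such that \(G^{p,N}_{A,C}(a,o)=\sum_{g:\,g_A=a,\,g_V|_C=o}\lambda(g)\). Restrict to the assignments with \(g_A=1\). Their total weight equals \(\Pr(A=1)=p>0\), and this is the same in every context because \(g_A\) is a single context-independent value. Define \(\lambda_1(g_V)=\lambda(1,g_V)/p\). Then for every context \(C\) and outcome \(o\),
\[
 \sum_{g_V:\,g_V|_C=o}\lambda_1(g_V)=\frac{G^{p,N}_{A,C}(1,o)}{p}=S_C(o).
\]
So \(\lambda_1\) is a \(\GNCHV\) model for \(S\), contradicting the KS-contextuality of \(S\).

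The key point, and the only place needing care, is that conditioning on the flag value is legitimate inside a global hidden-variable model. This works because \(A\) belongs to every context and receives one context-independent value \(g_A\), so the event \(\{A=1\}\) is a fixed subset of the hidden-variable space. Conditioning a \(\GNCHV\) model on this subset therefore yields a \(\GNCHV\) model for the conditional empirical model, and the normalization \(p\) is the same across contexts. Nothing else beyond bookkeeping is needed.
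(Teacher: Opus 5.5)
Your proposal is correct and follows the paper's proof: the marginal claim comes from the identity \(pS_C+(1-p)R_C=N_C\), and global contextuality follows by conditioning a putative \(\GNCHV\) distribution on the context-independent event \(A=1\), which has weight \(p>0\), to obtain a \(\GNCHV\) model for \(S\). Your explicit checks that the flag marginal is \((1-p,p)\) and that the lifted model is no-disturbing are valid additions that the paper leaves implicit.
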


\begin{proof}
The marginal claim follows from the displayed identity.  Suppose, toward a
contradiction, that \(G^{p,N}(S)\) admitted a global noncontextual hidden
variable distribution \(\mu\) over the flag value \(a\) and all original
measurement outcomes.  Since \(p>0\), conditioning \(\mu\) on \(a=1\) gives a
well-defined global distribution over the original measurements.  For every
context \(C\), its marginal is exactly
\[
  G^{p,N}_{A,C}(o\mid A=1)=S_C(o).
\]
This is a global noncontextual model for \(S\), contradicting the
KS-contextuality of the seed.
\end{proof}

This lift should be read as a clean mathematical mechanism, not as a complete
physical construction by itself.  At the level of a single tested block
\(\{A\}\cup C\), the model is just an ordinary joint distribution and hence
has a blockwise classical explanation.  What fails is the gluing of those
blockwise explanations into one context-independent global chart.  To obtain
the exact ``local \(\NCHV\) plus \(\GLHV\)-local blocks'' property in a concrete
multipartite experiment, one must realize this hiding mechanism inside a
physical scenario such as the polarization-path or postselected constructions
of Ref.~\cite{yang2026global}.

There is also a support-versus-distribution distinction.  The lifted model's
unconditioned support may be easy to cover, because the \(A=0\) filler branch
can contain many classical outcomes.  The lift is therefore a strong
tool for exact probabilistic simulation, approximate simulation with controlled
conditioning, or explicitly flagged tasks.  It should not be interpreted as
automatically producing a large bare support-covering number such as
\[
 \chi_G(G^{p,N}(S)).
\]

We isolate the required reduction in a form that can be checked independently
of any particular construction.  Let \(S\) be a KS-contextual seed task and
let \(G\) be a larger multipartite task.  We say that \(S\) is a
\emph{flag reduction} of \(G\) if there is an event \(F\) of positive
probability and a relabelling of the contexts and outcomes in the conditional
model \(G\mid F\) such that the relabelled conditional model is exactly \(S\).
The reduction is \emph{cost-preserving} for a class of classical simulators if
postselecting a simulator for \(G\) on \(F\), and then applying the relabelling,
produces a simulator for \(S\) with no larger set of internal classical states
or global charts.

\begin{proposition}[Cost inheritance under flag reductions]
Let \(S\) be a cost-preserving flag reduction of \(G\).  If \(K_{\mathcal P}(T)\)
denotes the minimum number of classical states, charts, or coordination labels
needed by a simulator class \(\mathcal P\) to simulate a task \(T\) exactly, then
\[
  K_{\mathcal P}(G)\geq K_{\mathcal P}(S).
\]
Equivalently,
\[
  \log_2 K_{\mathcal P}(G)\geq \log_2 K_{\mathcal P}(S).
\]
\end{proposition}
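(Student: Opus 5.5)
The plan is to argue directly from the definition of a cost-preserving flag reduction, by pushing an optimal simulator for \(G\) forward to a simulator for \(S\). Let \(K=K_{\mathcal P}(G)\). If \(K=\infty\) there is nothing to prove, so assume it is finite. Fix an exact simulator \(\Pi_G\in\mathcal P\) for \(G\) using exactly \(K\) internal classical states, charts, or coordination labels, and let \(\Lambda_G\) denote that set.

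The first step is to check that postselection is well defined on the simulator side. Since \(\Pi_G\) reproduces \(G\) exactly, the event \(F\) has the same probability under the simulator as in the model, namely \(\Pr(F)>0\). Conditioning the simulator's joint distribution of hidden labels and outputs on \(F\) is therefore a legitimate Bayesian update. For every context \(C\) of \(G\), the conditional output distribution of the postselected simulator equals \(G_C(\cdot\mid F)\). This is because conditioning is computed from the same joint table that \(\Pi_G\) reproduces exactly.

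The second step composes with the relabelling. Applying the context-and-outcome relabelling that turns \(G\mid F\) into \(S\) yields a process \(\Pi_S\) whose statistics equal \(S_C(o)\) for every context \(C\) and outcome \(o\) of the seed. So \(\Pi_S\) simulates \(S\) exactly. By the cost-preservation hypothesis, \(\Pi_S\) lies in \(\mathcal P\) and uses a label set \(\Lambda_S\) with \(|\Lambda_S|\leq|\Lambda_G|=K\). Concretely, postselection only reweights the prior over \(\Lambda_G\), possibly to zero, and relabelling acts on outputs rather than on hidden states. Minimality of \(K_{\mathcal P}(S)\) then gives \(K_{\mathcal P}(S)\leq|\Lambda_S|\leq K_{\mathcal P}(G)\). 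The logarithmic form follows from monotonicity of \(\log_2\).

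The main obstacle is the first step together with the appeal to membership in \(\mathcal P\). One must ensure that conditioning on \(F\) does not require information the simulator lacks. For example, \(F\) may involve a flag outcome produced after some context-dependent choices, and the conditioned process must still have the structural form the class demands, such as chart-selection or the causal response-and-update form. I will not re-derive this in general. The definition of a cost-preserving reduction builds in exactly this closure property, so the proof only needs to invoke it. I would add a remark that in concrete instances, such as the stabilizer construction that follows, closure has to be verified separately, for example by showing that conditioning preserves the set of classical boundary states.
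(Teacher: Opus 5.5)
Your proposal is correct and follows the paper's proof essentially step for step. You take an optimal \(\mathcal P\)-simulator for \(G\), condition on the flag event \(F\), apply the relabelling, and invoke cost preservation to obtain a \(\mathcal P\)-simulator for \(S\) with no more states; your extra checks (the \(K=\infty\) case and well-definedness of conditioning via \(\Pr(F)>0\)) are harmless details the paper leaves implicit.
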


\begin{proof}
Take an optimal \(\mathcal P\)-simulator for \(G\).  Condition its classical
state distribution and output rule on the flag event \(F\), and then apply the
fixed relabelling from \(G\mid F\) to \(S\).  By cost preservation this is a
valid \(\mathcal P\)-simulator for \(S\) using no more classical states than
the original simulator for \(G\).  Therefore the minimum cost for \(G\) cannot
be smaller than the minimum cost for \(S\).
\end{proof}

The same reduction has an approximate version.  Suppose the target model \(G\)
has flag probability \(p_F>0\), independent of the post-flag context, and a
classical simulator \(\widehat G\) satisfies
\[
  \TV(\widehat G_C,G_C)\leq \epsilon
\]
for every flagged context \(C\).  If \(\epsilon<p_F\), then the simulator's
conditional distribution given \(F\) is well-defined and satisfies
\[
  \TV(\widehat G_C(\,\cdot\,|F),G_C(\,\cdot\,|F))
  \leq
  \frac{2\epsilon}{p_F}.
\]
Indeed, for the event \(F\), the subnormalized measures
\(\widehat G_C(\cdot,F)\) and \(G_C(\cdot,F)\) differ by at most \(\epsilon\)
in total variation, and their total masses differ by at most \(\epsilon\).
Normalizing by \(p_F\) therefore costs at most another \(\epsilon/p_F\).  Thus
an \(\epsilon\)-accurate simulator for the lifted model gives a
\((2\epsilon/p_F)\)-accurate simulator for the seed.

Consequently, if simulating \(S_n\) within error \(\delta\) requires
\(L(n,\delta)\) bits of memory or coordination, then simulating a lifted model
\(G_n\) within error \(\epsilon\) requires at least
\[
  L\!\left(n,\frac{2\epsilon}{p_{F,n}}\right)
\]
bits, as long as the flag reduction is cost-preserving.  Constant flag
probability preserves approximate lower bounds up to constant factors.

Putting these observations together gives the strong-separation transfer
principle.

\begin{proposition}[Strong-separation transfer]
Let \(\{S_n\}\) be a seed family whose \(\delta\)-accurate classical simulation
within a simulator class \(\mathcal P\) requires at least \(L(n,\delta)\) bits
of memory or coordination.  Suppose \(\{G_n\}\) is a family of genuine
global-KS lifts of \(\{S_n\}\) with flag probabilities
\[
  p_{F,n}\geq p_0>0,
\]
and suppose the flag reductions are cost-preserving for \(\mathcal P\).  Then
any \(\epsilon\)-accurate \(\mathcal P\)-simulation of \(G_n\), with
\(\epsilon<p_0/2\), requires at least
\[
  L\!\left(n,\frac{2\epsilon}{p_0}\right)
\]
bits of memory or coordination.  In particular, if
\[
  L(n,\delta_0)=\Omega(f(n))
\]
for some constant accuracy \(\delta_0>0\), then every simulation of \(G_n\)
with \(\epsilon\leq p_0\delta_0/2\) has cost \(\Omega(f(n))\).
\end{proposition}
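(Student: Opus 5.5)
The plan is to compose the two preceding results: the approximate conditioning estimate stated just before the proposition, and the cost-inheritance argument for cost-preserving flag reductions. Fix \(n\) and an \(\epsilon\)-accurate \(\mathcal P\)-simulator \(\widehat G_n\) for \(G_n\) with \(\epsilon<p_0/2\). Since \(p_{F,n}\geq p_0>2\epsilon>\epsilon\), the simulator's flag event has probability at least \(p_{F,n}-\epsilon>0\) in every flagged context. Hence conditioning on \(F\) is well defined, and the displayed estimate gives
\[
 \TV\bigl(\widehat G_{n,C}(\,\cdot\,|F),G_{n,C}(\,\cdot\,|F)\bigr)\leq \frac{2\epsilon}{p_{F,n}}\leq\frac{2\epsilon}{p_0}<1 .
\]

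Next I would apply cost preservation in the sense of the definition of a flag reduction. Postselecting \(\widehat G_n\) on \(F\) and applying the fixed relabelling yields a \(\mathcal P\)-simulator for \(S_n\). Its set of internal classical states, charts, or coordination labels is no larger than that of \(\widehat G_n\). The proof of the cost-inheritance proposition carries over verbatim from the exact to the approximate setting: conditioning and relabelling never enlarge the state set. The only additional input is the accuracy bound above, and \(G_n\mid F\) relabels exactly to \(S_n\), so the conditioned simulator is a \((2\epsilon/p_0)\)-accurate simulator of \(S_n\).

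By hypothesis, any \(\delta\)-accurate \(\mathcal P\)-simulation of \(S_n\) needs at least \(L(n,\delta)\) bits, with \(\delta=2\epsilon/p_0\). Therefore the bit budget of \(\widehat G_n\), which bounds that of the derived simulator, is at least \(L(n,2\epsilon/p_0)\).

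For the final clause, I would use monotonicity of \(L\) in \(\delta\). A \(\delta\)-accurate simulator is also \(\delta'\)-accurate for every \(\delta'\geq\delta\), so \(L(n,\delta)\geq L(n,\delta')\). If \(\epsilon\leq p_0\delta_0/2\), then \(2\epsilon/p_0\leq\delta_0\), and the cost is at least \(L(n,\delta_0)=\Omega(f(n))\).

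I expect the main obstacle to be the bookkeeping in the conditioning step. The estimate needs \(p_F\) to be independent of the post-flag context, so the conditioned object is a genuine empirical model and not context-dependently reweighted. One must also check that postselection of a randomized simulator does not smuggle in extra state. I would handle this by noting that the flag is an exogenous-independent outcome and that postselection only reweights existing boundary states. The cost-preservation hypothesis is then exactly what licenses this step.
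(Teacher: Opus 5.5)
Your proposal is correct and follows the paper's proof: you condition the \(\epsilon\)-accurate simulator of \(G_n\) on the flag, use the estimate \(2\epsilon/p_{F,n}\leq 2\epsilon/p_0\) together with cost preservation to obtain a simulator of \(S_n\) with no more states, and then apply the seed bound. Your monotonicity step for the \(\Omega(f(n))\) clause spells out what the paper leaves implicit: a \((2\epsilon/p_0)\)-accurate simulator with \(2\epsilon/p_0\leq\delta_0\) is in particular \(\delta_0\)-accurate.
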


\begin{proof}
An \(\epsilon\)-accurate simulator for \(G_n\) gives, after conditioning on
the flag event, a \((2\epsilon/p_{F,n})\)-accurate simulator for \(S_n\) in the
same simulator class and using no more memory or coordination states.  Since
\(p_{F,n}\geq p_0\), this error is at most \(2\epsilon/p_0\).  The claimed
lower bound is exactly the seed lower bound applied to the induced simulator
for \(S_n\).
\end{proof}

\begin{theorem}[Abstract strong genuine-global-KS lift]
Let \(\{S_n\}\) be any finite KS-contextual seed family, and let
\(\mathcal P\) be a classical simulator class closed under conditioning on
positive-probability flag events.  For each \(n\), choose
\[
  0<p_n<1,
  \qquad
  p_n\leq \pi_{\NC}(S_n)
\]
and a noncontextual model \(N_n\) witnessing this domination.  Then the
explicit lifted family
\[
  G_n=G^{p_n,N_n}(S_n)
\]
has the following properties:
\[
\begin{array}{ll}
\text{(i)} & \text{the original subsystem marginal is noncontextual;}\\
\text{(ii)}& \text{the flag subsystem is classical and each flagged block has a joint model;}\\
\text{(iii)}& \text{the full model is globally KS-contextual;}\\
\text{(iv)}& K_{\mathcal P}(G_n)\geq K_{\mathcal P}(S_n)
              \text{ for exact simulation.}
\end{array}
\]
If, in addition, \(p_n\geq p_0>0\) and \(\delta\)-accurate simulation of
\(S_n\) requires \(L(n,\delta)\) bits of memory or coordination, then
\(\epsilon\)-accurate simulation of \(G_n\) requires at least
\[
  L\!\left(n,\frac{2\epsilon}{p_0}\right)
\]
bits for every \(\epsilon<p_0/2\).
\end{theorem}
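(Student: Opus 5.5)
The plan is to assemble the theorem from the lift construction and the propositions already proved, checking the one hypothesis each needs. First, the lifted model \(G_n=G^{p_n,N_n}(S_n)\) must be well defined. The definition of \(\pi_{\NC}\) supplies a noncontextual \(N_n\) with \(N_{n,C}(o)\geq p_nS_{n,C}(o)\) for all \(C,o\). By the paragraph preceding the Hidden-KS lift proposition, the filler \(R_C=(N_C-p_nS_C)/(1-p_n)\) is then a nonnegative, normalized, no-disturbing empirical model. That paragraph assumed \(N\) has full support, but the derivation uses only the domination inequality, so the witness \(N_n\) suffices.

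Properties (i)–(iii) follow directly.
\begin{itemize}
\item For (i), sum the flagged probabilities over \(a\in\{0,1\}\). This recovers \(N_n\), which is noncontextual by choice.
\item For (ii), the flag \(A\) takes a single classical value in every context. Each block \(\{A\}\cup C\) carries the explicit joint distribution \(G^{p_n,N_n}_{A,C}\), which is its own joint model.
\item For (iii), apply the Hidden-KS lift proposition. Conditioning a putative global noncontextual distribution on \(a=1\) is possible because \(p_n>0\). It yields a global model for \(S_n\), contradicting contextuality of the seed.
\end{itemize}

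For (iv), exhibit \(S_n\) as a flag reduction of \(G_n\). The flag event is \(F=\{A=1\}\), which has probability \(p_n>0\) in every context, independent of \(C\). By construction \(G_n\mid F\) equals \(S_n\) with the identity relabelling of contexts and outcomes, after deleting the fixed flag coordinate. The assumption that \(\mathcal P\) is closed under conditioning on positive-probability flag events is exactly cost preservation. Conditioning a \(\mathcal P\)-simulator of \(G_n\) on \(F\) keeps it in \(\mathcal P\) and cannot enlarge its set of internal states or charts, since conditioning only reweights existing ones. The Cost inheritance proposition then gives \(K_{\mathcal P}(G_n)\geq K_{\mathcal P}(S_n)\).

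The approximate clause is the Strong-separation transfer proposition with \(p_{F,n}=p_n\geq p_0\). Its requirements are a context-independent flag probability, cost-preserving reductions (shown above), and \(\epsilon<p_0/2\). That last condition gives \(\epsilon<p_n\), so the simulator's conditional on \(F\) is well defined. The \(2\epsilon/p_F\) bound then applies, and monotonicity of \(L\) in \(\delta\) lets us replace \(2\epsilon/p_n\) by \(2\epsilon/p_0\).

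I expect the main obstacle to be this monotonicity step, together with making "closed under conditioning" precise enough to count as cost preservation. The transfer requires that a simulator with smaller error also satisfies the looser accuracy requirement, so \(L(n,\cdot)\) must be nonincreasing in \(\delta\). I would state that convention explicitly, since it is implicit in reading \(L(n,\delta)\) as a requirement for \(\delta\)-accurate simulation. I would also verify that conditioning a simulator on \(A=1\) produces a simulator whose \(F\)-probability is context independent, which follows from the TV closeness and \(\epsilon<p_0/2\).
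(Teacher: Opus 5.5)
Your proposal is correct and follows the paper's own proof, which likewise invokes the hidden-KS lift for (i)--(iii), cost inheritance with the flag event \(A=1\) for (iv), and the strong-separation transfer proposition for the approximate clause; your observation that the filler \(R\) only needs the domination inequality \(N_n\geq p_nS_n\), not full support of \(N_n\), is a correct small sharpening. Two of your anticipated obstacles dissolve. A simulator with error at most \(2\epsilon/p_n\leq 2\epsilon/p_0\) is \(2\epsilon/p_0\)-accurate by definition, so no separate monotonicity convention on \(L\) is needed. Also, the transfer only needs the \emph{target's} flag probability to be context independent: TV closeness cannot make the simulator's flag probability exactly context independent, but \(\epsilon<p_0/2\) makes it positive, which is all the conditioning step requires.
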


\begin{proof}
Items (i)--(iii) are exactly the hidden-KS lift proposition applied to
\(S_n\), \(N_n\), and \(p_n\).  Item (iv) is the cost-inheritance proposition,
because conditioning \(G_n\) on the flag event \(A=1\) recovers \(S_n\).  The
approximate statement is the strong-separation transfer proposition.
\end{proof}

The abstract theorem isolates the logic of the reduction.  We instantiate its
cost-transfer part using the published stabilizer support-overlap bound.  A
Peres--Mermin measurement branch supplies a noncontextual hiding model with
constant flag probability, and conditioning preserves the state capacity of
every exact causal online simulator.

\subsection{A finite stabilizer witness}

Let
\[
 N_n
 =
 2^n\prod_{j=1}^{n}(2^j+1)
\]
be the number of pure \(n\)-qubit stabilizer states.  The counting argument of
Karanjai, Wallman, and Bartlett shows that a classical ontic state compatible
with exact stabilizer updates and single-shot distinguishability can occur in
the supports of at most
\[
 m_n=5\,3^{n-2}
\]
pure stabilizer preparations~\cite{karanjai2018contextuality}.

\begin{lemma}[Finite causal witness from the stabilizer overlap bound]
\label{lem:finite-stabilizer-witness}
For every \(n\geq2\), there is a finite adaptive stabilizer interface \(W_n\)
whose exact behavior \(S_{W_n}\) requires at least
\begin{equation}
 K_n
 \geq
 \frac{N_n}{m_n}
 =
 \frac{2^n\prod_{j=1}^{n}(2^j+1)}
      {5\,3^{n-2}}
 \label{eq:finite-stabilizer-count}
\end{equation}
states in every classical causal online realization.
\end{lemma}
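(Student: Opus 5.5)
The plan is to build $W_n$ as a concrete finite adaptive process and then push the Karanjai--Wallman--Bartlett (KWB) counting argument through the causal positive-realization framework of Sec.~\ref{sec:separator}.

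\textbf{Step 1: the interface.} The interface $W_n$ has two phases.

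In the preparation phase, a past label $s\in\{1,\ldots,N_n\}$ selects one of the $N_n$ pure $n$-qubit stabilizer states. The label is drawn uniformly and is exogenous.

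In the online phase, the future may issue any finite adaptive sequence of Clifford gates from a fixed generating set (Hadamard, phase, CNOT) and single-qubit Pauli measurements, with outcomes returned. The exact behavior $S_{W_n}$ is the collection of quantum conditional probabilities $\Pr(F\mid s)$. Because the stabilizer formalism is closed under these operations and the relevant set of states is finite, only finitely many distinct tests are needed. In particular, for each $s$ there is a Clifford circuit followed by a $Z$-basis measurement that yields outcome $0^n$ with certainty on $\psi_s$ and has probability $<1$ on every $\psi_{s'}\neq\psi_s$.

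\textbf{Step 2: the realization.} Take any causal online realization: states $\lambda\in\Lambda$, preparation vectors $\alpha_s$, and common matrices $T_{x,a}$ satisfying the Bayesian shift relation. Let $\operatorname{supp}\alpha_s\subseteq\Lambda$ be the support of preparation $s$.

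\textbf{Step 3: supports cover $\Lambda$ with bounded overlap.} The core of the argument is to show that every $\lambda$ lies in at most $m_n$ of the supports. This is exactly the KWB argument. The shared transition matrices give an ontic update for each Clifford gate that is independent of which preparation produced $\lambda$; this is where the causal consistency, as opposed to an ordinary nonnegative factorization, is used. Exact reproduction of deterministic Pauli outcomes then forces every $\lambda$ in $\operatorname{supp}\alpha_s$ to emit the deterministic outcomes predicted by $\psi_s$ along every adaptive path that stays in positive probability. KWB show that the set of stabilizer states jointly compatible with such a deterministic-update ontic state has size at most $5\cdot3^{n-2}$. I will cite their counting lemma and check only that its hypotheses, namely exact stabilizer updates and single-shot distinguishability, are exactly what the shift relation together with the exact Hankel table $H(R,F)$ supply.

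\textbf{Step 4: counting.} Every support is nonempty and, by Step 3, each $\lambda$ lies in at most $m_n$ supports. Double counting gives
\[
N_n\leq\sum_{\lambda\in\Lambda}\#\{s:\lambda\in\operatorname{supp}\alpha_s\}\leq m_n|\Lambda|,
\]
so $|\Lambda|\geq N_n/m_n$. Combining this with $|\Lambda|\leq 2^{B+M}$ gives the counting bound, with the closed form of $N_n$ substituted.

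The main obstacle is Step 3: translating KWB's ontological-model hypotheses into the causal-realization language. Two points need care. First, their models allow outcome determinism only after preparation-independent updates, so I must argue that the realization's $T_{x,a}$ can be restricted to reachable states without loss. Second, their distinguishability condition must be derived from the finitely many adaptive tests in $W_n$ rather than assumed. If the latter proves delicate, I would fall back on stating $W_n$ to include all Clifford/Pauli adaptive sequences up to a fixed length $\mathrm{poly}(n)$. This keeps the interface finite while containing every test the KWB argument invokes.
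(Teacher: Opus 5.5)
Your outer structure matches the paper's: a finite stabilizer interface, the supports of the preparation vectors $\alpha_s$, a bound of $m_n$ on how many supports one boundary state can meet, and double counting. Your Step~4 is the paper's final line. The gap is Step~3, which carries the whole content of the lemma and which you leave as ``cite the KWB counting lemma and check that its hypotheses hold.'' That overlap bound is a statement about KWB's ontological models. Nothing in the proposal shows that a causal positive realization with stochastic shared matrices $T_{x,a}$ satisfies their modelling assumptions. The points you flag (``deterministic-update'' ontic states, restricting $T_{x,a}$ to reachable states, extracting distinguishability from generic adaptive sequences) show that this translation is not yet in hand. As written, Step~3 assumes what is to be proved. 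The Step~1 interface is also finite only under your bounded-length fallback.

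The missing idea is to import from KWB only the model-independent combinatorial fact. That fact is that every set of $m_n+1$ pure stabilizer states admits a stabilizer partitioning measurement: on each outcome, two states of the set have orthogonal post-measurement states. You then build $W_n$ from it. For each of the finitely many $(m_n+1)$-subsets, include that measurement $x$ and, at each outcome, a Pauli test $y$ distinguishing the orthogonal pair. The overlap bound is then a short argument in realization language:
\begin{itemize}[leftmargin=2em]
\item Suppose $\lambda\in\operatorname{supp}\alpha_s$ for every $s$ in such a subset, and pick $a,\lambda'$ with $T_{x,a}(\lambda,\lambda')>0$.
\item Then $\alpha_sT_{x,a}\mathbf 1\geq\alpha_s(\lambda)T_{x,a}(\lambda,\lambda')>0$ for every $s$, so every extended record $s(x,a)$ has positive probability.
\item The shift relation therefore puts $\lambda'$ in every $\operatorname{supp}\alpha_{s(x,a)}$. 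In particular this holds for the orthogonal pair $s_1,s_2$, which give outcomes $o_1\neq o_2$ deterministically on $y$.
\item Exactness gives $\alpha_{s_i(x,a)}h_{(y,o_i)}=1$ with $h_{(y,o_i)}=T_{y,o_i}\mathbf 1\leq\mathbf 1$. Hence $h_{(y,o_1)}(\lambda')=h_{(y,o_2)}(\lambda')=1$, contradicting $h_{(y,o_1)}+h_{(y,o_2)}\leq\mathbf 1$.
\end{itemize}
This needs no deterministic updates and no reachability restriction, only one positive-probability successor. It is also the reason your bounded-length fallback interface would suffice: it works only because it contains these particular tests.
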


\begin{proof}
Use as input the Karanjai--Wallman--Bartlett theorem that every set of more
than \(m_n\) pure stabilizer states admits a stabilizer partitioning
measurement~\cite{karanjai2018contextuality}.  For each outcome of such a
measurement, two states in the set are mapped to orthogonal postmeasurement
records.  The sets of stabilizer states, Pauli measurements, and Clifford
updates are finite for fixed \(n\).  For each of the finitely many subsets of
\(m_n+1\) preparations, include one partitioning measurement and, at each
relevant outcome, one subsequent single-shot test that distinguishes the
resulting orthogonal pair.  Their finite union defines \(W_n\).

Suppose one boundary state \(\lambda\) occurred with positive probability
after every preparation in one such subset.  Its response to the common
partitioning measurement is preparation independent.  Choose an outcome
\(a\) with positive response probability and a state \(\lambda'\) with
positive transition probability
\(\Gamma(\lambda',a\mid\lambda)\).  By the partitioning property, two of the
preparations lead on outcome \(a\) to orthogonal records.  Yet the same
\(\lambda'\) occurs after both records, so the subsequent distinguishing test
would have to produce two different deterministic outcomes from the same
future-accessible state and the same query.  This contradicts exact causal
simulation.  Hence no boundary state can occur in more than \(m_n\)
preparation supports.  Covering all \(N_n\) preparations requires at least
\(N_n/m_n\) states.
\end{proof}

The only nonanalytic base input here is the exhaustive two-qubit result
reported in Ref.~\cite{karanjai2018contextuality}: every set of six pure
two-qubit stabilizer states admits a partitioning measurement.  The present
argument uses that published result rather than claiming a new independent
enumeration certificate.

To make KS contextuality explicit, add a disjoint root branch carrying the
standard Peres--Mermin square on the first two qubits, and call the resulting
finite interface \(E_n\).  No operational events are identified between the
two root branches.  The same classical state space must serve both branches,
so restricting a simulator of \(E_n\) to \(W_n\) preserves
Eq.~\eqref{eq:finite-stabilizer-count}.

\subsection{Constant-visibility noncontextual hiding}

Write the nine Peres--Mermin outcomes as bits \(z_{ij}\).  The target parities
of the three rows and three columns are
\[
 (0,0,0,0,0,1).
\]
Let \(S_{\rm PM}\) be uniform over the four triples satisfying the target
parity in each context, and let \(R_{\rm PM}\) be uniform over all eight
triples.

\begin{lemma}[Exact Peres--Mermin visibility]
\label{lem:pm-visibility}
The mixture
\[
 (1-p)R_{\rm PM}+pS_{\rm PM}
\]
is noncontextual if and only if \(0\leq p\leq2/3\).
\end{lemma}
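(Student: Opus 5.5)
The plan is to reduce everything to the single statistic ``probability that a context shows its target parity,'' and to use the Peres--Mermin parity contradiction for necessity and an explicit uniform mixture over single-defect assignments for sufficiency.

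\textbf{Step 1: rewrite the mixture.} In every context, $R_{\rm PM}$ gives each of the eight triples weight $1/8$, and $S_{\rm PM}$ gives each of the four target-parity triples weight $1/4$. Hence $(1-p)R_{\rm PM}+pS_{\rm PM}$ assigns weight $(1+p)/8$ to each target-parity triple and $(1-p)/8$ to each wrong-parity triple. Equivalently, in each context:
\begin{itemize}[leftmargin=2em]
\item the target parity occurs with probability $(1+p)/2$;
\item conditionally on either parity class, the triple is uniform over that class.
\end{itemize}

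\textbf{Step 2: necessity.} Take any deterministic global assignment $z\in\{0,1\}^9$. Each $z_{ij}$ lies in exactly one row and one column, so the six context parities sum to $0$ mod $2$. The target parities sum to $1$. Therefore $z$ violates an odd number of contexts, and in particular at least one. Averaging over any $\GNCHV$ distribution, the sum over the six contexts of the target-parity probabilities is at most $5$. The mixture gives $6(1+p)/2=3+3p$, so $3+3p\le5$, which forces $p\le2/3$.

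\textbf{Step 3: sufficiency at $p=2/3$.} I will use the linear map $\partial:\mathbb F_2^9\to\mathbb F_2^6$ sending $z$ to its six context parities.
\begin{itemize}[leftmargin=2em]
\item Its image is the five-dimensional subspace where the row-parity sum equals the column-parity sum. Its kernel $K$ (even row and column parities) therefore has dimension $4$ and $16$ elements.
\item Write the violation pattern as $v=\partial z+t$, with $t=(0,0,0,0,0,1)$. Then $v$ ranges over the coset on which the row sum differs from the column sum. Each weight-one vector lies in this coset.
\item Consequently each single-context violation pattern is realized by exactly one coset $z_0+K$. This gives $6\cdot16=96$ assignments violating exactly one context.
\end{itemize}
Let $\nu$ be the uniform distribution on these $96$ assignments. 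Each context is violated with probability $1/6$, so its target parity has probability $5/6=(1+2/3)/2$, as required by Step 1.

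The step I expect to need the most care is the conditional-uniformity requirement from Step 1. I will verify it by showing that the projection of $K$ onto any single row or column is the full even-weight subspace of $\mathbb F_2^3$. For example, an even pattern placed in one row and duplicated in a second row lies in $K$, and the same works for columns. Within each coset $z_0+K$, the restriction to a fixed context is then uniform over one parity class. Averaging over the six cosets therefore gives, in each context, a uniform distribution within each parity class. So $\nu$ reproduces the $p=2/3$ mixture exactly.

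\textbf{Step 4: the remaining range.} $R_{\rm PM}$ is itself noncontextual: it is the product of independent uniform bits. For $0\le p<2/3$, the model $(1-p)R_{\rm PM}+pS_{\rm PM}$ is a convex combination of $R_{\rm PM}$ and the $p=2/3$ model. It is therefore noncontextual, because the $\GNCHV$ set is convex. This completes the ``if'' direction.
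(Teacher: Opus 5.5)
Your proof is correct and follows the paper's argument: the odd-violation parity count forces $(1+p)/2\leq 5/6$, the uniform mixture over the $96$ assignments violating exactly one context realizes $p=2/3$, and mixing with $R_{\rm PM}$ handles smaller $p$. Your kernel--coset computation, with $\dim K=4$ and the projection of $K$ onto each context equal to the full even-weight subspace, supplies the justification for the paper's unproved claims that each syndrome has $16$ solutions and that triples are uniform within each syndrome.
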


\begin{proof}
Every global assignment violates an odd number of the six parity equations
and therefore satisfies at most five.  The context-averaged success
probability of the mixture is \((1+p)/2\).  Noncontextuality requires
\((1+p)/2\leq5/6\), so \(p\leq2/3\).

For sufficiency, mix uniformly over the global assignments that violate
exactly one context.  Each of the six one-violation syndromes has \(16\)
solutions, giving \(96\) assignments.  For a fixed context, five syndromes
satisfy the target parity and one violates it; within each syndrome the four
triples of the required parity occur uniformly.  Each target-parity triple
therefore has probability \(5/24\), and each opposite-parity triple has
probability \(1/24\).  These are exactly the probabilities of
\[
 \frac13R_{\rm PM}+\frac23S_{\rm PM}.
\]
Smaller \(p\) follows by further mixing with \(R_{\rm PM}\).
\end{proof}

Let \(S_n=(S_{W_n},S_{\rm PM})\) be the seed behavior on the two disjoint
root branches.  Let \(R_{W_n}\) be a fixed memoryless causal reset behavior on
\(W_n\), and set
\[
 R_n=(R_{W_n},R_{\rm PM}),\qquad p=\frac23.
\]
On the Peres--Mermin branch, Lemma~\ref{lem:pm-visibility} gives the
noncontextual empirical model
\begin{equation}
 N_{\rm PM}^{\rm mix}
 =
 \frac13R_{\rm PM}+\frac23S_{\rm PM}
 \in\NC({\rm PM}).
 \label{eq:explicit-hidden-mixture}
\end{equation}
No measurement-noncontextuality claim is made for the adaptive \(W_n\)
branch.  It is a causal-process benchmark carrying the state-complexity
lower bound, whereas the Peres--Mermin branch supplies the empirical
genuine-global-KS restriction.  Arbitrary causal-policy decomposability and
measurement noncontextuality are not identified.

Add a binary flag \(F\) and define the lifted process
\begin{equation}
\begin{split}
 G_n(F=1,\tau)&=\frac23S_n(\tau),\\
 G_n(F=0,\tau)&=\frac13R_n(\tau),
\end{split}
\label{eq:finite-flag-lift}
\end{equation}
where \(\tau\) is the complete transcript on \(E_n\).

\begin{proposition}[Genuine-global properties of the finite lift]
\label{prop:finite-lift-properties}
Restricted to the flag and Peres--Mermin measurement branch, the flag
subsystem is noncontextual, ignoring the flag gives the noncontextual marginal
\(N_{\rm PM}^{\rm mix}\), the flag-versus-seed-block partition has a
\(\GLHV\) model, and the full restricted empirical model has no \(\GNCHV\)
model.
\end{proposition}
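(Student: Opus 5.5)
The plan is to verify the four claimed properties one at a time, working only with the restriction of $G_n$ to the flag $F$ and the Peres--Mermin root branch. On that branch, Eq.~\eqref{eq:finite-flag-lift} reads
\[
G(F{=}1,o_C)=\tfrac23 S_{\rm PM,C}(o_C),\qquad G(F{=}0,o_C)=\tfrac13 R_{\rm PM,C}(o_C)
\]
for each of the six contexts $C$. This is exactly the abstract lift $G^{p,N}(S_{\rm PM})$ with $p=2/3$ and $N=N_{\rm PM}^{\rm mix}$: indeed
\[
\frac{N-pS_{\rm PM}}{1-p}=R_{\rm PM},
\]
and the domination condition $N_C(o)\ge \tfrac23 S_{\rm PM,C}(o)$ holds because target-parity triples have $N$-weight $5/24\ge \tfrac23\cdot\tfrac14$. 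Once this identification is in place, most of the statement follows from the Hidden-KS lift proposition and Lemma~\ref{lem:pm-visibility}.

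The steps, in order, are as follows.
\begin{enumerate}
\item \emph{Flag subsystem.} The flag is a single binary measurement with marginal $(1/3,2/3)$, independent of $C$. A single measurement trivially admits a deterministic-assignment mixture, so it is noncontextual.
\item \emph{Marginal on the seed block.} Summing over $F$ gives $\tfrac13 R_{\rm PM}+\tfrac23 S_{\rm PM}=N_{\rm PM}^{\rm mix}$. This is noncontextual by Lemma~\ref{lem:pm-visibility} at $p=2/3$; concretely, it is the explicit uniform mixture over the $96$ one-violation assignments.
\item \emph{$\GLHV$ for the flag-versus-seed partition.} I would take the hidden variable to be $\lambda=F$ itself, with $q_1=2/3$ and $q_0=1/3$. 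Party~1 outputs the flag deterministically. Party~2, given the tested block $C$ and $\lambda$, samples $S_{\rm PM,C}$ or $R_{\rm PM,C}$. This is legitimate because $\GLHV$ allows a context-dependent joint block response and requires no refinement to individual measurements. The product formula then reproduces $G$ exactly, with the same $q_\lambda$ for every context.
\item \emph{No $\GNCHV$.} Suppose there were a global distribution over $(f,z_{11},\dots,z_{33})$. Conditioning on $f=1$, which has weight $2/3>0$, would yield a global noncontextual model for $S_{\rm PM}$. But $S_{\rm PM}$ gives every context its target parity with probability one, while every global assignment violates an odd number of the six parities (the parity argument in Lemma~\ref{lem:pm-visibility}). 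This is the Hidden-KS lift argument specialized to the present case.
\end{enumerate}

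The main obstacle is making sure the objects match the definitions precisely; none of the steps is computationally hard. Three points need care. First, $S_{\rm PM}$ and $R_{\rm PM}$ must be no-disturbing on the Peres--Mermin scenario. Each assigns uniform single-bit marginals for every $z_{ij}$ in both its row and its column, so the affine-combination argument applies. Second, the $\GLHV$ certificate must be read with Bob's blocks equal to the six Peres--Mermin contexts and the flag as a separate party. Third, the $W_n$ branch must be explicitly excluded from the claim, as the statement's phrase ``restricted to'' indicates. If a referee objects that $\lambda=F$ is too trivial, I would note that this is precisely the point of the stratified definition: all the nonclassicality resides in the failure to glue.
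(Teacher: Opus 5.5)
Your proposal is correct and follows the paper's proof step for step. The first two claims are read off from the explicit mixture and the flag lift. The $\GLHV$ certificate uses a hidden bit $\lambda$ with weights $1/3,2/3$: the flag reports $\lambda$, and the Peres--Mermin block samples $R_{\rm PM}$ or $S_{\rm PM}$ accordingly. The non-$\GNCHV$ claim conditions a putative global model on $F=1$ to obtain a noncontextual model of $S_{\rm PM}$, which contradicts the odd-parity argument.

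Your extra checks are also correct: the identification with the abstract lift $G^{p,N}$ via $5/24\geq\frac23\cdot\frac14$, and no-disturbance of $S_{\rm PM}$ and $R_{\rm PM}$ from their uniform single-bit marginals. They simply make explicit what the paper leaves implicit.
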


\begin{proof}
The first two claims are immediate from
Eqs.~\eqref{eq:explicit-hidden-mixture}--\eqref{eq:finite-flag-lift}.  For the
\(\GLHV\) claim, use \(\lambda\in\{0,1\}\) with probabilities \(1/3,2/3\);
the flag reports \(\lambda\), while the Peres--Mermin block uses
\(R_{\rm PM}\) for \(\lambda=0\) and \(S_{\rm PM}\) for \(\lambda=1\).
This is local across the stated partition even though the response internal
to the Peres--Mermin block is contextual for \(\lambda=1\), which is allowed
by the definition of \(\GLHV\).

If the full model admitted a \(\GNCHV\) distribution, conditioning that
distribution on the context-independent positive-probability event \(F=1\)
would give a noncontextual model of \(S_{\rm PM}\), a contradiction.
\end{proof}

The full two-branch process also has an explicit quantum realization.  For
each seed preparation record \(r\), on
\[
 \mathcal H_G=\mathcal H_{\rm stab}\oplus\mathbb C,
 \qquad
 \dim\mathcal H_{\rm stab}=2^n,
\]
prepare
\[
 \rho^G_r
 =
 \frac23\rho^S_r\oplus
 \frac13|{\perp}\rangle\langle{\perp}|
\]
and use block-diagonal instruments
\[
 \mathcal I^G_{x,a}
 =
 \mathcal I^S_{x,a}\oplus q_x(a)\operatorname{id}_{\mathbb C}.
\]
Here \(q_x(a)\) is the memoryless response chosen for the reset branch, and
\(\sum_a\mathcal I^G_{x,a}\) is trace preserving.  The same formula applies
to each allowed seed instrument, so the notation does not identify distinct
preparation or query procedures.
The projectors onto the two summands are the flag measurement.  The boundary
dimension \(2^n+1\) embeds in \(n+1\) qubits.  A manifestly bipartite
realization adds a flag qubit on Alice and keeps this direct-sum system on
Bob, using at most \(n+2\) qubits in total.

\subsection{Exact quadratic coordination theorem}

\begin{theorem}[Exact quadratic separation with a genuine-global restriction]
\label{thm:exact-quadratic}
For every \(n\geq2\), the finite benchmark \(G_n\) has an empirical
measurement restriction satisfying local \(\NCHV\), blockwise \(\GLHV\), and
global non-\(\GNCHV\).  At the separator \(\Sigma_n\) after the flag and a
stabilizer preparation but before the first adaptive partitioning query,
every exact finite-state classical causal online simulator obeys
\begin{equation}
\begin{split}
 B_{\Sigma_n}+M_{\Sigma_n}
 &\geq
 \log_2
 \frac{2^n\prod_{j=1}^{n}(2^j+1)}
      {5\,3^{n-2}}\\
 &=
 \frac12n^2+
 \left(\frac32-\log_2 3\right)n+O(1),
\end{split}
\label{eq:exact-quadratic-bound}
\end{equation}
whereas the quantum boundary satisfies
\[
 Q_{\Sigma_n}\leq n+1.
\]
\end{theorem}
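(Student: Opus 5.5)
The proof has three independent parts: the empirical genuine-global properties, the classical lower bound at \(\Sigma_n\), and the quantum upper bound. For the first, I would invoke Proposition~\ref{prop:finite-lift-properties} directly. It already gives, on the flag-plus-Peres--Mermin restriction, that the flag subsystem is noncontextual, that the unconditioned Peres--Mermin marginal equals \(N_{\rm PM}^{\rm mix}\in\NC({\rm PM})\) by Lemma~\ref{lem:pm-visibility} at \(p=2/3\), that the flag-versus-block partition is \(\GLHV\), and that conditioning on \(F=1\) rules out a \(\GNCHV\) model. So local \(\NCHV\), blockwise \(\GLHV\), and global non-\(\GNCHV\) are inherited with no new work.

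For the classical bound, I would take an exact finite-state causal online simulator of \(G_n\) with boundary state space \(\Lambda\) at \(\Sigma_n\), where \(|\Lambda|\le 2^{B_{\Sigma_n}+M_{\Sigma_n}}\) by the counting convention of Sec.~\ref{sec:separator}. Since the flag value is fixed before \(\Sigma_n\), I condition the boundary-state distribution on \(F=1\). The response kernels \(\xi_x\) and update kernels \(\Gamma_x\) do not depend on the past beyond \(\lambda\). Therefore restricting the initial distributions \(\alpha_R\) to records with \(F=1\) gives a causal realization of \(\frac23 S_n\), renormalized to \(S_n\), on the same state set \(\Lambda\), and hence by further restriction to the \(W_n\) root branch a realization of \(S_{W_n}\).

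This conditioning step is the one that needs the most care. I must check that the Bayesian shift relations survive: the shift \(\alpha_{R(x,a)}\propto\alpha_R T_{x,a}\) is linear, and on the \(F=1\) sector the observed transcript probabilities are exactly \(\frac23 S_n\). I also need that no past-dependent variable other than \(\lambda\) is used after \(\Sigma_n\). Both hold because the flag is recorded inside \(\lambda\) if it is needed at all, and because the \(F=0\) reset branch never contaminates the \(F=1\) statistics when we condition.

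Lemma~\ref{lem:finite-stabilizer-witness}, in the form preserved on \(E_n\), then gives \(|\Lambda|\ge N_n/m_n\). Taking logarithms yields the displayed bound. For the asymptotics, \(\log_2\prod_{j=1}^n(2^j+1)=\frac{n(n+1)}2+\sum_j\log_2(1+2^{-j})=\frac{n^2}2+\frac n2+O(1)\), since the sum converges. Adding the \(n\) from \(2^n\) and subtracting \(\log_2 5+(n-2)\log_2 3\) gives \(\frac12 n^2+(\frac32-\log_23)n+O(1)\).

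For the quantum side, I would use the block-diagonal realization on \(\mathcal H_{\rm stab}\oplus\mathbb C\) given before the theorem. At \(\Sigma_n\) the retained system is \(\rho^G_r\) of dimension \(2^n+1\le 2^{n+1}\), which embeds in \(n+1\) qubits, so \(Q_{\Sigma_n}\le n+1\). The block-diagonal instruments \(\mathcal I^G_{x,a}\) reproduce \(G_n\) exactly, with the flag measurement given by the summand projectors.
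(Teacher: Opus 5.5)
Your proposal is correct and follows essentially the same route as the paper. Conditioning on the positive-probability flag event \(F=1\) only reweights the existing boundary-state set \(\Lambda\), the same response and update kernels then reproduce \(S_n\), and restricting to the \(W_n\) branch and applying Lemma~\ref{lem:finite-stabilizer-witness} gives \(N_n/m_n\leq|\Lambda|\leq 2^{B_{\Sigma_n}+M_{\Sigma_n}}\); the genuine-global properties come from Proposition~\ref{prop:finite-lift-properties}, the bound \(Q_{\Sigma_n}\leq n+1\) comes from the direct-sum realization on \(\mathcal H_{\rm stab}\oplus\mathbb C\), and your asymptotic expansion is the one the paper uses.
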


\begin{proof}
Conditioning an exact classical simulator of \(G_n\) on \(F=1\) only
reweights its distribution over the existing boundary-state set
\(\Lambda\); it creates no new states.  With the public history \(F=1\)
fixed, the same response and update kernels reproduce \(S_n\).  Restricting
further to \(W_n\) and applying
Lemma~\ref{lem:finite-stabilizer-witness} gives
\[
 |\Lambda|
 \geq
 \frac{2^n\prod_{j=1}^{n}(2^j+1)}
      {5\,3^{n-2}}.
\]
All past-dependent records available after \(\Sigma_n\) are included in
\(\Lambda\), so
\(|\Lambda|\leq2^{B_{\Sigma_n}+M_{\Sigma_n}}\).  Taking logarithms proves the
first line.  Expanding
\[
 \log_2(2^j+1)=j+\log_2(1+2^{-j})
\]
gives the stated asymptotic formula.  The quantum upper bound is the
direct-sum realization above, and
Proposition~\ref{prop:finite-lift-properties} supplies the genuine-global
properties of the empirical flag--Peres--Mermin restriction.
\end{proof}

The lower bound permits unrestricted local computation depth.  Computation
can transform information already on the future side of \(\Sigma_n\), but it
cannot distinguish more past preparations than the counted boundary
configuration distinguishes.  Recomputing from a saved preparation label or
transcript merely places that information in \(B_{\Sigma_n}\) or
\(M_{\Sigma_n}\).

Equivalently,
\[
 \operatorname{rank}_+^{\rm causal}(H_n)
 \geq
 \frac{2^n\prod_{j=1}^{n}(2^j+1)}
      {5\,3^{n-2}}
 =
 2^{\Omega(n^2)},
\qquad
 \operatorname{rank}_{\rm psd}^{\rm causal}(H_n)\leq2^n,
\]
for the process Hankel table \(H_n\) of the stabilizer branch.  The
shift-consistent update requirement is essential: this does not assert that
the ordinary nonnegative rank of every static flattening is
\(2^{\Omega(n^2)}\).

At \(n=2\), \(N_2=60\) and \(m_2=5\), so
\[
 |\Lambda|\geq12,\qquad
 B_{\Sigma_2}+M_{\Sigma_2}\geq\log_2 12.
\]
Thus an integral classical budget needs at least four bits, while the
post-flag quantum boundary fits in three qubits.

The theorem is exact.  Its flag visibility is the size-independent constant
\(\Pr(F=1)=2/3\), so conditioning an \(\epsilon\)-approximate simulator
amplifies total-variation error by at most a factor \(3\).  A
constant-error quadratic theorem would therefore follow from a corresponding
constant-error causal-state lower bound for the stabilizer witness.  That
robust seed lower bound, and a flag-free intrinsic construction, remain open.
The theorem also concerns causal state complexity rather than the bare
support-covering number: it does not imply
\(\chi_G(G_n)=2^{\Omega(n^2)}\).

\section{Discussion}

The framework separates three questions that were conflated in the initial
covering picture.  First, how much past-dependent classical information must
cross a spacetime separator to reproduce a probability table?  This is the
general \(B_\Sigma+M_\Sigma\) question, controlled one-shot by nonnegative
rank, asymptotically by common information, and sequentially by causal
positive-realization rank.  Second, how many context-independent global
charts are needed if the classical model is required to select such charts?
This is the restricted \(\chi_G^D\) question.  Third, which part of either
cost is specifically forced by genuine global KS contextuality rather than
by ordinary correlation complexity?  The examples answer the first two
questions and partially localize the third, but they do not yet define a
general genuine-global contextuality monotone.

This distinction also clarifies the role of communication, memory, and
computation.  \(B_\Sigma\) and \(M_\Sigma\) are two physical locations of the
same classical boundary information: one crosses the cut in a message and
the other persists in a state.  \(D\) is not another information unit.  It
restricts which response and update maps can be applied to the available
boundary state.  The exact stabilizer theorem remains quadratic even when
\(D\) is unrestricted, whereas the twisted-hypercube example shows an
explicit depth tradeoff inside a restricted chart class.

\subsection*{Scope and limitations}

The present results do not constitute a complete complexity theory of global
contextuality.  Several limitations remain.

First, the operational theorems mainly give cut-sum bounds such as
\[
  B_\Sigma+M_\Sigma
  \geq
  \log_2\operatorname{rank}_+F_\Sigma(P)
\]
and therefore do not force both \(B_\Sigma>0\) and \(M_\Sigma>0\)
simultaneously.  They permit a message--memory tradeoff across the chosen cut.
Rectangle selectors begin to resolve the spatial geometry of communication,
but a nontrivial \(B\)-versus-\(M\) frontier has not been computed for a
natural genuine-global family.  Wyner common information supplies an
amortized information-rate lower bound, but a full interactive information
complexity theory is not developed here.

Second, nonnegative rank is a property of a probability table and a chosen
flattening, not a contextuality monotone.  The Hardy rank gap is therefore an
operational coordination advantage exhibited by the genuinely global table,
but the present proof does not show that the gap vanishes on all \(\GNCHV\)
models.  The exact quadratic theorem uses genuine global contextuality to
hide and transfer a stabilizer state-complexity obstruction; its lower-bound
mechanism is inherited from the KS seed.  A direct intrinsic lower bound in
terms of distance from the \(\GNCHV\) set remains open.

Third, the cover-admissible results are deliberately restricted.  In that
model a transcript-memory pair determines a context-independent global
chart.  The general separator and causal-realization theorems do not impose
this restriction and allow randomized, context-dependent, adaptive response
and update kernels.  Consequently \(\chi_G^D\), nonnegative rank, and causal
realization rank should not be interchanged without a proved reduction.

Fourth, the computation-depth axis depends on a chosen local computation model.
The twisted hypercube calculation makes this dependence explicit by using a
fan-in-two parity-tree model, where depth \(D\) can combine at most \(2^D\)
input coordinates.  This gives an exact and useful toy calculation, but it is
not a physical global-KS example.  The analogous calculation for the Hardy
support task, or for a stronger \(\GNCHV\) obstruction, remains open.

Fifth, robustness is uneven.  The complete Hardy table gives a
vanishing-error amortized rate \(3/2\) classical bits per copy against one
quantum separator qubit, while its seven-test support projection has only the
smaller chart rate \(\log_2(25/24)\).  The KCBS comparison is postselected.
The stabilizer family gives an exact quadratic separation and has constant
flag visibility, but extending its \(\Omega(n^2)\) lower bound to fixed
nonzero error requires a robust causal-state theorem for the stabilizer seed.
The flag-free intrinsic genuine-global strong-separation problem also remains
open.

\subsection*{Outlook: quantum networks}

The same resource accounting is natural for quantum networks and quantum
internet scenarios, where remote nodes are connected by quantum channels,
entanglement distribution, local operations, and classical control
messages~\cite{kimble2008internet,wehner2018internet}.  A network task has
local node views, edge or small-block tests, and a desired global correlation.
In the notation of this paper one may write a network support task as
\[
  T_{\rm net}=(\C_{\rm node}\cup\C_{\rm edge}\cup\C_{\rm block},\{W_C\}),
\]
and a classical network simulator with communication \(B_{\rm net}\), node
memory \(M_{\rm node}\), and local processing depth \(D\) obeys the general
separator constraint
\[
  B_{\rm net}+M_{\rm node}
  \geq
  \log_2\operatorname{rank}_+F_\Sigma(P_{\rm net}).
\]
If the simulator additionally selects context-independent network charts,
the support projection gives
\[
 B_{\rm net}+M_{\rm node}\geq\log_2\chi_G^D(T_{\rm net}).
\]
Genuine global KS
contextuality is then a network-level obstruction: every local node and every
tested block may have a classical explanation, while the whole network lacks a
single globally consistent hidden-state chart.  This is close in spirit to
the broader study of nonclassicality in networks, including bilocal and
network-locality constraints~\cite{branciard2010bilocal}, but the present
focus is the KS-style global-chart obstruction and its coordination cost.

\section*{Acknowledgements}

The author acknowledges the use of AI-assisted tools during the preparation of
this manuscript for literature exploration, mathematical checking, language
polishing, and organizational support.  The author is solely responsible for
all mathematical claims, interpretations, and conclusions.

\end{document}